\documentclass[11pt]{article}
\usepackage[utf8]{inputenc}

\usepackage{fullpage}
\usepackage{booktabs}
\usepackage[linesnumbered,noend,ruled,noline]{algorithm2e}
\usepackage{amsmath,amsthm,graphicx,url}
\usepackage{amssymb}
\usepackage{newpxtext}
\usepackage{newpxmath}
\usepackage{threeparttable}
\usepackage{tikz}
\usetikzlibrary{calc}
\usepackage{subcaption}
 \usepackage{mathtools}
\usepackage[normalem]{ulem}
\usepackage{soul}
\usepackage{graphicx}
\usepackage{tablefootnote}
\usepackage[numbers,sort&compress]{natbib}
\usepackage{enumitem}
\usepackage{comment}
\usepackage[dvipsnames]{xcolor}
\usepackage{bm}
\usepackage{footnote}
\usepackage{thmtools}
\usepackage{thm-restate}
\usepackage{hyperref}
\usepackage[nameinlink, capitalise]{cleveref-usedon}

\usepackage[suppress]{color-edits}
\DeclareUnicodeCharacter{0302}{}
\definecolor{darkgreen}{rgb}{0.0,0.4,0.0}
\addauthor{mb}{red}
\newcommand{\mbc}[1]{\mbcomment{#1}}
\newcommand{\mbe}[1]{\mbedit{#1}}

\addauthor{xt}{darkgreen}

\newcommand{\xte}[1]{\xtedit{#1}}

\definecolor{kwcolor}{rgb}{0.7,0.4,0.2}
\addauthor{kw}{kwcolor}

\renewcommand\citep[1]{\citealp{#1}}
\renewcommand\citet[1]{\citeauthor{#1} [\citealp{#1}]}

\newtheorem{theorem}{Theorem}[section]
\newtheorem{lemma}[theorem]{Lemma}

\newtheorem{corollary}[theorem]{Corollary}

\newtheorem{observation}[theorem]{Observation}
\newtheorem*{theorem*}{Theorem}
\newtheorem*{corollary*}{Corollary}

\theoremstyle{definition}
\newtheorem{definition}[theorem]{Definition}

\newtheorem{remark}[theorem]{Remark}

\AddToHook{env/lemma/begin}{\crefalias{theorem}{lemma}}
\AddToHook{env/conjecture/begin}{\crefalias{theorem}{conjecture}}
\AddToHook{env/corollary/begin}{\crefalias{theorem}{corollary}}
\AddToHook{env/proposition/begin}{\crefalias{theorem}{proposition}}
\AddToHook{env/claim/begin}{\crefalias{theorem}{claim}}
\AddToHook{env/observation/begin}{\crefalias{theorem}{observation}}
\AddToHook{env/definition/begin}{\crefalias{theorem}{definition}}
\AddToHook{env/example/begin}{\crefalias{theorem}{example}}
\AddToHook{env/remark/begin}{\crefalias{theorem}{remark}}
\AddToHook{env/question/begin}{\crefalias{theorem}{question}}
\AddToHook{env/condition/begin}{\crefalias{theorem}{condition}}
\AddToHook{env/assumption/begin}{\crefalias{theorem}{assumption}}

\crefname{theorem}{Theorem}{Theorems}
\crefname{lemma}{Lemma}{Lemmas}
\crefname{conjecture}{Conjecture}{Conjectures}
\crefname{corollary}{Corollary}{Corollaries}
\crefname{proposition}{Proposition}{Propositions}
\crefname{claim}{Claim}{Claims}
\crefname{observation}{Observation}{Observations}
\crefname{definition}{Definition}{Definitions}
\crefname{example}{Example}{Examples}
\crefname{remark}{Remark}{Remarks}
\crefname{question}{Question}{Questions}
\crefname{condition}{Condition}{Conditions}
\crefname{assumption}{Assumption}{Assumptions}

\DeclareMathOperator*{\argmax}{arg\,max}

\newcommand*\dif{\mathop{}\!\mathrm{d}}
\DeclareMathOperator*{\E}{\mathbb{E}}

\newcommand{\bids}{\mathbf{b}}
\newcommand{\sbids}{\mathbf{s}}

\newcommand{\alloc}{\mathbf{x}}
\newcommand{\gft}{\mathtt{GFT}}
\newcommand{\profit}{\mathtt{profit}}

\newcommand{\pb}{\mathbf{p}^b}
\newcommand{\ps}{\mathbf{p}^s}
\newcommand{\val}{\mathbf{v}}
\newcommand{\cost}{\mathbf{c}}

\numberwithin{equation}{section}

\newcommand{\sellerdis}{\mathbf{F}}
\newcommand{\indicator}{\mathbf{1}}

\newcommand{\btprice}{\ifmmode\mathrm{BT}\else\textrm{BT}\fi}
\newcommand{\doublequantile}{\textsc{\textsc{MultiQuantile}}}
\newcommand{\postquantile}{\textsc{PostQuantile}}
\newcommand{\sdma}{\ifmmode\mathrm{MA}\else\textrm{MA}\fi}
\newcommand{\property}{cap-monotone}

\SetAlFnt{\small}
\SetAlCapFnt{\small}
\SetAlCapNameFnt{\small}
\SetAlCapHSkip{0pt}
\IncMargin{-\parindent}

\definecolor{commentc}{rgb}{0.1, 0.3, 0.9}

\SetArgSty{textnormal}

\SetCommentSty{mycommfont}

\crefname{algorithm}{Auction}{Auctions}  

\definecolor{linkc}{rgb}{0.7, 0.2, 0.3}
\definecolor{citec}{rgb}{0.2, 0.3, 0.7}
\definecolor{urlc}{rgb}{0.2, 0.4, 0.3}
\hypersetup{
    colorlinks=true,
    linkcolor=linkc,
    citecolor=citec,
    urlcolor=urlc
}

\newcommand{\BGSOM}{\text{GSOM-BIC}}
\newcommand{\BGBOM}{\text{GBOM-BIC}}

\newcommand{\mainresultbicdsictext}{
Consider any Bayesian single-dimensional matching market with independent distributions that is constrained by a downward-closed family of feasible matchings. The mechanism that randomizes between GSOM and GBOM with equal probability satisfies dominant-strategy incentive-compatibility (DSIC), ex-post individual rationality (IR), and ex-ante weak budget balance (WBB), and its expected gains-from-trade is at least a $1/3.15$ fraction of the first-best expected gains-from-trade. 

Moreover, the same allocation rule can be implemented by a variant with a different payment distribution,
resulting with a mechanism that is Bayesian incentive-compatibility (BIC) and ex-post weak budget balance (WBB), while still being ex-post IR and getting the same GFT.
}

\title{Approximating Gains-from-Trade in Matching Markets}
\date{}

\author{%
\begin{tabular}{cc}
\begin{minipage}[t]{0.4\textwidth}\centering
Moshe Babaioff\thanks{Moshe Babaioff's research is supported by the Israel Science Foundation (grant No.\@ 301/24) and by a Golda Meir Fellowship.}\\
\small Hebrew University of Jerusalem\\
\small\href{mailto:moshe.babaioff@mail.huji.ac.il}{moshe.babaioff@mail.huji.ac.il}
\end{minipage}
&
\begin{minipage}[t]{0.4\textwidth}\centering
Aviad Rubinstein\thanks{Aviad Rubinstein's research is supported by the David and Lucile Packard Fellowship}\\
\small Stanford University\\
\small\href{mailto:aviad@cs.stanford.edu}{aviad@cs.stanford.edu}
\end{minipage}
\\[8ex]
\begin{minipage}[t]{0.4\textwidth}\centering
Xizhi Tan\thanks{Xizhi Tan's research is partially supported by David and Lucile Packard Fellowship}\\
\small Stanford University\\
\small\href{mailto:xizhi@stanford.edu}{xizhi@stanford.edu}
\end{minipage}
&
\begin{minipage}[t]{0.4\textwidth}\centering
Kangning Wang\\
\small Rutgers University\\
\small\href{mailto:kn.w@rutgers.edu}{kn.w@rutgers.edu}
\end{minipage}
\end{tabular}%
}

\begin{document}

\maketitle

\thispagestyle{empty}
\setcounter{page}{0}

\begin{abstract}
A central challenge in mechanism design is to develop truthful trade mechanisms that maximize the expected gains-from-trade (GFT) in two-sided markets with strategic agents. As achieving the full GFT is generally impossible, much of the literature has focused on constant-factor approximations. Existing results, however, are limited to the highly structured settings of bilateral trade and double auctions, in which every buyer can trade with every seller.

We consider the significantly more general setting of two-sided matching markets with arbitrary downward-closed constraints on the family of allowed matchings. For this setting, we present a simple randomized truthful mechanism that guarantees a constant-factor approximation to the optimal expected GFT. This result also resolves an open problem posed by Cai, Goldner, Ma, and Zhao (2021).
\end{abstract}

\setcounter{page}{1}

\section{Introduction}

Two-sided markets are a cornerstone of modern economies, facilitating trade and resource allocation in settings ranging from traditional stock exchanges and labor markets to contemporary online platforms for advertising, ride-sharing, and accommodation. 
A central goal in designing such markets is to achieve economic efficiency: ensuring that trades occur whenever they create additional value (gains-from-trade), thereby maximizing the total welfare generated. However, a fundamental obstacle arises from the presence of private information. Market participants---buyers with private valuations and sellers with private costs---act strategically to maximize their own utility. This self-interested behavior often conflicts with the market designer's goal of achieving social efficiency.

{This tension is formalized by the celebrated impossibility theorem of Myerson and Satterthwaite~\cite{MS83}: even in the simplest bilateral trade setting with a single buyer and a single seller, no mechanism can always achieve social efficiency (i.e., trade exactly when the buyer's value exceeds the seller's cost) while simultaneously satisfying three standard requirements: (i) incentive compatibility (IC), which ensures that truthful reporting of private information is optimal for the participants; (ii) individual rationality (IR), which guarantees that no agent is worse off by participating; and (iii) (weak) budget balance (BB), which requires that the mechanism never runs at a deficit. In other words, any mechanism satisfying these three conditions must, for some valuation profiles, fail to execute all socially beneficial trades.}

{Given this impossibility result, Myerson and Satterthwaite~\cite{MS83} characterized the welfare-maximizing mechanism among those satisfying IC, IR, and BB. This constrained-optimal mechanism is often referred to as the ``second-best.'' A fundamental question is the size of the resulting efficiency gap: the multiplicative loss of the second-best relative to the unconstrained ``first-best.'' While this gap was recently shown to be constant in bilateral trade~\cite{dmsw22}, it remains unknown for \emph{matching markets}, where multiple buyers and sellers interact under feasibility constraints specified by a bipartite graph so that a buyer can trade with a seller only if an edge connects them.}


{Beyond the efficiency gap, the second-best mechanism is often too complex and distribution-sensitive for practical use. Even in the simple bilateral trade setting, its allocation and payment rules depend intricately on the underlying value distributions~\cite{MS83}. This motivates the study of \emph{simple} mechanisms that are robust and easy to implement.}



{Since full efficiency is unattainable and complex mechanisms are undesirable, the central question shifts to \emph{approximation by simple mechanisms}: what is the maximum fraction of the first-best efficiency that can be guaranteed by a simple mechanism that respects the constraints of IC, IR, and BB?}

{\paragraph{Gains-from-trade in matching markets.} Market efficiency is usually measured against one of two benchmarks: \emph{social welfare} and \emph{gains-from-trade} (GFT). Social welfare is the total value of the final allocation, whereas GFT measures the net increase in welfare (i.e., the sum of $v_i - c_j$ for all trades $(i, j)$ that occur, where $v_i$ is buyer $i$'s value and $c_j$ is seller $j$'s cost). While maximizing the two objectives is mathematically equivalent, obtaining a multiplicative approximation for them is vastly different, with GFT being harder to approximate. In this work, we focus on the more demanding challenge of approximating the GFT. Specifically, we address a question that was described by~\cite{CaiGMZ21} as ``one of the major open problems in two-sided markets'':
\begin{quote}
    \emph{Is it possible to obtain a constant-factor approximation to the optimal GFT in matching markets?}
\end{quote}
}

We build on a long line of work that makes partial progress towards this holy grail. The following results are of particular relevance:
\begin{itemize}
    \item {\bf Second-best approximation:} The work of \cite{BCWZ17} characterized the mechanism that maximizes total expected sellers' profit as well as the mechanism that maximizes total expected buyers' profit, and showed that randomizing between these two mechanisms achieves a $1/2$-approximation to the \emph{second-best} benchmark---the maximum GFT achievable by any IC, IR, and BB mechanism. This left open the question of whether any IC, IR, and BB mechanism can approximate the \emph{first-best} benchmark---the optimum without respecting incentive constraints.
    \item {\bf Single edge:} The work of \cite{dmsw22} showed that in the case of a single edge (``bilateral trade''), randomizing between these two mechanisms indeed guarantees a constant fraction of the first-best GFT. Subsequent work has further refined this approach, tightening the approximation constant \cite{DBLP:conf/wine/Fei22} and providing intuitive geometric arguments \cite{hw25}. 
    Still, the question of general matching-market graphs 
    has been left open. 
    \item {\bf Complete bipartite graph:} The single-edge approximation can be extended to the case of a complete bipartite graph by carefully combining it with the \emph{trade reduction mechanism} of~\cite{MCAFEE}.
    The trade reduction mechanism guarantees at least one half of the GFT if the first-best contains at least two trades; otherwise, the problem reduces to a bilateral trade problem \cite{BabaioffCGZ18}.
\end{itemize}

All prior results leave open the question of whether it is possible to obtain a constant-factor approximation to the first-best GFT in matching markets, let alone their generalization to markets with downward-closed feasibility constraints. These constraints—where any subset of a feasible set of trades is also feasible—allow our results to encompass not only standard bipartite matchings but also more complex market structures with capacity constraints or sub-market restrictions.

\subsection{Our Results}

This paper resolves this open problem by proving that a simple and natural randomized mechanism achieves a constant-factor approximation to the first-best expected GFT in single-dimensional matching markets. Our setting is highly general, accommodating an arbitrary number of single-dimensional unit-demand buyers and unit-supply sellers with private values and costs drawn independently from arbitrary distributions, and allowing for any downward-closed feasibility constraints on the set of possible trades.
{Specifically, we analyze a mechanism that randomly chooses (with equal probability) between two core mechanisms: the  Generalized Sellers-Offering Mechanism (GSOM) and the Generalized Buyers-Offering Mechanism (GBOM), introduced by \cite{BCWZ17}. }
While prior work proved that the original mechanisms achieve a 1/2-approximation to the second-best GFT, it was not known whether this approach guarantees any constant fraction of the first-best GFT. Our core contribution is proving that this simple mechanism achieves at least a $1/3.15$ fraction of the first-best GFT, matching the best-known approximation guarantee for the special case of single-buyer single-seller bilateral trade.
\mbe{
\begin{theorem}\label{thm:main_result_BIC_DSIC_formal}
\mainresultbicdsictext
\end{theorem}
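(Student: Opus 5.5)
The plan is to reduce the matching-market guarantee to the bilateral-trade guarantee edge by edge, through a fractional decomposition of the first-best. First we recall the structure of GSOM and GBOM from \cite{BCWZ17}. GSOM runs a Myerson-style virtual-surplus maximization over feasible matchings: each seller $j$ is scored by the ironed virtual cost $\bar\varphi_j(c_j)$, each buyer by her value $v_i$, and it picks a feasible set of trades maximizing $\sum_{(i,j)}(v_i-\bar\varphi_j(c_j))$. GBOM is symmetric, using ironed virtual values $\bar\psi_i(v_i)$ against costs $c_j$.

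Both allocation rules are monotone, since ironed virtual functions are monotone and the feasible family is downward closed. Threshold payments therefore give DSIC and ex-post IR. In GSOM the expected seller-side payments equal expected virtual cost (Myerson), so the expected buyer payments, which are at least values paid at thresholds, cover them and give ex-ante WBB; GBOM is symmetric. For the BIC variant, we keep the allocation and use AGV/Myerson–Satterthwaite-style expected-payment transfers. Because the ex-ante budget surplus is nonnegative, it can be rebalanced so that every profile is ex-post WBB while interim utilities, and hence BIC and interim IR, are preserved. We then need ex-post IR, which we would obtain by having sellers (respectively buyers) in GSOM (GBOM) receive the buyer's posted price directly, as in \cite{BCWZ17}. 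This part is routine.

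The core step is the GFT bound. We write the expected GFT of GSOM as the expected profit of sellers plus buyers' surplus, and note that GSOM's objective value is $\E[\max_M \sum_{(i,j)\in M}(v_i-\bar\varphi_j(c_j))^+]$, which lower-bounds its GFT because the sellers' realized utility is nonnegative. Similarly, GBOM's GFT is at least $\E[\max_M\sum(\bar\psi_i(v_i)-c_j)^+]$. Let $M^*(v,c)$ be the first-best matching, and let $q_{ij}=\Pr[(i,j)\in M^*]$ with the conditional distribution of $(v_i,c_j)$ on that event. Since $M^*$ is itself feasible, each mechanism's objective is at least its value on $M^*$. We would therefore aim to show
\[
\E\Bigl[\sum_{(i,j)\in M^*}\bigl((v_i-\bar\varphi_j(c_j))^+ + (\bar\psi_i(v_i)-c_j)^+\bigr)\Bigr]\ \ge\ \tfrac{2}{3.15}\,\E\Bigl[\sum_{(i,j)\in M^*}(v_i-c_j)^+\Bigr].
\]
This comparison, however, fixes $M^*$ (a random, correlated set), whereas the bilateral bound of \cite{dmsw22,DBLP:conf/wine/Fei22} applies to independent $v_i,c_j$ with no selection.

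The main obstacle is exactly this correlation: conditioning on $(i,j)\in M^*$ distorts the distributions. The first attempt would be to use the fact that $M^*$ selects edges monotonically, in the sense that raising $v_i$ or lowering $c_j$ keeps $(i,j)$ in $M^*$ in the relevant sense. For fixed others $(v_{-i},c_{-j})$, the event $(i,j)\in M^*$ then becomes a region of the form $v_i-c_j\ge \tau$ together with threshold constraints $v_i\ge a$ and $c_j\le b$. Conditioning on the others makes $v_i,c_j$ independent with the original marginals, so we need a bilateral-trade lemma in the following strengthened form: for independent $v\sim F$, $c\sim G$ and any such "monotone trade region" $R$,
\[
\E[(v-\bar\varphi(c))^+\mathbf 1_R]+\E[(\bar\psi(v)-c)^+\mathbf 1_R]\ \ge\ \tfrac{2}{3.15}\,\E[(v-c)\mathbf 1_R].
\]
We would prove this lemma by extending the geometric argument of \cite{hw25}, reducing it by quantile truncation to the standard bilateral instance, and then sum over edges.
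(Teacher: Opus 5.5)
Your reduction up to the bilateral lemma is sound. Each mechanism's GFT is at least its expected ironed virtual surplus. That surplus is at least its value on the positive-weight part of $M^*$, by downward-closedness. Conditioning on $(\val_{-i},\cost_{-j})$ leaves $v_i,c_j$ independent, with selection region $R=\{v-c\ge\tau,\ v\ge a,\ c\le b\}$. (You have GSOM and GBOM swapped relative to the paper, which is harmless for a $50/50$ mixture.)

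The gap is the lemma itself, which is where the whole difficulty sits. Truncating the marginals does reach a standard bilateral instance when $\tau=0$. Truncating $F$ below $a$ and censoring $G$ above $b$ turns $R$ into the ordinary trade region $\{v\ge c\}$. Each of your two terms then dominates the corresponding optimal-offer profit of the truncated instance, by Myerson's lemma with ironing, and the standard $1/3.15$ bound finishes. That covers the case where $\mathcal F$ is all matchings of a bipartite graph, since there $\tau=0$.

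Under general downward-closed constraints, however, $\tau$ can be strictly positive; $\tau$ is what the other trades must give up to make room for $(i,j)$. For example, with a cap of one trade and two disjoint edges, $(1,1)\in M^*$ iff $v_1-c_1\ge (v_2-c_2)^+$. Then the $c$-slice $\{c\le\min(b,v-\tau)\}$ moves with $v$, and the $v$-slice $\{v\ge\max(a,c+\tau)\}$ moves with $c$. So no truncation or censoring of the two marginals turns $R$ into a trade region $\{v\ge c\}$. Nor can you apply the standard bound one value of $v$ at a time: the seller-offer term $\E[(\bar\psi(v)-c)^+\mathbf{1}_R]$ couples different values of $v$ through the ironed virtual value of the whole buyer distribution. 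Your proposal gives no argument for this case, and it is precisely the case that general downward-closed families add.

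The missing idea is the paper's asymmetric regrouping, combined with the fixed-value form of the bilateral bound (\cref{thm:bilateral_approx}). Slice only by the buyer's value, so each slice is a bilateral instance with $v$ fixed and $G$ censored at $\kappa(v)=\min(b,v-\tau)$. Let the seller side run \doublequantile, whose price $p(c)=G^{-1}(G(c)/\lambda)$ depends only on the seller's distribution and cost. With this, your lemma goes through:

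\begin{enumerate}
\item For each fixed $c$, the allocation $\mathbf{1}[v\ge p(c)]\,\mathbf{1}_R$ is a posted price in $v$, because the $v$-slices of $R$ are up-sets. By one-dimensional Myerson with ironing, its expected profit is at most $\E_v[(\bar\psi(v)-c)^+\mathbf{1}_R]$. Its realized profit also dominates that of \doublequantile\ run on the censored distribution, whose price is either $p(c)$ or $+\infty$.
\item For each fixed $v$, $\E_c[(v-\bar\varphi(c))^+\mathbf{1}_R]$ is at least the optimal buyer offer against $G$ censored at $\kappa(v)$.
\item $\E[(v-c)\mathbf{1}_R]=\E_v\bigl[\int_0^{G(\kappa(v))}(v-G^{-1}(q))\,\mathrm{d}q\bigr]$ is exactly the expected GFT of those censored instances.
\end{enumerate}

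Completed this way, your route is a genuine and somewhat leaner alternative to the paper's. Comparing virtual surpluses on $M^*$ and applying Myerson slice by slice replaces the paper's construction of a globally incentive-compatible Meta-Auction and its cap-monotonicity argument.

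Separately, the BIC variant is not routine. AGV-type rebalancing generally gives up ex-post IR. Paying each matched seller in GSOM the buyer's threshold payment is not BIC once her selection also depends on competing matchings: one buyer and two alternative sellers already give a profitable underbid. The paper instead pays the matched seller (resp.\ charges the matched buyer) the conditional expectation, over the partner's type, of her own GSOM (resp.\ GBOM) threshold given that the edge is selected. This leaves interim transfers equal to the original mechanism's, so BIC follows from DSIC. It is also ex-post IR, because every threshold being averaged is individually rational.
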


\begin{remark}
The BIC mechanism above is created by changing the payment rules of GSOM and GBOM, redistributing their payments in away that result in  ex-post WBB mechanisms. 
We refer to these variants as \BGSOM\ and \BGBOM, respectively. As GSOM (GBOM) and \BGSOM\ (\BGBOM) have the same allocation rules, we refer to them interchangeably, and only distinguish when discussing payments (and the implied incentives and budget properties). 
\end{remark}

\begin{remark}
Both the DSIC and the BIC versions of the mechanism are WBB, so some of the gains go to the mechanism (and not to the sellers and buyers), yet our proof shows that the first-best GFT approximation holds even if we only count the profits of the agents (and not those of the mechanism). Additionally, we can turn the DSIC and ex-ante WBB mechanism into an ex-ante SBB mechanism (while keeping all other properties), by distribution the mechanisms' ex-ante gains to the agents ex-ante.
\end{remark}
}

This result closes a fundamental gap in the literature on two-sided markets. It demonstrates that the strong first-best approximation guarantees recently established for the simple bilateral trade setting can, in fact, be extended to the substantially richer and more realistic setting of general matching markets. 

Furthermore, our result for the single-dimensional market implies a positive resolution for the \emph{multi-dimensional unit-demand} setting as well,\footnote{A buyer that has a multi-dimensional unit-demand valuation has a value for each item (hence ``multi-dimensional''), and her value for a set is the highest value of any item in that set (hence ``unit-demand'').}
due to a formal reduction established by \cite{CaiGMZ21}. They show that if the multiplicative gap between the first-best and second-best GFT in the single-dimensional matching market is bounded above by a constant $c$, then their mechanism achieves a $1/(2c)$-approximation of the first-best GFT in the multi-dimensional unit-demand setting. This reduction leads to the following implication.
\begin{corollary*}[Informal]
A constant-factor approximation to the first-best GFT is achievable in the market with one multi-dimensional unit-demand buyer 
and $n$ unit-supply sellers. Specifically, the mechanism proposed by Cai, Goldner, Ma, and Zhao \cite{CaiGMZ21} guarantees at least a $1/6.3$-fraction of the first-best expected GFT.
\end{corollary*}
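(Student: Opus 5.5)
The plan is to prove the informal corollary by feeding Theorem~\ref{thm:main_result_BIC_DSIC_formal} into the reduction of Cai, Goldner, Ma, and Zhao~\cite{CaiGMZ21}; no new mechanism analysis is needed. Their reduction is stated for one multi-dimensional unit-demand buyer facing $n$ unit-supply sellers with independent costs. It says: if in every associated single-dimensional matching market the first-best expected GFT is at most $c$ times the second-best expected GFT, then their mechanism obtains at least a $1/(2c)$ fraction of the multi-dimensional first-best. So the whole task is to bound $c$ uniformly over the single-dimensional markets their reduction produces.

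First I will pin down which single-dimensional markets appear. In the standard construction the unit-demand buyer is split into $n$ independent single-dimensional buyers, one for each item, with value distributions that replace the buyer's correlated value vector by independent marginals. These buyers form a bipartite market with the $n$ sellers, and only "parallel" edges (buyer copy $i$, seller $i$) are allowed. The feasibility constraint is that at most one buyer copy trades, reflecting that the original buyer is unit-demand. This family is downward-closed: any subset of a feasible set of trades still has at most one trade. Each copy's value distribution and each seller's cost distribution are independent, so the market satisfies the hypotheses of Theorem~\ref{thm:main_result_BIC_DSIC_formal}.

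Second, I apply Theorem~\ref{thm:main_result_BIC_DSIC_formal} to such a market. The mixture of GSOM and GBOM is DSIC, ex-post IR and ex-ante WBB, and it attains at least a $1/3.15$ fraction of the first-best. The second-best is by definition the optimum over mechanisms satisfying these constraints. Since DSIC implies BIC, the mechanism is feasible for the second-best program, and the second-best is at least its GFT. This gives $c\le 3.15$.

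I expect the only real care to be in matching definitions. The gap notion in~\cite{CaiGMZ21} must use second-best constraints no stronger than those the theorem delivers, for instance BIC and ex-ante budget balance rather than ex-post strong budget balance. If it uses ex-post WBB, I will instead invoke the BIC variant (GSOM-BIC/GBOM-BIC), which is ex-post WBB with the same GFT. If the reduction needs strong budget balance, I will use the redistribution to ex-ante SBB from the remark. With $c\le3.15$ in hand, their mechanism guarantees at least $1/(2\cdot 3.15)=1/6.3$ of the first-best expected GFT.
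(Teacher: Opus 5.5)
Your proposal is correct and follows essentially the paper's argument: the mixture of GSOM and GBOM is DSIC, ex-post IR and ex-ante WBB, so it is feasible for the single-dimensional second-best program and gives $\mathtt{SB}\text{-}\mathtt{GFT}^\mathtt{SD} \ge \gft^*/3.15$; plugging $c \le 3.15$ into \cite[Theorem 4]{CaiGMZ21} then yields $1/6.3$. The only difference is that the paper gets the $1/3.15$ bound via WBB and the profit bound of \cref{thm:main_profit_technical} rather than citing \cref{thm:main_result_BIC_DSIC_formal} directly, and your remark about ``replacing a correlated value vector by independent marginals'' is unnecessary, since the reduction is a black box and the induced market only needs independent distributions and a downward-closed feasibility family.
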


\subsection{Technical Overview}

Our main theorem works for arbitrary downward-closed constraints on the family of allowed matchings, but for this overview, it is instructive to think of the (important) special case of matching constraints subject to an underlying bipartite graph of allowed trades. Our final mechanism is quite simple (randomizing between GSOM and GBOM), 
but the analysis is challenging. In particular, the core challenge is to relate the performance of GSOM/GBOM to the first-best GFT. This was a long-standing open problem even for the special case of a single edge (i.e., bilateral trade) until the result of~\cite{dmsw22}. In the single-edge setting, \cite{dmsw22, DBLP:conf/wine/Fei22, hw25} consider the use of random-offerer mechanisms, which can be viewed as a special case of randomizing between GSOM and GBOM. While it suffices to show that the GFT (the sum of profits of both the seller and the buyer) of this mechanism is high enough to approximate the first-best, the work of \cite{dmsw22} proves a stronger result: even just the expected profit of the (random) offerer alone suffices to achieve a constant approximation.

At a high level, our approach for analysis is to decompose our two-sided market instance into separate instances of the well-understood special case of a single edge. While this plan sounds simple, the crux of the difficulty is in coming up with the right decomposition. The first desideratum from our decomposition is: 
\begin{itemize}
    \item {\bf  Desideratum 1 (informal):} 
    The first-best GFT decomposes into the sum of the first-best GFTs of the bilateral trade instances. 
\end{itemize}

In order to explain our second---and trickier---desideratum from the decomposition, we first add some detail about the proof plan. We explain from the perspective of GSOM, and the details for GBOM are analogous. 
Conceptually, GSOM can be viewed as a mechanism run by a ``lawyer'' who acts on behalf of all sellers. The lawyer auctions the items to buyers using a mechanism that maximizes the expected total sellers' profit, which is later distributed among the sellers via a BIC mechanism. Our goal is thus to establish a lower bound on this optimal expected total sellers' profit. To this end, we introduce a hypothetical, for-analysis-only auction that the lawyer could run to extract profit from buyers. This hypothetical auction should align with the decomposition of our instance; in particular, the expected total profit from the hypothetical auction should be at least the sum of the expected profit of all sellers in all the decomposed bilateral trade instances. This requirement brings us to our second desideratum from our decomposition:
\begin{itemize}
    \item {\bf Desideratum 2: (informal)} 
    There is a feasible sellers' (resp., buyers') mechanism whose expected profit is the sum of sellers' (resp., buyers') profit in the decomposed bilateral trade instances. (The profits of these for-analysis lawyer mechanisms serve to lower bound the respective profits of GSOM/GBOM.)

\end{itemize}

Before explaining our decomposition, let us see how we can complete the analysis assuming our two desiderata: 

\begin{align*}
    \gft(\text{market}) & = \sum_{e \in \text{bilateral trade instances}} \gft(e) && \text{(Desideratum 1)} \\
    & = \sum_{e \in \text{bilateral trade instances}} O(\profit(e)) && \text{(\cite{dmsw22})} \\
    & \le O(\profit(\text{market})) && \text{(Desideratum 2)} 
\end{align*}
Here $\profit()$ in the second line refers to the expected profit of a random offerer, and in the third line to the expected profit of randomizing between GSOM and GBOM.

The remainder of this overview focuses on the introduction of the decomposition.

As discussed above, our approach is to decompose the market into a set of bilateral trades, and it is natural to do so by computing a maximum-weight matching, and then running the constant-factor bilateral trade mechanism of~\cite{dmsw22} for each edge separately. However, it is not immediately clear with respect to which graph we should compute a maximum-weight matching. 


\paragraph{Preliminaries: constant factor in a single-edge graph~\cite{dmsw22}.} For a graph with a single edge (``bilateral trade''), all known constant-factor approximations~\cite{dmsw22,DBLP:conf/wine/Fei22,hw25} follow the same blueprint: The mechanism picks one of the agents (buyer/seller) at random and lets her propose a take-it-or-leave-it offer to the other agent. It is assumed that the offerer will use a Bayesian profit-maximizing price. The analysis lower-bounds the expected GFT by lower-bounding the offerer's expected profit. The offerer's expected profit, in turn, is lower bounded by the expected profit of two specific mechanisms that the agents could run (here we call them \doublequantile~and \postquantile; details to follow in the technical sections).
Finally, the expected profit for those mechanisms is lower-bounded by a delicate accounting argument where the profit from certain good realizations is used to cover lost potential profit on other realizations which are bad.

\paragraph{Candidate graph 1: the ex-post GFT graph.} For matching markets, the first and arguably most natural candidate graph to consider is to take the underlying constraint graph and assign to each edge $(i,j)$ the weight $v_i - c_j$ (i.e., the difference between the respective realized value and cost). Note that these weights can be negative, so we can discard negative-weight edges (which does not affect the optimum) and take the maximum-weight matching. While natural, there are a few obstacles to execute the plan with this decomposition using this approach:
\begin{enumerate}
    \item The constant-factor mechanism of~\cite{dmsw22} is inherently Bayesian: the offerer uses her prior over the other agent to choose a profit-maximizing price. We clearly cannot use this mechanism with the realized value/cost!
    \item The known analyses of~\cite{dmsw22,DBLP:conf/wine/Fei22,hw25} require covering the lost profit from certain bad realizations. How should we handle the case where, in the good realizations, the buyer/seller is matched with other agents?
    \item Of course, our decomposition assumed that we know the true realized values and costs, but a mechanism would need to elicit those truthfully.
\end{enumerate}


\paragraph{Candidate graph 2: the ex-ante expected GFT graph.} In light of the aforementioned challenges with decomposition based on the ex-post realizations, it is natural to ask whether there is a good decomposition based on the ex-ante expected GFT on each edge. However, this approach fails even more fatally---it is quite possible that the expected GFT on every edge is negative! (Consider, for example, a single edge where the buyer's value is uniform from $[0,2]$ and the seller's cost is uniform from $[1,3]$.)

\paragraph{Candidate graphs 3,4,$\ldots$} There are many other variants one could consider. For example, one can take the expected GFT on each edge but discard the negative parts, take ex-post realizations of the buyers and keep the Bayesian uncertainty of the sellers, etc.

\xte{
\paragraph{Our Analysis Approach: Regrouping and the Meta-Auction.} As noted, our solution does not require designing a new, complex mechanism; rather, we analyze a simple, existing one proposed by \cite{BCWZ17}.
Because GSOM and GBOM are defined to be the optimal profit-maximizing auctions for the sellers and buyers respectively, we can lower-bound their profit by analyzing the profit of \emph{any} feasible, truthful mechanism. 

To achieve this, we introduce a hypothetical, for-analysis-only mechanism which we call the \emph{Meta-Auction}. The Meta-Auction, when paired with appropriate Bilateral-Trade allocation rule, can be used to lower bound either the GSOM or the GBOM profits. It computes the maximum-GFT matching based on ex-post realizations, but crucially, it regroups the realizations to form independent Bayesian bilateral trade instances for each edge. For a given realization, the Bayesian bilateral trade instances that are executed are the ones created for every realized edge in the maximum-GFT matching. The Bayesian instance for edge $(i,j)$ in this matching is based on ``regrouping'' all realizations where:
\begin{itemize}
    \item All buyers' values (\emph{including} $i$) are fixed, and all sellers' costs \emph{excluding} $j$ are fixed (i.e., different realizations of seller $j$'s cost $c_j$ are grouped together).
    \item Edge $(i,j)$ is in the maximum-GFT matching.
\end{itemize}

This regrouping of realizations recovers some of the Bayesian aspects of the bilateral trade problem, but quite minimally: the buyer's value is fixed, and the seller's distribution is capped to the costs where $(i,j)$ is in the maximum-GFT matching. Nevertheless, it turns out that this is sufficient to fully recover the analysis of the bilateral trade case in the framework of~\cite{dmsw22,DBLP:conf/wine/Fei22,hw25}.

Deferring a formal proof to the technical sections, we can informally gain a sense of the analysis by revisiting our earlier objections to candidate graph 1, and seeing how the Meta-Auction resolves them:
\begin{enumerate}
    \item The constant-factor mechanism of~\cite{dmsw22} is inherently Bayesian. It turns out that the subroutines we use for our Meta-Auction analysis (\doublequantile~and \postquantile) only need the Bayesian prior over the seller's costs. 
    \item The known analyses of~\cite{dmsw22,DBLP:conf/wine/Fei22,hw25} require a delicate accounting argument across different realizations in the entire distribution of the instance. We can use their analysis in a black-box manner to show that the Meta-Auction's expected profit on each edge approximates the expected GFT from that edge, and then use a global argument to cover the entire GFT of the market by simply summing the edges.  
    \item 
    While the actual mechanisms we run (GSOM/GBOM) are already known to be truthful, we must prove our hypothetical Meta-Auction is also incentive-compatible for its profit to serve as a valid lower bound. We use a monotonicity argument to show that no agent can manipulate their match in the Meta-Auction by misreporting their value/cost: they are either matched to a unique candidate agent, or not matched at all. 
    
    A related, subtle complication is that a buyer's reported value endogenously changes the ``regrouped'' distribution of costs (the cap on the seller's distribution). We identify a structural property, which we term \emph{cap-monotonicity}, that is satisfied by \doublequantile. This property ensures the Meta-Auction remains incentive-compatible despite the buyer's ability to influence the distribution they face.
\end{enumerate}}

\subsection{Related Work}

\paragraph{Gains-from-trade approximation.} 
The impossibility result of Myerson and Satterthwaite \cite{MS83} states that even for bilateral trade, no IC, IR, and BB mechanism can be fully efficient. This celebrated result, a major citation in Myerson's Nobel Prize, spurred a long line of research. One branch of this work has aimed at designing simple, practical mechanisms that approximate the optimal gains-from-trade (GFT). Early work focused on settings with additional assumptions on the distribution: McAfee \cite{mcafee2008gains} considered the fixed-price mechanism that posts the same price to the buyer and the seller (who trade if both accept their prices), and showed that the mechanism achieves a $1/2$-approximation to the first-best 
GFT under the condition that the median of the buyer's value distribution is higher than that of the seller's. Blumrosen and Mizrahi \cite{BM16} showed that a simple seller-offering mechanism achieves a $1/e$-approximation (improved to $1/(e-1)$ by \cite{DBLP:conf/wine/Fei22}) under a monotone hazard rate assumption on the buyer's distribution.

In the general setting without distributional assumptions (other than independence), the work of \cite{BCWZ17} showed that randomizing between seller-offering and buyer-offering mechanisms achieves a $1/2$-approximation of the \emph{second-best} GFT (that is, the maximum GFT achievable by any IC, IR, BB mechanism) in the bilateral trade setting. Their results generalize to matching markets, as we will discuss later. It was a long-standing open question whether a constant-factor approximation to the first-best GFT was possible for arbitrary independent distributions until it was resolved for bilateral trade by Deng, Mao, Sivan, and Wang \cite{dmsw22}. They showed that the same mechanism designed in \cite{BCWZ17} in fact guarantees a constant fraction (at least $1/8.23$) of the first-best GFT. Later on, the constant in this approximation was improved by \cite{DBLP:conf/wine/Fei22} to $1/3.15$ by refining the analysis of the same mechanism. 
The work of \cite{hw25} provided geometric proofs for a simple $1/4$-approximation and a $1/3.15$-approximation that matches the constant of \cite{DBLP:conf/wine/Fei22}.

Our main theorem leverages this state-of-the-art $1/3.15$-approximation from the bilateral trade literature as a crucial black box in our proof for general matching markets with downward-closed feasibility constraints. Our contribution can be seen as successfully ``lifting'' this powerful understanding of bilateral trade to the much more general and complex multi-agent setting. As for hardness results, in general, no mechanism can do better than an $2/e$-approximation against the first-best \cite{BM16}, and randomization between seller-offering and buyer-offering mechanisms with equal probability is \emph{not} a $1/2$-approximation against the first-best \cite{DBLP:journals/corr/abs-2111-07790,CaiGMZ21,DBLP:journals/corr/abs-2603-08679}. Recently, gains-from-trade approximations have been studied in settings with a broker \cite{DBLP:conf/soda/HajiaghayiHPS25} and with a sampling component \cite{DBLP:conf/sigecom/DengMS0W25}.

\paragraph{Welfare approximation in bilateral trade.}
Other than the gains-from-trade objective, there has been much work focusing on approximating the related benchmark of first-best social welfare; see, e.g., \cite{BlumrosenD21,KangPV22,CaiW23,DBLP:conf/stoc/LiuR023,DBLP:conf/stoc/DobzinskiS24,DBLP:conf/sigecom/DobzinskiEGST25}. Mathematically, the expected social welfare is always equal to the expected gains-from-trade plus the seller's expected (non-negative) cost, and so any $\alpha$-approximation to the first-best gains-from-trade is at least an $\alpha$-approximation to the first-best welfare, but the converse is not true.

\paragraph{Two-sided markets.}
Beyond single-buyer single-seller bilateral trade, there has been extensive research in more general settings such as double auctions and two-sided markets with multiple buyers and sellers. In an early work, McAfee \cite{MCAFEE} considered the trade reduction mechanism in the double auction setting. 
He showed that by sometimes giving up one trade with the least gains, one can construct an IC, IR, and BB mechanism with high GFT.
Later work discovered new insights in double auction and two-sided market settings, often from the perspective of showing approximation guarantees \cite{DuttingRT14, Colini-Baldeschi16, BCWZ17, BabaioffCGZ18, colini2020approximately, CaiGMZ21}. More recently, there has been work on understanding how competition can help improve the efficiency of simple mechanisms such as the trade reduction mechanism \cite{BabaioffGG20,cai2024power}.

Another recent line of work has focused on bilateral trade from the perspective of online learning and regret minimization \cite{cesa2021regret, DBLP:conf/nips/AzarFF22, cesa2023repeated, DBLP:conf/atal/BolicCC24,bernasconi2024no, cesa2024bilateral,DBLP:conf/nips/BachocCCC24,DBLP:journals/corr/abs-2504-04349,DBLP:journals/corr/abs-2601-16412}. In these settings, a learning algorithm must make decisions online repeatedly when the values and costs are either adversarially chosen or are drawn from unknown distributions. These algorithms usually aim at minimizing the cumulative regret over a long period of time.

\section{Preliminaries}\label{sec:prelim}

\paragraph{Matching market.}
We consider a general two-sided matching market with $m$ unit-demand buyers and $n$ unit-supply sellers. Each buyer $i$ has a private value $v_i \geq 0$ 
for receiving an item, while each seller $j$ has a private cost $c_j \geq 0$ for providing one. The value $v_i$ of each buyer $i$ is drawn from a distribution $D_i$, and the cost $c_j$ of each seller $j$ is drawn from a distribution $F_j$. We assume that for every cost $c\geq 0$ there exists a price $p$ that maximizes the expected utility $(p-c)\cdot \mathbb{P}[v\geq p]$ (and vice versa for a buyer).
All $D_i$'s and $F_j$'s are publicly known distributions that are mutually independent. We also use the notations $D_i$ and $F_j$ to denote the corresponding cumulative distribution functions (CDFs). We use $\mathbf{v} = (v_1, \dots, v_m)$ and $\mathbf{c} = (c_1, \dots, c_n)$ to denote the vectors of realized values and costs, and $\mathbf{D}=D_1\times D_2\times \ldots\times D_m$ and $\mathbf{F}=F_1\times F_2\times \ldots\times F_n$ for their respective (joint) product distributions. 
The setting is single-dimensional for both the buyers and the sellers.



Let $E \subseteq [m] \times [n] = \{(i,j) \mid i \in [m], j \in [n]\}$ be the set of all possible trading pairs between the $m$ buyers and $n$ sellers (a bipartite graph). 
This bipartite graph defines the family of feasible \emph{matchings} in the market. While standard matching markets treat any matching in a given bipartite graph as feasible, we extend this model by allowing the family of feasible matchings to be restricted to an arbitrary downward-closed family.
Formally, let $\mathcal{F} \subseteq 2^E$, with $\{\varnothing\} \in \mathcal{F}$, denote the non-empty family of feasible matchings, and assume that it satisfies the following conditions:
\begin{itemize}
    \item The family $\mathcal{F}$ is matching-based: every $M \in \mathcal{F}$ is a matching in the bipartite graph $E$;
    \item The family $\mathcal{F}$ is downward-closed: if $M \in \mathcal{F}$ and $M' \subseteq M$, then $M' \in \mathcal{F}$.
\end{itemize}
This formulation allows for general constraints that may disallow certain combinations of trades, provided that the family of feasible matchings remains downward-closed. As a simple example, our formulation can accommodate the \emph{additional} constraint that at most $k$ trades can take place in the entire matching market.

An allocation is determined by a matching $M \in \mathcal{F}$, where each matched buyer $i$ receives a unit from their matched seller $j$, and any unmatched seller retains their unit. We denote an allocation using indicator variables $\mathbf{x} \in \{0,1\}^{E}$, where $x_{ij} = 1$ if and only if $(i,j) \in M$. We note that our setting captures the well-studied settings of bilateral trade (where there is one buyer and one seller; $m = n = 1$), double auctions (where any buyer can trade with any seller; $\mathcal{F}$ is the family of all matchings in the complete bipartite graph), and more narrowly defined matching markets where $\mathcal{F}$ is the collection of all matchings of a given bipartite graph.

Our goal is to maximize the \emph{gains-from-trade} (GFT). For a given allocation (matching) $M \in \mathcal{F}$ and a realization $(\mathbf{v}, \mathbf{c})$, the gains-from-trade are $\gft(M, \mathbf{v}, \mathbf{c}) = \sum_{(i,j) \in M} (v_i - c_j)$.
For an instance with realization $(\mathbf{v}, \mathbf{c})$ and with the family of feasible matchings being $\mathcal{F}$, the expected optimal GFT (referred to as the \emph{first-best}) is 
\[
\gft^{*} = \E_{\mathbf{v}, \mathbf{c}}\left[\max_{M \in \mathcal{F}}\gft(M, \mathbf{v}, \mathbf{c})\right].
\]
\paragraph{Mechanism notation.}
We consider the mechanism design problem where the buyers' values and the sellers' costs are private information. A (direct revelation) mechanism takes as input the public distributions $\mathbf{D}$ and $\sellerdis$, buyers' bids $\bids = (b_1, \dots, b_m)$ and sellers' bids $\sbids = (s_1, \dots, s_n)$, and gives as output the allocation $\mathbf{x}(\bids, \sbids) \in [0,1]^{m \times n}$ as well as the payments $\pb(\bids, \sbids) \in \mathbb{R}_{\geq 0}^{m}$ and $\ps(\bids, \sbids) \in \mathbb{R}_{\geq 0}^{n}$ for buyers and sellers, respectively. Here, $x_{ij} = \mathbf{x}_{ij}(\bids, \sbids) \in [0, 1]$ denotes the fraction of the item of seller $j$ that is transferred to buyer $i$, 
and $p^b_i = p^b_i(\bids, \sbids) \geq 0$ and $p^s_j = p^s_j(\bids, \sbids)\geq 0$ denote the payment charged to buyer $i$ and the payment paid to seller $j$, respectively. The agents are assumed to be risk-neutral with quasi-linear utilities. The utility for a buyer $i$ with true value $v_i$ is $v_i \cdot \sum_{j} x_{ij} - p_i^b$, and the utility for a seller $j$ with true cost $c_j$ is $p_j^s - c_j \cdot \sum_{i} x_{ij}$.

A mechanism is \emph{dominant-strategy incentive-compatible} (DSIC) if each agent maximizes her utility when she reports truthfully, no matter what the other agents do. Formally, a  mechanism is DSIC for the sellers if for any seller $j$, any $c_j$, any deviation $s_j$, and any reports $\mathbf{b}$ and $\sbids_{-j}$,
we have
\begin{align}\label{eq:seller_dsic}
p^s_j(\mathbf{b},(c_j,\sbids_{-j})) - c_j \cdot\sum_{i}x_{ij}(\mathbf{b},(c_j,\sbids_{-j}))\geq p^s_j(\mathbf{b},(s_j,\sbids_{-j})) - c_j \cdot\sum_{i}x_{ij}(\mathbf{b},(s_j,\sbids_{-j})). \tag{DSIC-S}
\end{align}

Analogously, a mechanism is DSIC for the buyers if for any buyer $i$, any $v_i$, any deviation $b_i$, and any reports $\mathbf{b}_{-i}$ and $\sbids$,
we have
\begin{align}\label{eq:buyer_dsic}
v_i \cdot\sum_{j}x_{ij}((v_i, \mathbf{b}_{-i}),\sbids)-p^b_i((v_i, \mathbf{b}_{-i}),\sbids) \geq v_i \cdot\sum_{j}x_{ij}((b_i, \mathbf{b}_{-i}),\sbids)-p^b_i((b_i, \mathbf{b}_{-i}),\sbids). \tag{DSIC-B}
\end{align}
A mechanism is DSIC if it is DSIC for both the sellers and the buyers.

A mechanism is \emph{Bayesian incentive-compatible} (BIC) if each agent maximizes her expected utility by reporting truthfully, given that all others do. 
Formally, a mechanism is BIC for the sellers if for any seller $j$, any $c_j$, and any deviation $s_j$, we have
\begin{align}\label{eq:seller_bic}
\E_{\mathbf{v},\mathbf{c}_{-j}}\left[p^s_j(\mathbf{v},(c_j,\mathbf{c}_{-j})) - c_j \cdot\sum_{i}x_{ij}(\mathbf{v},(c_j,\mathbf{c}_{-j}))\right]\geq \E_{\mathbf{v},\mathbf{c}_{-j}}\left[p^s_j(\mathbf{v},(s_j,\mathbf{c}_{-j})) - c_j \cdot\sum_{i}x_{ij}(\mathbf{v},(s_j,\mathbf{c}_{-j}))\right].\tag{BIC-S}
\end{align}
Analogously, a mechanism is BIC for the buyers if for any buyer $i$, any $v_i$, and any deviation $b_i$, we have
\begin{align}\label{eq:buyer_bic}
\E_{\mathbf{v}_{-i},\mathbf{c}}\left[v_i \cdot \sum_{j}x_{ij}((v_i, \mathbf{v}_{-i}),\mathbf{c})-p^b_i((v_i, \mathbf{v}_{-i}),\mathbf{c})\right]\geq \E_{\mathbf{v}_{-i},\mathbf{c}}\left[v_i \cdot \sum_{j}x_{ij}((b_i, \mathbf{v}_{-i}),\mathbf{c})-p^b_i((b_i, \mathbf{v}_{-i}),\mathbf{c})\right]. \tag{BIC-B}
\end{align}
A mechanism is BIC if it is BIC for both the sellers and the buyers.

A mechanism satisfies ex-post \emph{individual rationality} (IR) if each agent receives non-negative utility from participating and reporting truthfully, no matter what the private values and costs are. Formally, for any seller $j$, any $c_j$, and any reports $\mathbf{b}$ and $\sbids_{-j}$, we have
\begin{align}\label{eq:seller_ir}
p^s_j(\mathbf{b},(c_j,\sbids_{-j})) - c_j \cdot\sum_{i}x_{ij}(\mathbf{b},(c_j,\sbids_{-j}))\geq 0, \tag{IR-S}
\end{align}
and for any buyer $i$, any $v_i$, and any reports $\mathbf{b}_{-i}$ and $\sbids$, we have
\begin{align}\label{eq:buyer_ir}
v_i \cdot\sum_{j}x_{ij}((v_i, \mathbf{b}_{-i}),\sbids)-p^b_i((v_i, \mathbf{b}_{-i}),\sbids) \geq 0. \tag{IR-B}
\end{align}
A mechanism is ex-post \emph{individual rationality} (IR) if it is ex-post IR for both the sellers and the  buyers.


A mechanism is ex-ante \emph{weakly budget-balanced} (WBB) if, assuming truthful reporting, the expected total payment from the buyers is at least the expected total payment to the sellers:
\begin{align}\label{eq:wbb}
\E_{\mathbf{v}, \mathbf{c}}\left[\sum_{i}p^b_i(\mathbf{v},\mathbf{c})\right] \geq \E_{\mathbf{v}, \mathbf{c}}\left[\sum_{j} p^s_j(\mathbf{v},\mathbf{c})\right]. \tag{WBB}
\end{align}
A mechanism is ex-post strongly \emph{budget-balanced} (SBB) if, assuming truthful reporting, the total payment from buyers is equal to the expected total payment to sellers in every realization, i.e., $\sum_{i}p^b_i(\mathbf{v},\mathbf{c}) = \sum_{j} p^s_j(\mathbf{v},\mathbf{c})$.

The \emph{second-best} GFT is the maximum expected GFT achievable by any mechanism that satisfies all our desired constraints: BIC (for buyers and sellers), interim IR, and ex-ante WBB.

\paragraph{Implementability and Myerson's lemma.}
The matching market we have described is a \emph{single-parameter environment}: each agent's private information (their ``type'') is fully captured by a single real number, representing the value $v_i$ for a buyer or the cost $c_j$ for a seller. Furthermore, the agent utilities are quasi-linear and depend on this parameter only through its interaction with the allocation $x_{ij}$ (e.g., $v_i \cdot \sum_j x_{ij}$ or $-c_j \cdot \sum_i x_{ij}$, as seen in \cref{eq:seller_dsic,eq:buyer_dsic}). We state below the celebrated Myerson's lemma on implementability in this context.



\begin{theorem}[Implementability and Payment Formula \cite{Myerson81}]\label{thm:myerson}
An allocation rule $\alloc$ is \emph{implementable} in dominant strategies 
if and only if it is monotone. Formally:

\textbf{For single-parameter buyers in trade environments:}
\begin{enumerate}
    {\item An allocation rule $\alloc$ is implementable in dominant strategies if and only if, for every fixed $(\bids_{-i}, \sbids)$, the total allocation to buyer $i$, $\sum_j x_{ij}((b_i, \bids_{-i}), \sbids)$, is monotone non-decreasing in $b_i$.}
    \item Given $\alloc$ is monotone non-decreasing, there is a unique payment rule $\mathbf{p}$ that makes the mechanism DSIC and satisfies ex-post IR (assuming zero payment for a zero bid). It is given by the integral formula
    \[
    p_i((b_i, \bids_{-i}), \sbids) = b_i \cdot \sum_{j} x_{ij}((b_i, \bids_{-i}), \sbids) - \int^{b_i}_{0} \sum_j x_{ij}((t, \bids_{-i}), \sbids) \dif t.
    \]
\end{enumerate}
\textbf{For single-parameter sellers in trade environments:}
\begin{enumerate}
    {\item An allocation rule $\alloc$ is implementable in dominant strategies if and only if, for every fixed $(\bids, \sbids_{-j})$, the total allocation to seller $j$, $\sum_i x_{ij}(\bids, (s_j, \sbids_{-j}))$, is monotone non-increasing in $s_j$.}
    \item Given $\alloc$ is monotone non-increasing, there is a unique payment rule $\mathbf{p}$ that makes the mechanism DSIC and satisfies ex-post IR (assuming a seller not allocated receives no payment). It is given by the integral formula
    \[
    p_j(\bids, (s_j, \sbids_{-j})) = s_j \cdot \sum_{i} x_{ij}(\bids, (s_j, \sbids_{-j})) + \int^{\infty}_{s_j} \sum_{i} x_{ij}(\bids, (t, \sbids_{-j})) \dif t.
    \]
\end{enumerate}
\end{theorem}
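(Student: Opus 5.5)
The plan is the standard Myerson argument, run once for buyers and once for sellers (the seller case mirrors the buyer case with the order reversed). Fix buyer $i$ and the other reports $(\bids_{-i},\sbids)$. Write $a(b)=\sum_j x_{ij}((b,\bids_{-i}),\sbids)\in[0,1]$ for the total allocation and $p(b)$ for the payment. Let $u(v)=v\,a(v)-p(v)$ be the truthful utility.

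\emph{Necessity.} Suppose the mechanism is DSIC. Apply \eqref{eq:buyer_dsic} twice: once with true value $v$ and deviation $b$, and once with the roles swapped. Adding the two inequalities gives $(v-b)\bigl(a(v)-a(b)\bigr)\ge 0$, so $a$ is non-decreasing. For the payment formula, note that DSIC gives $u(v)=\sup_b\,\bigl(v\,a(b)-p(b)\bigr)$. This is a supremum of affine functions of $v$, so $u$ is convex. The two DSIC inequalities sandwich the difference quotient: $a(b)\le \bigl(u(v)-u(b)\bigr)/(v-b)\le a(v)$ for $b<v$. Hence $u$ is absolutely continuous with $u'=a$ almost everywhere, and $u(b)=u(0)+\int_0^b a(t)\dif t$. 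The normalization ``zero payment for a zero bid'' gives $u(0)=0$. Substituting $p(b)=b\,a(b)-u(b)$ yields exactly the stated integral formula, which proves uniqueness.

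\emph{Sufficiency.} Given a monotone $a$, define $p$ by the formula. Then for any $v,b$,
\[
u(v)-\bigl(v\,a(b)-p(b)\bigr)=\int_0^v a(t)\dif t-\int_0^b a(t)\dif t-(v-b)\,a(b)=\int_b^v\bigl(a(t)-a(b)\bigr)\dif t\ge 0,
\]
where nonnegativity follows from monotonicity, whichever of $b,v$ is larger. So \eqref{eq:buyer_dsic} holds. Ex-post IR \eqref{eq:buyer_ir} holds because $u(v)=\int_0^v a\ge 0$. Payments are nonnegative because $b\,a(b)\ge\int_0^b a$ when $a$ is non-decreasing.

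\emph{Sellers.} Fix $(\bids,\sbids_{-j})$ and let $a(s)=\sum_i x_{ij}(\bids,(s,\sbids_{-j}))$, with seller utility $u(c)=p(c)-c\,a(c)$. The same two-inequality trick applied to \eqref{eq:seller_dsic} gives $(c-s)\bigl(a(c)-a(s)\bigr)\le 0$, so $a$ is non-increasing. Now $u$ is convex with $u'=-a$ almost everywhere. Integrating from $c$ up to $\infty$, and normalizing $\lim_{s\to\infty}u(s)=0$ (this is how I read ``a seller not allocated receives no payment''), gives $u(c)=\int_c^\infty a(t)\dif t$. Hence $p(c)=c\,a(c)+\int_c^\infty a(t)\dif t$, as stated. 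Sufficiency and IR follow from the same identity, with the sign reversed: the gap equals $\int_c^s\bigl(a(s)-a(t)\bigr)\dif t\ge 0$.

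The only delicate point is this seller normalization at $+\infty$: the tail integral must be finite and the limiting utility must be zero. I would handle it by assuming $a(t)\to 0$ as $t\to\infty$, or that costs are bounded or the integral converges, as is implicit in the statement. Everything else is routine.
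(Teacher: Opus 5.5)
Your argument is the standard proof of Myerson's lemma and it is correct. The paper does not prove this statement; it cites Myerson, so there is no different route to compare against.

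There is one slip to fix. In the seller case the DSIC gap is
$u(c)-\bigl(p(s)-c\,a(s)\bigr)=\int_c^s\bigl(a(t)-a(s)\bigr)\dif t$, equivalently $\int_s^c\bigl(a(s)-a(t)\bigr)\dif t$. The expression $\int_c^s\bigl(a(s)-a(t)\bigr)\dif t$ that you wrote is its negative, and it is always $\le 0$ when $a$ is non-increasing.

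The convergence issue you flag also resolves itself under the hypotheses of part 2. IR gives $u(S)\ge 0$, so $u(c)=u(S)+\int_c^S a(t)\dif t\ge\int_c^S a(t)\dif t$ for every $S>c$. This forces $\int_c^\infty a(t)\dif t\le u(c)<\infty$. Note that assuming $a(t)\to 0$ alone would not suffice; for example, $a(t)=1/(1+t)$ tends to zero but its tail integral diverges. The normalization is therefore needed only to pin down $\lim_{S\to\infty}u(S)=0$. The condition ``a seller not allocated receives no payment'' delivers this whenever $a$ eventually vanishes, which is the case for the deterministic threshold rules the paper actually uses.
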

\begin{definition}[Threshold Payments]\label{def:threshold_payment}
For a deterministic monotone allocation rule, the integral payment formulas from \cref{thm:myerson} reduce to simple \emph{threshold payments}.
\begin{itemize}
    \item For a buyer $i$, the allocation $\sum_j x_{ij}$ is a step function that is 0 for bids $b_i < b_i^*$ and 1 for $b_i \ge b_i^*$. The payment $p_i$ from the integral formula simplifies to $p_i = b_i^*$, where $b_i^* = \inf \{ b' \mid \sum_j x_{ij}(b', \cdot) = 1 \}$ is the infimum winning bid.
    
    \item Symmetrically, for a seller $j$, the allocation $\sum_i x_{ij}$ is a step function that is 1 for bids $s_j \le s_j^*$ and 0 for $s_j > s_j^*$. The payment $p_j$ simplifies to $p_j = s_j^*$, where $s_j^* = \sup \{ s' \mid \sum_i x_{ij}(s', \cdot) = 1 \}$ is the supremum winning cost.
\end{itemize}
\end{definition}

\paragraph{The profit-maximizing auctions.}
The two mechanisms at the core of our result are the \emph{Generalized Sellers-Offering Mechanism} (GSOM) and the \emph{Generalized Buyers-Offering Mechanism }(GBOM). 
\mbe{We first define their allocation rules by the optimization they conduct}: 

\begin{itemize}
\item \textbf{GSOM (Generalized Sellers-Offering Mechanism) \mbe{allocation rule}:} GSOM takes all sellers' reported costs $\sbids$ and buyers' distributions $\mathbf{D}$, and implements the allocation rule $\mathbf{x}$ that maximizes the sellers' total expected (with respect to buyers' distribution) profit among all possible BIC-B and IR-B allocation rules. 


\item \textbf{GBOM (Generalized Buyers-Offering Mechanism) \mbe{allocation rule}:} GBOM takes all buyers' reported value $\bids$ and sellers' distributions $\mathbf{F}$, and implements the allocation rule $\mathbf{x}$ that maximizes the buyers' total expected (with respect to sellers' distributions) profit among all possible BIC-S and IR-S allocation rules.
\end{itemize}

For both mechanisms, the payments $\mathbf{p}^b$ and $\mathbf{p}^s$ are the unique threshold payments (\cref{def:threshold_payment}) that make these respective allocation rules DSIC. 
\mbe{These two mechanisms were introduced and analyzed by \cite{BCWZ17}, which } 
\mbdelete{
These mechanisms were introduced and analyzed by \cite{BCWZ17} in 
different forms, but we show that the two definitions are equivalent. 
We defer the equivalence proof to \cref{app:gsom_equivalence}. \mbc{a sentence should never starts with a citation.} \cite{BCWZ17}} proved the following property of them.
\begin{theorem}[\cite{BCWZ17}]
\label{thm:feasibility}
    Both GSOM and GBOM are DSIC-B, DSIC-S, IR-B, IR-S, and ex-ante WBB.
\end{theorem}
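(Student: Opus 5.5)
The proof treats the two mechanisms symmetrically, so I describe GSOM; GBOM follows by swapping the roles of buyers and sellers and replacing virtual values with virtual costs.

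\textbf{Step 1: make the allocation rule explicit.} Fix the sellers' reports $\sbids$. By Myerson's revenue characterization, for any BIC-B and IR-B buyer mechanism the sellers' expected profit equals
\[
\E_{\val}\Big[\sum_{(i,j)} x_{ij}\,(\phi_i(v_i)-s_j)\Big],
\]
and this is at most the same expression with the ironed virtual values $\bar\phi_i$. The ironed bound is attained pointwise in $\val$ by the following rule: pick a maximum-weight feasible set $M\in\mathcal F$ under edge weights $w_{ij}=\bar\phi_i(v_i)-s_j$. Because $\mathcal F$ is downward-closed, we may drop every edge with negative weight. Ties are broken by a fixed, profile-independent order on $\mathcal F$.

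This rule is deterministic. It also depends on $v_i$ only through $\bar\phi_i(v_i)$, so it is constant on ironing intervals. That constancy makes the inequality from ironing tight. Hence this rule is (a version of) the GSOM allocation.

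\textbf{Step 2: monotonicity, then DSIC and IR.} Raising $v_i$ weakly raises $\bar\phi_i(v_i)$. This adds the same $\delta\ge 0$ to every edge incident to $i$ and leaves all other weights unchanged. So every feasible set containing $i$ gains exactly $\delta$ and every other set gains nothing. If $i$ was matched before, an optimal set containing $i$ stays optimal, and with the fixed tie-breaking $i$ stays matched.

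The seller argument is the same: lowering $s_j$ adds a uniform $\delta$ to all edges at $j$. Each agent's total allocation is therefore monotone in the right direction. By \cref{thm:myerson} and \cref{def:threshold_payment}, threshold payments make GSOM DSIC-B, DSIC-S, IR-B and IR-S. (Buyer IR holds because the buyer threshold is at most $v_i$; seller IR holds because $s_j^*\ge s_j$.)

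\textbf{Step 3: ex-ante WBB (the main obstacle).} Fix $\sbids$ and let $\Phi(\sbids)=\E_{\val}[\max_{M\in\mathcal F}\sum_{(i,j)\in M}w_{ij}]$.

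The buyer side is direct. By Myerson's payment identity and tightness of ironing, expected buyer payments equal $\E_{\val}[\sum x_{ij}\bar\phi_i(v_i)]=\Phi(\sbids)+\E_{\val}[\sum_j s_j x_j]$.

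For the seller side, use the integral formula from \cref{thm:myerson} together with the envelope theorem, which gives $\partial\Phi/\partial s_j=-\E_{\val}[x_j]$. Expected seller payments are then
\[
\E_{\val}\Big[\sum_j s_j x_j\Big]+\sum_j\big(\Phi(\sbids)-\Phi(\sbids_{-j},\infty)\big).
\]
Ex-ante WBB (even conditional on $\sbids$) therefore reduces to the claim that the sellers' marginal contributions sum to at most the optimum. Since the expectation can be taken inside, it suffices to show, for each fixed $\val$, that
\[
\sum_j \big(\mathrm{OPT}-\mathrm{OPT}_{-j}\big)\le \mathrm{OPT}.
\]

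My first attempt at this inequality is an exchange argument on the optimal set $M$. For each seller $j$ matched in $M$, exhibit a feasible set avoiding $j$ whose weight is at least $\mathrm{OPT}-w_{i(j)j}$: the downward-closed subset $M\setminus\{(i(j),j)\}$ works. That gives $\mathrm{OPT}-\mathrm{OPT}_{-j}\le w_{i(j)j}$, and sellers unmatched in $M$ contribute zero marginal. Summing over the disjoint edges of $M$, all of nonnegative weight, gives at most $\mathrm{OPT}$.

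This is the step I would check most carefully: downward-closedness and the matching structure must genuinely give disjointness, and measurability or ironing subtleties must not break the envelope identity. Granting both, WBB for GSOM follows, and GBOM follows symmetrically.
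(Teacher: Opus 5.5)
Your proof is correct, but it is organized differently from the argument the paper relies on. The paper does not reprove this theorem. It imports it from \cite{BCWZ17}, and what it reuses later (the ex-post WBB parts of \cref{thm:gsom-bic,thm:gbom-bic}, via \cref{lem:gbom-pairwise-wbb}) is the \emph{pairwise} inequality from that proof. For GSOM, one fixes $(\mathbf b_{-i},\mathbf s)$ and an edge $(i,j)$, and varies only buyer $i$'s value. On the threshold event $\{b_i'\ge r\}$ where $(i,j)$ is selected, the buyer's payment $r$ covers seller $j$'s expected threshold, using $\E[\tilde\varphi_i(b_i')\mathbf 1[b_i'\ge r]]=r\Pr[b_i'\ge r]$. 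For GBOM one varies only $s_j$, and summing over edges gives WBB.

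You instead do a global, VCG-style accounting. Expected buyer revenue is the expected optimal ironed virtual surplus plus matched costs. Seller payments are matched costs plus marginal contributions $\mathrm{OPT}-\mathrm{OPT}_{-j}$. Downward-closedness then gives $\sum_j(\mathrm{OPT}-\mathrm{OPT}_{-j})\le\mathrm{OPT}$ pointwise. This makes the role of downward-closedness very transparent. As organized, though, it yields WBB only after averaging over all buyers' values given $\sbids$. That is weaker than the per-edge statement the appendix needs to redistribute payments ex post in \BGSOM\ and \BGBOM.

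Your exchange step in fact localizes. It says a matched seller's threshold $s_j+\mathrm{OPT}-\mathrm{OPT}_{-j}$ is at most $s_j+w_{ij}=\tilde\varphi_i(v_i)$ (your $\bar\phi_i(v_i)$). So for each fixed $(\val_{-i},\sbids)$, the payment to $i$'s partner is at most $\tilde\varphi_i(v_i)x_i(v_i)$, whose expectation over $v_i$ is exactly buyer $i$'s expected payment. That is the pairwise inequality, and the envelope computation becomes unnecessary.

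The two worries you flag are not obstacles. Disjointness is immediate because $M$ is a matching, so distinct sellers use distinct edges. The envelope identity is exact for every $\val$: as a function of $s_j$ the optimum equals $\max\{\mathrm{OPT}_{-j},\,B-s_j\}$ for a constant $B$, so $\int_{s_j}^\infty x_j(t)\,\dif t=\mathrm{OPT}-\mathrm{OPT}_{-j}$, with no ironing involved.

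Ironing enters only on the buyer side. There you need constancy of the allocation on ironing intervals. You also need bids below the support of $D_i$ never to win (virtual value $-\infty$ there), so that the lowest type gets zero utility and Myerson's identity has no correction term.

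For GBOM the swap is not literally symmetric. The buyer's payment integral starts at $0$, producing $\mathrm{OPT}-\mathrm{OPT}|_{v_i=0}$ instead of $\mathrm{OPT}-\mathrm{OPT}_{-i}$. Since $\mathrm{OPT}|_{v_i=0}\ge\mathrm{OPT}_{-i}$, this only helps.
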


In addition, we provide BIC versions of GSOM and GBOM, termed \BGSOM\ (\cref{def:bgsom}) and \BGBOM\ (\cref{def:bgbom}), respectively, by implementing the same allocation rule with a different per-realization payment that are ex-ante equivalent to the GSOM and GBOM. 
We do so to get the budget to always be budget-balanced (ex-post). Specifically, they are BIC, ex-post IR and ex-post WBB. The formal definition and proofs are deferred to \cref{app:gsom_equivalence}.

\begin{theorem}
    \BGSOM\ is BIC-S, DSIC-B, IR-S, IR-B and ex-post WBB; \BGBOM\ is BIC-B, DSIC-S, IR-S, IR-B and ex-post WBB.
\end{theorem}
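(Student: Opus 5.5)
We construct \BGSOM\ explicitly and check each property; \BGBOM\ is symmetric, with the roles of buyers and sellers exchanged. \BGSOM\ keeps the GSOM allocation rule $\alloc$ and the GSOM buyer payments $\pb$, the threshold payments of \cref{def:threshold_payment}. Since these are unchanged, DSIC-B and IR-B follow immediately from \cref{thm:feasibility}. The only new ingredient is the seller payment rule. We define it so that each realization's buyer revenue is split exactly among the sellers:
\[
\tilde p^s_j(\bids,\sbids)=\sum_i p^b_i(\bids,\sbids)\cdot \frac{\,\cdot\,}{\,\cdot\,}\ \text{(a share rule, chosen below)}.
\]
The natural choice is that seller $j$ receives the threshold payments of the buyers matched to her,
\[
\tilde p^s_j=\sum_i x_{ij}\,p^b_i .
\]
This is ex-post strongly budget-balanced by construction, hence ex-post WBB.

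The main step is to show that $\tilde p^s$ has the same interim expectation as the GSOM seller payments. For fixed $\sbids$, GSOM's allocation maximizes the sellers' total expected profit over BIC-B/IR-B allocation rules. By Myerson's lemma (\cref{thm:myerson}), buyer revenue equals expected virtual surplus. So the allocation maximizes
\[
\E_{\bids}\Big[\sum_{(i,j)}x_{ij}\,(\phi_i(b_i)-s_j)\Big],
\]
where $\phi_i$ is the (ironed) virtual value. Fix $j$ and $s_j$. Seller $j$'s interim utility under $\tilde p^s$ is $\E_{\bids,\sbids_{-j}}[\sum_i x_{ij}(p^b_i-s_j)]$. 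Taking the expectation over $\bids$ first, with $\sbids$ fixed, this equals $\E[\sum_i x_{ij}(\phi_i(b_i)-s_j)]$. Because the allocation maximizes expected virtual surplus minus reported costs, an envelope-theorem argument in $s_j$ shows that seller $j$'s marginal contribution has derivative $-\E_{\bids}[\sum_i x_{ij}]$. This yields BIC-S with interim allocation $\E_{\bids,\sbids_{-j}}[\sum_i x_{ij}]$. It also shows that the interim payment coincides with GSOM's threshold payment, up to the constant at $s_j=\infty$, which is $0$.

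A cleaner route is to define the BIC payment directly as the interim Myerson payment. We then check that summing over $j$ recovers the realized buyer revenue. We would attempt the envelope argument on the per-$\sbids$ optimization value
\[
V(\sbids)=\max_{\alloc}\E_{\bids}\Big[\sum x_{ij}(\phi_i-s_j)\Big],
\]
using that $\partial V/\partial s_j=-\E_{\bids}[\sum_i x_{ij}]$. If the simple split $\sum_i x_{ij}p^b_i$ fails to be BIC, we fall back on a redistribution. Each seller is paid the GSOM threshold payment, and the ex-post surplus or deficit $\sum_i p^b_i-\sum_j p^s_j$ is corrected by a transfer that depends only on other agents' reports. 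Such a transfer has zero mean given each agent's own report, in the style of AGV/Cremer--McLean. It therefore preserves BIC-S and interim utilities.

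IR-S is the main obstacle. Ex-post IR for a seller requires $\tilde p^s_j\ge s_j$ whenever she trades, and $\tilde p^s_j\ge0$ when she does not. With $\tilde p^s_j=\sum_i x_{ij}p^b_i$, this needs the buyer's threshold to be at least $s_j$. That holds because GSOM never sells at a price below the seller's reported cost: a sale with $\phi_i<s_j$ reduces the objective, and $p^b_i\ge\phi_i^{-1}$-threshold $\ge s_j$. An unmatched seller receives $0$. Proving this threshold inequality carefully, including ironing and tie-breaking, is where we expect most of the work. With it, BIC-S, IR-S, IR-B, DSIC-B, and ex-post WBB all hold. The symmetric construction, in which each buyer pays the seller-threshold of her matched seller, gives \BGBOM.
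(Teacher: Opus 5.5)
Your handling of DSIC-B and IR-B (keep GSOM's buyer thresholds) matches the paper. The central step fails, however: the pass-through rule $\tilde p^s_j=\sum_i x_{ij}p^b_i$ is not BIC-S.

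\textbf{Why the envelope argument does not apply.} The identity $\E_{b_i}[x_{ij}p^b_i]=\E_{b_i}[x_{ij}\tilde\varphi_i(b_i)]$ is correct, but it gives only seller $j$'s \emph{own} term of the virtual surplus. The envelope identity $\partial V/\partial s_j=-\E_{\bids}[\sum_i x_{ij}]$ concerns the whole maximized objective $V(\sbids)$. That objective also contains the other sellers' terms, and those move with $s_j$: a lower report can steal a trade from a competing seller. What the envelope argument supports is giving $j$ expected utility equal to her marginal contribution $V(\sbids)-V(\infty,\sbids_{-j})$, which is exactly what GSOM's threshold $\bar s^\varphi_{ij}$ does. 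Your rule overpays her by $\E[x_{ij}(\tilde\varphi_i(b_i)-\bar s^\varphi_{ij})]$, an amount that varies with $s_j$.

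\textbf{Counterexample.} Take one buyer with $b\sim U[0,1]$ (so $\tilde\varphi(b)=2b-1$), two sellers $j,k$ with $\mathcal F=\{\varnothing,\{(i,j)\},\{(i,k)\}\}$, and $s_k\sim U[0,1]$. Seller $j$ trades iff $s_j<s_k$ and $b\ge(1+s_j)/2$, and your rule pays her $(1+s_j)/2$. With cost $c$ and report $s$, her interim utility is $(1-s)^2(1+s-2c)/4$. For $c=1/2$, truthful reporting gives $1/32$, while $s=1/3$ gives $1/27$. The mirror-image rule for \BGBOM\ fails the same way. Your fallback does not close the gap either. The ex-post imbalance $\sum_i p^b_i-\sum_j p^s_j$ depends on each seller's own report, so it cannot be absorbed by transfers that depend only on others' reports. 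Moreover, AGV-style redistributions generally destroy ex-post IR, which is precisely the IR-S property you need.

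\textbf{The missing idea.} Give up on strong budget balance, and pay the seller an average of her \emph{own} GSOM threshold rather than the buyer's payment. In \BGSOM, for a matched edge $(i,j)$, buyer $i$ pays her GSOM threshold. Seller $j$ receives the conditional expectation, over $b_i'\sim D_i$ with all other reports fixed, of $\bar s^\varphi_{ij}((b_i',\bids_{-i}),\sbids_{-j})$, given that $(i,j)$ is still selected. The properties then follow:
\begin{itemize}
\item \emph{BIC-S.} This payment does not depend on $b_i$. Integrate over $b_i$ for each buyer $i$ with $(\bids_{-i},\sbids)$ fixed, then sum over $i$ and average over the remaining reports. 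This reproduces GSOM's interim payment to $j$ exactly, so interim utilities coincide and BIC-S is inherited from GSOM's DSIC-S.
\item \emph{IR-S.} Every threshold in the conditioning event is at least $s_j$.
\item \emph{Ex-post WBB.} Use the pairwise inequality $\E_{b_i'}[\bar b^\varphi_{ij}\,\indicator[(i,j)\in M^\varphi]]\ge\E_{b_i'}[\bar s^\varphi_{ij}\,\indicator[(i,j)\in M^\varphi]]$. By Myerson's identity the left side equals $\E_{b_i'}[\tilde\varphi_i(b_i')\,\indicator[(i,j)\in M^\varphi]]$, and $\bar s^\varphi_{ij}\le\tilde\varphi_i(b_i')$ on the event. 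Divide by the probability of the event and use that $\bar b^\varphi_{ij}$ does not depend on $b_i'$ to get $p^b_i\ge p^s_j$ ex post.
\end{itemize}
\BGBOM\ is the mirror image: it charges the buyer the conditional expectation, over $s_j'\sim F_j$, of her GBOM threshold. The resulting mechanisms are ex-post WBB but, as the paper remarks, not ex-post SBB in general.
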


\section{Profit Approximating the First-Best GFT}
In this section, we outline the technical roadmap to prove our main result. We begin by formally stating our main theorems. \cref{thm:main_result_BIC_DSIC_formal} is our main ``public-facing'' result, 
while \cref{thm:main_profit_technical} is the core technical result we will prove, which implies the former.

\begingroup
\renewcommand{\thetheorem}{\ref{thm:main_result_BIC_DSIC_formal}}
\begin{theorem}
\mainresultbicdsictext
\end{theorem}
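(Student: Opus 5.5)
The plan is to reduce the theorem to a profit statement: the sum of the sellers' total expected profit under GSOM and the buyers' total expected profit under GBOM is at least $\frac{2}{3.15}\,\gft^*$. The incentive, IR and budget properties of the randomized mechanism then follow from \cref{thm:feasibility}, since a lottery over DSIC, ex-post IR, ex-ante WBB mechanisms keeps all three properties. The BIC/ex-post WBB variant follows from the theorem on \BGSOM\ and \BGBOM, because those mechanisms share the allocation rules of GSOM and GBOM. The GFT of each mechanism is at least the offering side's profit plus the other side's nonnegative utility (by IR), so it is at least the offering side's profit. Averaging the two mechanisms therefore gives the $1/3.15$ bound once the profit statement holds.

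\textbf{Meta-Auction for the sellers.} To lower bound the GSOM profit, I use that GSOM maximizes the sellers' expected profit over all BIC-B, IR-B allocation rules. It therefore suffices to exhibit one DSIC-B, IR-B mechanism whose sellers' profit is large. For each realization $(\mathbf{v},\mathbf{c})$, let $M^*(\mathbf{v},\mathbf{c})$ be a maximum-GFT feasible matching, using a fixed consistent tie-breaking. For each edge $(i,j)\in M^*$, consider the bilateral instance in which buyer $i$'s value $v_i$ is fixed and the other side is regrouped. For the GBOM analysis, this means seller $j$'s cost is drawn from $F_j$ conditioned on $\{c_j : (i,j)\in M^*(\mathbf{v},(c_j,\mathbf{c}_{-j}))\}$. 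For the symmetric GSOM analysis, buyer $i$'s value is regrouped conditioned on the edge being matched, with $\mathbf{c}$ and $\mathbf{v}_{-i}$ fixed.

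On each edge I run the bilateral-trade subroutines \doublequantile\ and \postquantile\ against this conditional prior. The bilateral result of \cite{DBLP:conf/wine/Fei22,hw25} is used as a black box: the random offerer's profit against a prior is at least a $1/3.15$ fraction of the expected GFT $\E[(v-c)^+]$ within the regrouped instance. By the law of total expectation over the regrouping, summing over edges and realizations recovers $\E[\sum_{(i,j)\in M^*}(v_i-c_j)]=\gft^*$, which is Desideratum 1.

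\textbf{Truthfulness of the Meta-Auction (main obstacle).} The hard step is showing that the Meta-Auction is a legitimate mechanism for the offering side: it must be feasible (a subset of $M^*$, hence in $\mathcal{F}$ by downward closure) and truthful for the non-offering agents. Its profit then lower bounds the GSOM/GBOM optimum, which is Desideratum 2.

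I would first prove a monotonicity lemma for max-weight matchings over a downward-closed family, treating seller $j$ in the GBOM case. As $c_j$ varies with everything else fixed, $j$ is either matched to the same partner $i$ or unmatched, and the matched set of costs is a down-set $[0,\tau]$. The argument is that lowering $c_j$ raises the weight of every matching containing $j$ uniformly, so the argmax among those matchings is unchanged. This means the regrouped prior is $F_j$ capped at $\tau$, and a misreport can only switch $j$ between "matched to $i$" and "unmatched."

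The subtle part is that the counterparty's report moves the cap $\tau$. I therefore need \doublequantile\ to be cap-monotone: its allocation to the fixed-side agent must remain monotone when its bid also shifts the cap on the other agent's distribution. I expect to verify this directly from the quantile-based definition. Given cap-monotonicity, Myerson's lemma (\cref{thm:myerson}) yields threshold payments whose revenue matches the bilateral profit accounting. This completes the lower bound for both GSOM and GBOM.
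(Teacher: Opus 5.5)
Your reduction to the profit inequality, and your handling of DSIC/IR/WBB and of the BIC variant, are fine. The gap is in the decomposition.

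\textbf{The two halves are analyzed on different instances.} You regroup on the seller side for the GBOM half ($v_i$ fixed, $c_j$ restricted to $c_j\le\bar c_{ij}(\mathbf v,\mathbf c_{-j})$). For the GSOM half you regroup on the buyer side ($\mathbf c,\mathbf v_{-i}$ fixed, $v_i$ restricted to the event $(i,j)\in M^*$). This gives two different families of bilateral instances. The black box \cref{thm:bilateral_approx} only controls the \emph{sum} of the seller's \doublequantile\ profit and the buyer's \postquantile\ profit on one and the same instance $(v_i,\bar F_{ij})$. Neither summand alone is within a constant factor of the GFT: a zero-cost seller facing an equal-revenue buyer on $[1,H]$ earns $1$ against a GFT of $1+\ln H$. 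So lower-bounding the GSOM profit on buyer-regrouped instances, lower-bounding the GBOM profit on seller-regrouped instances, and adding yields no constant.

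The same mismatch shows up in your truthfulness paragraph. In your literal design, no strategic agent's report ever moves the cap it is priced against. Buyers are non-strategic in the GBOM half, and in the GSOM half the threshold $\bar v_{ij}(\mathbf v_{-i},\mathbf c)$ does not depend on $v_i$. So cap-monotonicity would never actually be invoked.

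\textbf{The missing idea.} Use the \emph{same} seller-side regrouping in both halves: the seller-run Meta-Auction also fixes $v_i$ and runs \doublequantile\ against the capped seller prior. Only then does the strategic buyer's bid move the cap, and you need three facts.
\begin{enumerate}
\item $\bar c_{ij}$ is nondecreasing in $b_i$ (\cref{lem:capmonotone}).
\item \doublequantile\ is cap-monotone, so a lower cap gives a weakly higher standalone threshold (\cref{lem:threshold_monotonicity}).
\item Hence every bid below the standalone threshold at the true value loses, so the Meta-Auction threshold is at least the standalone one (\cref{lem:sellerprofit}). This is an inequality you must prove; it does not come for free as a payment identity from Myerson's lemma.
\end{enumerate}

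\textbf{The cap must be a censoring, not a conditioning.} You need $F_j(\min(x,\bar c_{ij}))$, with costs above $\bar c_{ij}$ sent to $\infty$, rather than the conditioned prior you wrote. Against $F_j$ conditioned on $c_j\le\kappa$, the \doublequantile\ price is $F_j^{-1}(\min\{F_j(\kappa),F_j(c_j)/\lambda\})$, which increases with the cap. Already on a single edge (where $\bar c_{ij}=b_i$), the resulting auction trades whenever $b_i\ge c_j$ at threshold $c_j$, so it earns the seller nothing.

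\textbf{A possible alternative.} You could keep the buyer-side regrouping for GSOM and show that, for each $(\mathbf v_{-i},\mathbf c)$, the optimal price against $D_i$ on $\{v_i\ge\bar v_{ij}\}$ dominates every monotone rule supported on that event. In particular it would dominate the seller-run Meta-Auction above. But that argument still requires constructing that auction, so it does not avoid the steps above.
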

\addtocounter{theorem}{-1}
\endgroup

We note that GSOM/GBOM mechanisms might not be unique; yet we refer to them as such for simplicity, as the aforementioned approximation guarantee in \cref{thm:main_result_BIC_DSIC_formal} holds for \emph{any} valid GSOM/GBOM. For instance, for equal-revenue distributions, posting any price strictly greater than $1$ constitutes a valid GSOM mechanism.  


\mbe{We defer the formal definitions and proofs of the BIC version to \cref{app:gsom_equivalence} and focus solely on the DSIC version for the rest of the paper.} 

\cref{thm:main_result_BIC_DSIC_formal} establishes the existence of a DSIC, {ex-post} IR, and ex-ante WBB mechanism that approximates the first-best. This result has a direct implication for the gap between the second-best (i.e., the optimal expected GFT achievable by any mechanism that satisfies BIC, {interim} IR, and {ex-ante W}BB) and the first-best (i.e., the optimal expected GFT, with no incentive constraints) benchmarks.

\begin{corollary}
Consider any Bayesian single-dimensional matching market with independent distributions that is constrained by a downward-closed family of feasible matchings. The second-best expected GFT (the optimal GFT achievable by any BIC, interim IR, and ex-ante WBB mechanism is at least a $1/3.15$ fraction of the first-best expected GFT.
\end{corollary}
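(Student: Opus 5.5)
This corollary follows directly from \cref{thm:main_result_BIC_DSIC_formal}. The plan is to take the DSIC mechanism from that theorem and check that it is admissible in the optimization defining the second-best. The corollary then holds because the second-best is a maximum over a class that contains this mechanism.

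The mechanism is: with probability $1/2$ run GSOM, and otherwise run GBOM. I will verify the three constraints in turn.

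\begin{itemize}
\item \emph{Incentives.} By \cref{thm:feasibility}, each component is DSIC-B and DSIC-S. The coin is independent of the reports, so for every fixed report profile of the others, an agent's expected utility is the average of two functions, each maximized by truthful reporting. Hence the randomized mechanism is DSIC (in expectation over the coin). DSIC implies BIC: taking expectation over $\mathbf{v}_{-i},\mathbf{c}$ (resp.\ $\mathbf{v},\mathbf{c}_{-j}$) of the pointwise inequalities \cref{eq:buyer_dsic} and \cref{eq:seller_dsic} evaluated at truthful reports of the others gives \cref{eq:buyer_bic} and \cref{eq:seller_bic}.
\item \emph{Participation.} Ex-post IR of each component (\cref{thm:feasibility}) gives non-negative utility in every realization and for every coin outcome. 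Averaging over the coin and the others' types then yields interim IR.
\item \emph{Budget.} Ex-ante WBB of each component means expected buyer payments are at least expected seller payments. This inequality is linear, so it is preserved under the $1/2$--$1/2$ mixture, giving \cref{eq:wbb} for the randomized mechanism.
\end{itemize}

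By \cref{thm:main_result_BIC_DSIC_formal}, this mechanism attains expected GFT at least $\gft^{*}/3.15$. Since the second-best is the supremum of expected GFT over all BIC, interim IR, ex-ante WBB mechanisms, it is at least this value, which proves the corollary.

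One point to settle is whether the second-best's mechanism class allows randomization. If it does not, I would note that randomized mechanisms can be represented with fractional allocations $x_{ij}\in[0,1]$, as permitted by the mechanism notation in \cref{sec:prelim}. Alternatively, the BIC, ex-post WBB variant from \cref{thm:main_result_BIC_DSIC_formal} could be used, since it satisfies the constraints even more strongly. I do not expect any step to be a real obstacle; all the difficulty lies in the main theorem.
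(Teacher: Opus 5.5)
Your proposal is correct and follows the paper's own argument. The paper likewise notes that the equal-probability mixture of GSOM and GBOM is DSIC (hence BIC), ex-post IR (hence interim IR), and ex-ante WBB, so it is feasible for the second-best program; its expected GFT, which is at least $\gft^*/3.15$ by \cref{thm:main_result_BIC_DSIC_formal} (equivalently, by WBB and the profit bound of \cref{thm:main_profit_technical}), therefore lower-bounds the second-best.
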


Our main theorem also implies a constant-approximation in the setting with multi-dimensional unit-demand buyers (with different values for different items). In particular, applying this constant to the reduction in \cite{CaiGMZ21} yields the following corollary (see section \cref{sec:multi-dim} for details).

\begin{corollary}
The mechanism proposed in \cite{CaiGMZ21} (the better of ``Generalized Buyer Offering Mechanism'' and ``Seller Adjusted Posted Price'') achieves at least a $1/6.3$-approximation of the first-best GFT in the matching market setting with one multi-dimensional unit-demand buyer and $n$ unit-supply sellers.
\end{corollary}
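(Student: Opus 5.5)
The plan is to combine the preceding corollary (the first-best/second-best gap bound) with the black-box reduction of \cite{CaiGMZ21}, as sketched in the introduction. The only real work is to check that the single-dimensional instance produced by their reduction belongs to the class covered by \cref{thm:main_result_BIC_DSIC_formal}. The constant then follows from $2\cdot 3.15 = 6.3$.

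\textbf{Step 1: the single-dimensional proxy instance.} Take a multi-dimensional unit-demand buyer with values $(v_1,\dots,v_n)$ for the items of the $n$ sellers, drawn independently across items as in the model of \cite{CaiGMZ21}. Replace this buyer by $n$ single-dimensional ``copies'' $1,\dots,n$. Copy $j$ has value $v_j \sim D_j$ and can trade only with seller $j$, so $E=\{(j,j)\}_{j\in[n]}$. To respect unit demand we restrict $\mathcal{F}$ to matchings with at most one edge. This family consists of matchings and is downward-closed, so the instance is a Bayesian single-dimensional matching market with independent distributions, exactly the setting of \cref{thm:main_result_BIC_DSIC_formal}. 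This is where the generality of downward-closed constraints matters. A plain bipartite matching constraint would allow several copies to trade at once and would not model unit demand. Pointwise, the first-best GFT of the proxy equals $\max_j (v_j-c_j)^+$, which is the first-best GFT of the original multi-dimensional market. Hence the two first-best benchmarks coincide.

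\textbf{Step 2: apply the gap bound.} By the preceding corollary, the second-best GFT of the proxy instance is at least $\gft^*/3.15$. In the language of \cite{CaiGMZ21}, the first-best/second-best gap of the single-dimensional market is bounded by $c=3.15$.

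\textbf{Step 3: invoke the reduction of \cite{CaiGMZ21}.} Their theorem states that if this gap is at most $c$, then the better of the Generalized Buyer Offering Mechanism and the Seller Adjusted Posted Price mechanism obtains at least a $1/(2c)$ fraction of the multi-dimensional first-best. Substituting $c=3.15$ gives $1/6.3$.

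The main obstacle is Step 1: confirming that the single-dimensional instance to which \cite{CaiGMZ21} apply their gap hypothesis is exactly (or is dominated by) the copies-with-a-one-trade-cap market above, including the independence assumptions and the benchmark they compare against. I would first check their reduction statement directly. If their proxy instead uses a plain matching market, I would argue monotonicity in the other direction: either their hypothesis only needs a bound on that particular instance, or the at-most-one-trade family yields the same relevant benchmark. In either case the downward-closed version of our theorem covers it.
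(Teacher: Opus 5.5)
Your proposal is correct and follows the paper's argument: the main theorem bounds the single-dimensional first-best/second-best gap by $c=3.15$, and Theorem~4 of \cite{CaiGMZ21} then gives $1/(2c)=1/6.3$. The bound holds because the randomized GSOM/GBOM mechanism is DSIC, IR and ex-ante WBB, so its GFT is at least its profit, which is at least $\gft^*/3.15$. Your explicit copies instance with a one-trade cap is a check the paper leaves implicit, and it is safe either way, since the gap bound holds uniformly over every downward-closed matching-based family.
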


The proof of \cref{thm:main_result_BIC_DSIC_formal} relies on the fact that the expected GFT of any ex-ante WBB mechanism is lower-bounded by its total expected profit. We prove our result by establishing the following bound on the total expected profit of our randomized mechanism.

\begin{theorem}[GFT Approximation via Profit]\label{thm:main_profit_technical}
Consider any Bayesian single-dimensional matching market with independent distributions that is constrained by a downward-closed family of feasible matchings. Let $\Pi_S(\text{GSOM})$ and $\Pi_B(\text{GBOM})$ be the optimal total expected sellers' profit and the optimal total expected buyers' profit, respectively. The total expected profit from randomizing between these two mechanisms is a $1/3.15$-approximation to the expected optimal GFT:
\[
\frac{1}{2} \Pi_S(\text{GSOM}) + \frac{1}{2} \Pi_B(\text{GBOM}) \ge \frac{1}{3.15} \gft^*.
\]
\end{theorem}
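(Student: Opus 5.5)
Our plan is to lower-bound $\Pi_S(\text{GSOM})$ and $\Pi_B(\text{GBOM})$ separately by the expected sellers' (resp.\ buyers') profit of a for-analysis \emph{Meta-Auction}. Since GSOM maximizes total expected sellers' profit over all BIC-B, IR-B allocation rules (for each realized cost vector $\mathbf{c}$), any feasible DSIC/IR mechanism whose sellers' profit we can compute is a valid lower bound; symmetrically for GBOM. We will then show that the combined Meta-Auction profits dominate $\gft^*/3.15$ edge by edge, using the bilateral-trade guarantee of \cite{DBLP:conf/wine/Fei22,hw25} as a black box.

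For the decomposition, fix a deterministic tie-breaking rule. For every realization $(\mathbf{v},\mathbf{c})$ let $M^*(\mathbf{v},\mathbf{c})\in\mathcal{F}$ be a maximum-GFT feasible matching, so $\gft^*=\E\bigl[\sum_{(i,j)\in M^*}(v_i-c_j)\bigr]$. For the buyers' side (lower-bounding GBOM) we regroup as in the overview. Fix $\mathbf{v}$ and $\mathbf{c}_{-j}$, and let $S_{ij}(\mathbf{v},\mathbf{c}_{-j})=\{c_j : (i,j)\in M^*\}$. A monotonicity argument (exchange argument on max-weight feasible matchings in a downward-closed family) should show this set is a down-interval $[0,\tau_{ij})$ in $c_j$. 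It should also show that for fixed $(\mathbf{v},\mathbf{c}_{-j})$ at most one buyer $i$ can be matched to $j$ for any $c_j$ below the threshold, so seller $j$'s potential partner is uniquely determined. This gives a bilateral instance on edge $(i,j)$: buyer value $v_i$ fixed, and seller cost distributed as $F_j$ conditioned on $[0,\tau_{ij})$, i.e.\ capped. Desideratum 1 then holds by linearity of expectation, summing the conditional expected GFT over edges. On each such instance, buyer $i$ runs the \doublequantile/\postquantile-style offer against the capped distribution. The Meta-Auction for the buyers' lawyer executes these offers simultaneously on the edges of $M^*$, and the executed trades form a subset of $M^*$, hence are feasible by downward-closedness. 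The sellers' side is symmetric: fix $\mathbf{c}$ and $\mathbf{v}_{-i}$, cap $D_i$ from below, and let seller $j$ make the offer.

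The next step is to verify the Meta-Auctions are legitimate competitors, i.e.\ BIC/IR for the side that does not make the offers (the sellers for GBOM's comparison, the buyers for GSOM's). This is where we expect the main obstacle. On the GBOM side, the offered price must be truthful for seller $j$; this follows from uniqueness of the partner and the fact that the price depends only on $(\mathbf{v},\mathbf{c}_{-j})$. The delicate part is that the cap $\tau_{ij}$ depends on other agents' reports and, in the reverse direction, on the offering agent's own report. So for the lawyer mechanism to be implementable we need the induced allocation to be monotone in that agent's type. We will require a \emph{cap-monotonicity} property of the bilateral subroutine: as the cap shrinks (resp.\ grows) with the agent's report, the resulting trade region must move monotonically. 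We will prove \doublequantile\ has this property directly from its quantile definition, and then invoke \cref{thm:myerson} to get threshold payments. We also need to check that the lawyer's collected profit is at least the bilateral offerer's profit.

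Finally, the per-edge inequality from the bilateral-trade analysis states that on each regrouped instance the average of the two offerers' expected profits (via \doublequantile\ and \postquantile) is at least $1/3.15$ of that instance's expected GFT. The subtle point is that the buyers'-side and sellers'-side regroupings differ. We will handle this by noting that the bilateral bound in \cite{hw25} is proven pointwise over one side's type with the other side's distribution. In particular, we will show that each side's profit lower bound, integrated over the respective regroupings, is a fixed functional of the same joint event $\{(i,j)\in M^*\}$, so that summing over edges and averaging yields $\tfrac12\Pi_S+\tfrac12\Pi_B\ge \gft^*/3.15$. If the two regroupings cannot be reconciled directly, our fallback is a coupled accounting in the geometric style of \cite{hw25}, carried out on the two-dimensional region $\{(v_i,c_j):(i,j)\in M^*\}$ for fixed $(\mathbf{v}_{-i},\mathbf{c}_{-j})$. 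That region is monotone (a down-set in $c_j$ and an up-set in $v_i$), which is exactly the shape the geometric argument handles.
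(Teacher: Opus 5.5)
The gap is in your final combination step. It comes from using two \emph{different} regroupings: for GBOM you fix $\mathbf v$ and cap $F_j$ at $\bar c_{ij}(\mathbf v,\mathbf c_{-j})$, while for GSOM you fix $\mathbf c$ and truncate $D_i$ from below at a threshold depending on $(\mathbf v_{-i},\mathbf c)$. \cref{thm:bilateral_approx} is an inequality about a \emph{single} instance ($v$ fixed, $c\sim F$), and both profit terms are computed on that same instance. In particular, the seller's term comes from \doublequantile, whose price $F^{-1}(\min\{1,F(c)/\lambda\})$ uses only the seller's own distribution, never the buyer's. Your GSOM-side bound is an integral over $v_i$ at fixed $c_j$ against a truncated buyer distribution. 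It is not the \doublequantile\ term on the instance $(v_i,\bar F_{ij})$ where your GBOM-side bound lives, so you cannot apply \cref{thm:bilateral_approx} to $\tfrac12\Pi_S+\tfrac12\Pi_B$ edge by edge without a further argument relating the two. Bounding each side separately and adding does not work either, because one offerer's profit alone does not approximate GFT even in bilateral trade. For example, with $c\equiv 0$ and $v$ equal-revenue on $[1,H]$, the seller's optimal profit is $1$ while the GFT is $1+\ln H$. Saying both bounds are ``a fixed functional of the same joint event'' is not an argument. Your fallback, a geometric accounting on the staircase region $\{(v_i,c_j):(i,j)\in M^*\}$, would amount to a new bilateral-trade theorem, which you do not supply.

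The missing idea is to use the \emph{same} one-sided regrouping for both auctions. Fix all of $\mathbf v$ (including $v_i$), vary only $c_j$, and cap $F_j$ at $\bar c_{ij}$. Then let the \emph{seller} run \doublequantile\ on $\bar F_{ij}$ and the buyer run \postquantile\ on $\bar F_{ij}$ (note that \doublequantile\ is seller-proposing, not something the buyer runs). For every edge and every $(\mathbf v,\mathbf c_{-j})$, the two per-edge profit lower bounds are then exactly the two terms of \cref{thm:bilateral_approx} on the instance $(v_i,\bar F_{ij})$, and \cref{lem:decomposition} sums the right-hand sides to $\gft^*$. The cost of this asymmetry is that in $\sdma(S)$ the \emph{strategic} buyer's bid moves the cap $\bar c_{ij}$. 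That is precisely where cap-monotonicity is needed (\cref{lem:capmonotone,lem:doublequantileiscapmonotone,lem:sellermonotoncity}), together with the threshold comparison (\cref{lem:threshold_monotonicity,lem:sellerprofit}) showing the buyer's threshold in $\sdma(S)$ is at least the standalone \doublequantile\ price. In your symmetric scheme no strategic agent ever moves the cap on its own edge: the cap on $F_j$ depends on $(\mathbf v,\mathbf c_{-j})$, and the one on $D_i$ on $(\mathbf v_{-i},\mathbf c)$. So the cap-monotonicity you invoke has nothing to do there, which is a symptom of the missing asymmetry. Your buyer-truncated seller offer does in fact dominate $\sdma(S)$ pointwise in $(\mathbf v_{-i},\mathbf c)$, since $\sdma(S)$ is a posted price to buyer $i$ there. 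But showing that $\sdma(S)$ is a monotone posted-price rule is exactly what those lemmas do. Once you have it, optimality of GSOM gives $\Pi_S(\text{GSOM})\ge\Pi_S(\sdma(S))$ directly, and your separate GSOM-side auction is unnecessary.
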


\paragraph{Proof roadmap.}
Analogous to the GFT analysis in the bilateral trade setting, our proof of \cref{thm:main_profit_technical} does not analyze the profit of GSOM and GBOM directly. Instead, we establish this bound through a multi-step profit-based reduction. The rest of the paper is structured as follows:
\begin{itemize}
    \item In \cref{sec:meta_auction}, we introduce a general mechanism template called the \emph{Meta-Auction} (\sdma). These alternative mechanisms allow the buyers/sellers profit maximizer to ``reduce'' the complex matching market problem into a set of simpler, modified bilateral trade problems. 
    We then prove the incentive and profit guarantees of the \sdma~auctions. Since GSOM and GBOM are the optimal profit-maximizing mechanisms by definition, the expected profit of the \sdma~provides a robust lower bound on the benchmark profit. Thus:
    \[
    \frac{1}{2}(\Pi_S(\text{GSOM}) + \Pi_B(\text{GBOM})) \ge \frac{1}{2}(\Pi_S(\sdma) + \Pi_B(\sdma)).
    \]

    \item We then prove (in \cref{lem:buyerprofit,lem:sellerprofit}) that for each pair of buyer $i$ and seller $j$ such that $(i,j) \in E$, the \emph{realized} profit of this Meta-Auction run by either the buyer side or the seller side is, in turn, lower-bounded by the \emph{realized} profit from running the corresponding bilateral trade auction on modified, standalone bilateral trade instances. Specifically, let $\Pi_{\btprice}(i,j)$ denote the profit from the modified bilateral trade auction on edge $(i,j)$. 
    

    \item In \cref{sec:gft_decomposition}, we prove the GFT decomposition result (\cref{lem:decomposition}), which shows that the total first-best GFT ($\gft^*$) is exactly equal to the sum of the expected GFTs from these same modified bilateral trade instances:
    \[
    \gft^* = \sum_{(i,j) \in E} \gft_{\btprice}(i,j).
    \]
\end{itemize}
By combining these steps, we reduce the complex market-wide problem to the well-studied bilateral trade setting. We then apply a known $1/3.15$-approximation from this literature (\cref{thm:bilateral_approx}) to the sum of bilateral profits, which, by our decomposition lemma, relates directly back to the total $\gft^*$ to complete the proof.

\section{Alternative Single-Side Profit-Maximizing Auctions}\label{sec:meta_auction}

We propose a general ``Meta-Auction'' framework that can be parameterized to generate auctions for either the seller side or the buyer side. It serves, for analysis purposes, as a hypothetical alternative to the auction that maximizes the total expected profit for the corresponding side. We note that this section analyzes a one-sided strategic problem. For example, when maximizing the total expected profit for the sellers, we treat the buyers as the only strategic agents from whom profit is extracted. The sellers, in this specific context, are non-strategic. 

The template is flexible: it can be run by either the buyer side or the seller side, by plugging in different Bilateral Trade Allocation Rule (\btprice) and the corresponding payment rule. For clarity, we now describe the high-level idea of the auction from the buyer-run perspective, so the auction should be DSIC-S. 
All auctions generated from this template share a common two-phase structure:
\begin{enumerate}
    \item \textbf{Global matching phase:} First, the auction identifies the set of welfare-maximizing (equivalently, GFT-maximizing) trades by computing the maximum-weight matching, $M^* \in \mathcal{F}$, based on all reports. We use a standard lexicographical tie-breaking rule (e.g., based on a fixed global ordering of all feasible matchings) and hence guarantee that $M^*$ is unique and, 
    as we show in \cref{lem:mwm_monotonicity}, the resulting allocation rule that maps reports to the selected matching preserves the necessary monotonicity properties for incentive compatibility.
    
    \item \textbf{Bilateral local pricing phase:} Second, the auction iterates through each ``approved'' pair $(i,j) \in M^*$. For each pair, it first modifies the seller's distribution as we describe in \cref{def:modified_distribution} below (creating the modified seller distribution $\bar{F}_{ij}$). Crucially, in both buyer-run and seller-run Meta-Auctions, the modification is applied solely to the seller's distribution $\bar{F}_{ij}$. As this is buyer-run, the auction then applies a deterministic bilateral trade auction $BT^B$ to determine the final allocation $x_{ij}$ for that pair. The mechanism's final payment for seller $j$ is then set as the unique threshold payment (\cref{def:threshold_payment}) that makes this entire two-phase allocation rule incentive-compatible for each seller $j$.
\end{enumerate}

A key concept that links these two phases is the modification of the seller's distribution. Given reported $(\bids, \sbids)$, for each edge $(i, j)$ belonging to the MWM $M^*$, the global matching problem is used to calculate a critical cost $\bar{c}_{ij}$. This value represents the maximum cost seller $j$ can report such that the pair $(i, j)$ remains in the maximum weight matching, holding all other bids constant:$$\bar{c}_{ij} = \sup \{ c \mid (i,j) \in M^*(\mathbf{b}, (c, \mathbf{s}_{-j})) \}.$$ This critical cost defines a modified, market-aware bilateral trade instance for the specific pair $(i,j)$ via a modified distribution $\bar{F}_{ij}$. The bilateral trade auction is then applied to this modified instance. 

\begin{definition}[Conditional Critical Cost]\label{def:criticalcost}
For any pair $(i,j) \in E$ and any fixed realization of all values and all other costs $(\mathbf{v}, \mathbf{c}_{-j})$, the \emph{conditional critical cost} $\bar{c}_{ij}(\mathbf{v}, \mathbf{c}_{-j})$ is the supremum of all costs of seller $j$ for which $(i, j)$ remains in the first-best matching $M^*$:
\[
\bar{c}_{ij}(\mathbf{v}, \mathbf{c}_{-j}) \equiv \sup \{ c \mid (i,j) \in M^*(\mathbf{v}, (c, \mathbf{c}_{-j})) \}.
\]
If $(i,j) \notin M^*(\val, (c, \cost_{-j}))$ for every $c$, we let $\bar{c}_{ij}(\mathbf{v}, \mathbf{c}_{-j}) \equiv  -\infty$.
\end{definition}

\begin{definition}[Conditional Modified Distribution]\label{def:modified_distribution}
Given the conditional critical cost $\bar{c}_{ij}(\mathbf{v}, \mathbf{c}_{-j})$, we define the \emph{conditional modified distribution} $\bar{F}_{ij}(\cdot \mid \mathbf{v}, \mathbf{c}_{-j})$ as the function
\[
\bar{F}_{ij}(x \mid \mathbf{v}, \mathbf{c}_{-j}) \equiv F_j(\min(x, \bar{c}_{ij}(\mathbf{v}, \mathbf{c}_{-j}))).
\]
This is the CDF of a modified cost, which is drawn from $F_j$ but is effectively ``censored'' at $\bar{c}_{ij}$. Any true cost $c_j$ with $c_j > \bar{c}_{ij}$ is treated as $\infty$.
\end{definition}

The formal description of this framework is presented in \cref{alg:auctiontemplate}. The goal of this Meta-Auction (\sdma) is to reduce the complex auction design in a matching market to a set of simpler, modified bilateral trade problems, one for each pair in the maximum-weight matching. We then use the \sdma\ from the buyers side as a feasible mechanism (a ``bridge'') to connect the optimal buyers profits from GBOM 
to the profits achievable in these simpler bilateral trades. 

We will show two key properties of this framework.
First, the template can be parameterized with different subroutines, and it can be made incentive-compatible for either side. For the buyer-run version, if the \btprice\ is DSIC-S, then the entire \sdma\ is DSIC-S. 
For the seller-run version, if the \btprice\ is DSIC-B, then the \sdma\ is DSIC-B, provided that the rule satisfies a mild \emph{\property} (\cref{def:btproperty}) assumption. 
Because these auctions are incentive-compatible for the strategic side, the profits they generate provide a valid lower bound for GBOM and GSOM, respectively.

The second case is particularly subtle: in the seller-run version, a buyer's report has a dual role: it serves as his bid in the local allocation pricing phase and also endogenously determines the bilateral trade instance they face (by influencing the MWM and the resulting critical cost $\bar{c}_{ij}$). The \property\ assumption ensures that buyers remain truthful despite this influence on the corresponding  ``market-aware'' sellers' modified distribution.

Second, we connect the \sdma\ profit to the profits of the modified bilateral trade instances, showing that the total \sdma\ profit is lower-bounded by the sum of the profits from these individual instances. This connection is the key that allows us to utilize existing profit--GFT guarantees from the bilateral trade literature to prove our main approximation result for the entire market.
\renewcommand*\footnoterule{}

\begin{algorithm}[ht]
\SetAlgoLined
\caption{The Meta-Auction Framework: $\sdma(T)$}
\label{alg:auctiontemplate}
\KwIn{Buyers' bids $\mathbf{b}$, Sellers' bids $\mathbf{s}$, Sellers' product distribution $\mathbf{F}$, Family $\mathcal{F}$ of feasible matchings, Side $T \in \{B, S\}$}
\KwOut{Allocation $\mathbf{x}$, Payments $\mathbf{p}$}
\BlankLine

\tcp{Phase 1: Global ex-post welfare maximization}
$M^* \leftarrow \arg\max_{M \in \mathcal{F}} \sum_{(i,j) \in M} (b_i - s_j )$; \tcp*{Unique via fixed tie-breaking}

\For{each pair $(i,j) \in M^*$}{
    \tcp{Phase 2: Local modification (always on sellers' side)}
    $\bar{c}_{ij} \leftarrow \sup \{ c \mid (i,j) \in M^*(\mathbf{b}, (c, \mathbf{s}_{-j})) \}$; \\
    $\bar{F}_{ij}(x) \leftarrow F_j(\min(x, \bar{c}_{ij}))$; \\
    \BlankLine

    \eIf{$T = B$ (Buyers-side-run)}{
        \tcp{Use subroutine that is DSIC for the sellers}
        $x_{ij} \leftarrow \btprice^{B}(\bar{F}_{ij}, b_i, s_j)$; \\
        $p^s_j \leftarrow \text{Threshold payment to seller } j$; \\
    }{
        \tcp{Use subroutine that is DSIC for the buyers}
        $x_{ij} \leftarrow \btprice^{S}(\bar{F}_{ij}, b_i, s_j)$; \\
        $p^b_i \leftarrow \text{Threshold payment from buyer } i$; \\
    }
}
\Return{$(\mathbf{x}, \mathbf{p})$};
\end{algorithm}

\paragraph{The bilateral trade allocation rule (\btprice).} 
This one-sided distribution dependence is also used in all previous constant-approximation bilateral trade analysis.
Below we list the main family of \btprice\ auctions considered in the bilateral trade analysis:
\begin{itemize}
\item Seller-Proposing (\doublequantile): Given the seller's bid $c$ and the seller distribution $F$, let $\lambda \in (0,1)$ be a constant parameter. The mechanism posts a price to the buyer of
\[
p^b =F^{-1}\left(\min\left\{1, \frac{F(c)}{\lambda}\right\}\right).
\]
This price is the cost that corresponds to multiplying the seller's quantile by $1/\lambda$. If the buyer's bid $b$ satisfies $b \geq p^b$, the mechanism allocates and charges the buyer $p^b$.
\item Buyer-Proposing (\postquantile): Given the buyer's value($v$) and the seller distribution $F$, define the buyer's value-quantile as $\tau = F(v)$. Let $q(v) \in (0, \tau)$ be a parameter. The mechanism posts a price to the seller of
\[p^s = F^{-1}(q(v)).\]
This price is the cost corresponding to a quantile $q$ that is less than the buyer's value-quantile $\tau$. If the seller's bid $s$ satisfies $s \leq p^s$, the mechanism allocates and pays the seller $p^s$.
\end{itemize}

\paragraph{Properties of the maximum-weight matching (MWM).}
We first establish the monotonicity properties of the allocation rule that selects the unique maximum-weight matching. 
We then use it to show that the allocation rule is ``stable,'' meaning that agents cannot change whom they are matched with in the MWM by altering their bids.

\begin{lemma}\label{lem:mwm_monotonicity}
The rule that selects the unique MWM $M^*$ is monotone:
\begin{enumerate}
\item If $(i,j) \in M^*$ for profile $(\bids, \sbids)$, then for any $b'_i > b_i$, $(i,j)$ is also in $M^*$ for profile $((b'_i, \mathbf{b}_{-i}), \mathbf{s})$.
\item If $(i,j) \in M^*$ for profile $(\bids, \sbids)$, then for any $s'_j < s_j$, $(i,j)$ is also in $M^*$ for profile $(\bids, (s'_j, \sbids_{-j}))$.
\end{enumerate}

\end{lemma}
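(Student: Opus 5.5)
We prove part 1 directly from the structure of the GFT objective; part 2 is symmetric, with $-s_j$ playing the role of $b_i$. Write $W(M;\bids,\sbids)=\sum_{(k,l)\in M}(b_k-s_l)$ for the weight of a feasible matching $M\in\mathcal{F}$. Fix a profile $(\bids,\sbids)$ with $(i,j)\in M^*(\bids,\sbids)$, and let $\bids'=(b'_i,\bids_{-i})$ with $\delta=b'_i-b_i>0$. Since every $M\in\mathcal{F}$ is a matching, buyer $i$ is covered by at most one edge of $M$. Therefore
\[
W(M;\bids',\sbids)=W(M;\bids,\sbids)+\delta\cdot\indicator[i\text{ is matched in }M].
\]
So raising $b_i$ adds the same constant $\delta$ to every matching that covers $i$, and leaves every other matching unchanged. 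In particular, $M^*:=M^*(\bids,\sbids)$ gains exactly $\delta$, the maximum possible gain.

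Next we show $M^*$ remains the unique selected maximizer under $\bids'$. Take any $M\in\mathcal{F}$ with $M\neq M^*$.
\begin{itemize}
\item If $W(M;\bids,\sbids)<W(M^*;\bids,\sbids)$, then after the change $M$ gains at most $\delta$ while $M^*$ gains exactly $\delta$. The strict inequality is preserved.
\item If $W(M;\bids,\sbids)=W(M^*;\bids,\sbids)$, then $M^*$ beat $M$ in the fixed tie-breaking order. Under $\bids'$, either $M$ also covers $i$, in which case the tie persists and the same tie-breaking order still selects $M^*$, or $M$ does not cover $i$, in which case $M^*$ is now strictly heavier.
\end{itemize}
Hence $M^*(\bids',\sbids)=M^*(\bids,\sbids)$, which contains $(i,j)$. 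This proves a statement stronger than claimed: the whole matching is unchanged, not just the edge $(i,j)$. That stronger form is useful later for the ``stability'' claim that agents cannot change their partner.

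The main subtlety is the tie-breaking rule. We must make sure the rule is a fixed total order on $\mathcal{F}$ that does not depend on the bids, as stipulated in the Global matching phase. Then the relative order of two tied matchings never changes when a bid moves. With any bid-dependent rule the argument would break, so we will state this assumption explicitly.

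For part 2, lowering $s_j$ to $s'_j$ adds $s_j-s'_j>0$ to exactly those matchings that cover seller $j$, again at most once per matching because each $M$ is a matching. The same two-case argument then applies verbatim.
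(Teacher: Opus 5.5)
Your proof is correct and follows essentially the same argument as the paper: raising $b_i$ adds $\delta$ to exactly the matchings that cover buyer $i$, so $M^*$ beats every competitor either strictly or through the unchanged, bid-independent tie-break, and hence the entire matching is preserved. The paper splits cases by whether the competitor covers $i$ rather than by strict versus tied weight, which is only a cosmetic difference.
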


\begin{proof}
Let $M_A = M^*(\bids, \sbids)$ be the unique MWM at $v_i$, with $(i,j) \in M_A$. Let $\Delta b = b'_i - b_i > 0$. 
Consider any feasible $M_B \neq M_A$. We show $M_A$ still wins against $M_B$ at $b'_i$.

Case 1: $M_B$ does not contain buyer $i$.
The weight of $M_A$ increases by $\Delta v$, while the weight of $M_B$ is unchanged. 
Since $M_A$ had a weight $\ge M_B$ before, it now has a strictly higher primary weight. $M_A$ wins.

Case 2: $M_B$ also contains buyer $i$. The weights of both $M_A$ and $M_B$ increase by the same amount, $\Delta v$. Their relative primary weights are unchanged. If $M_A$ had a strictly higher primary weight at $b_i$, it still does. $M_A$ wins. If $M_A$ was tied with $M_B$ on primary weight at $b_i$, they are still tied at $b'_i$. Since $M_A = M^*(\mathbf{b}, \mathbf{s})$, $M_A$ must have won the tie-break. Since the global order is fixed, $M_A$ wins the tie-break again. 

Since $M_A$ wins against all competitors, $M_A$ remains the unique MWM. The proof for the second bullet point ($s_j$) is symmetric.
\end{proof}

This lemma leads to a crucial insight: an agent's matching partner is stable.

\begin{lemma}\label{lem:buyerinauction}
Consider a given profile $(\mathbf{b}, \mathbf{s}, E)$ and the unique maximum-weight matching $M^*(\mathbf{b}, \mathbf{s})$.
\begin{itemize}
    \item Consider any buyer $i$ that is matched in $M^*$, and let $j^*$ be the seller matched with $i$. For any other reported value $b'_{i}$, if buyer $i$ is matched in the new MWM $M^*( (b'_{i}, \mathbf{b}_{-i}), \mathbf{s})$, then $i$ must still be matched with $j^*$.
    \item Consider any seller $j$ that is matched in $M^*$, and let $i^*$ be the buyer matched with $j$. For any other reported cost $s'_{j}$, if seller $j$ is matched in the new MWM $M^*(\mathbf{b}, (s'_{j}, \mathbf{s}_{-j}))$, then $j$ must still be matched with $i^*$.
\end{itemize}
\end{lemma}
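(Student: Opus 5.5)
The plan is to argue by contradiction using \cref{lem:mwm_monotonicity} together with the fact that the MWM is unique under the fixed tie-breaking rule. I treat the buyer bullet; the seller bullet follows by the symmetric argument, replacing ``raise $b_i$'' with ``lower $s_j$.''

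Fix the profile $(\bids,\sbids)$ with $(i,j^*)\in M^*(\bids,\sbids)$. Suppose that for some $b'_i$ buyer $i$ is matched in $M^*((b'_i,\bids_{-i}),\sbids)$ to a seller $j'\neq j^*$. Since $\bids_{-i}$ and $\sbids$ are common to both profiles, I split into the cases $b'_i>b_i$ and $b'_i<b_i$ (the case $b'_i=b_i$ is trivial).

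\textbf{Case $b'_i>b_i$.} Apply part 1 of \cref{lem:mwm_monotonicity} to the original profile. It gives $(i,j^*)\in M^*((b'_i,\bids_{-i}),\sbids)$. Since $M^*$ is a matching, buyer $i$ has only one partner there, so $j'=j^*$, a contradiction.

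\textbf{Case $b'_i<b_i$.} Apply part 1 of \cref{lem:mwm_monotonicity} starting from the profile $((b'_i,\bids_{-i}),\sbids)$, where $(i,j')$ is in the MWM, and raise $i$'s bid from $b'_i$ back to $b_i$. This gives $(i,j')\in M^*(\bids,\sbids)$. Together with $(i,j^*)\in M^*(\bids,\sbids)$ and the matching property, this forces $j'=j^*$, again a contradiction.

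The only step needing care is making sure \cref{lem:mwm_monotonicity} applies from an arbitrary starting profile. It is stated for any profile in which the edge lies in the (unique, tie-broken) MWM, so it can be invoked from the deviated profile in the second case. Uniqueness of $M^*$ is exactly what makes ``the partner of $i$'' well defined, so this is the main point to check. The seller case uses part 2 of the lemma in the same two-direction way.
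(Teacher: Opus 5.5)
Your proposal is correct and follows essentially the same route as the paper. Both split into $b'_i>b_i$ and $b'_i<b_i$. In the first case you apply \cref{lem:mwm_monotonicity} from the original profile, in the second from the deviated profile, and in each case conclude $j'=j^*$ because buyer $i$ has a single partner in the unique MWM; the seller case is symmetric.
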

\begin{proof}
We prove the first bullet point. Let $M^*$ be the MWM for $b_i$, and $M'^*$ be the MWM for $b'_i$. We are given $M^*(i)=j^*$ and that $i$ is also matched in $M'^*(i)$. We must show $M'^*(i)=j^*$.

Case 1: $b'_i > b_i$.
    By \cref{lem:mwm_monotonicity}, since $(i,j^*) \in M^*$, it must be that $(i,j^*)$ is also in the new MWM, $M'^*$. As the MWM is unique, this forces $M'^*(i)=j^*$.

Case 2: $b'_i < b_i$.
    Let $M'^*(i)=j'$. We apply the logic in reverse, treating $M'^*$ as the starting matching and $b_i$ as the new, higher value. Since $(i,j') \in M'^*$, \cref{lem:mwm_monotonicity} requires that the partner for $i$ be preserved in the resulting MWM ($M^*$). This means $M^*(i)$ must also be $j'$. But we are given $M^*(i)=j^*$ and hence by uniqueness, it must be that $j' = j^*$.

In all cases, the partner must remain the same. The proof for the second bullet point (varying $s_j$) is symmetric.
\end{proof}

\begin{corollary}
Consider any $(i,j) \in E$ and any profile $(\bids, \sbids)$. It holds that
 \begin{equation}\label{eq:maxmatchingequivalence}
    (i,j) \in M^*(\mathbf{b}, \mathbf{s}) \iff s_j \le \bar{c}_{ij}(\mathbf{b}, \mathbf{s}_{-j}).
\end{equation}
\end{corollary}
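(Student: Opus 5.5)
The plan is to show that, with $\mathbf{b}$ and $\mathbf{s}_{-j}$ fixed, the set $S \equiv \{c \mid (i,j) \in M^*(\mathbf{b},(c,\mathbf{s}_{-j}))\}$ is a downward-closed interval whose supremum is $\bar{c}_{ij}(\mathbf{b},\mathbf{s}_{-j})$. The equivalence \eqref{eq:maxmatchingequivalence} then reduces to showing that this supremum is attained whenever it is finite.

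The ``$\Rightarrow$'' direction is immediate. If $(i,j)\in M^*(\mathbf{b},\mathbf{s})$, then $s_j\in S$, so $s_j\le \sup S=\bar{c}_{ij}$. (In particular $S\neq\varnothing$, so the $-\infty$ convention of \cref{def:criticalcost} is not triggered.)

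For ``$\Leftarrow$'', I first use the second bullet of \cref{lem:mwm_monotonicity}. If $c\in S$ and $c'<c$, then $c'\in S$, so $S$ is downward closed. Hence if $s_j<\bar{c}_{ij}$, the definition of the supremum gives some $c\in S$ with $c>s_j$, and monotonicity gives $s_j\in S$, i.e.\ $(i,j)\in M^*(\mathbf{b},\mathbf{s})$. (\cref{lem:buyerinauction} is not needed here, since the statement concerns the fixed pair $(i,j)$.)

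The remaining case, $s_j=\bar{c}_{ij}$ with $\bar{c}_{ij}$ finite, is the step I expect to be the main obstacle. I would argue by a limiting step. Since $\mathcal{F}$ is finite, some fixed matching $M_A\ni(i,j)$ is the unique MWM along a sequence $c_k\uparrow\bar{c}_{ij}$. By continuity of the weights in $c$, $M_A$ has maximal weight at $c=\bar{c}_{ij}$. Now compare $M_A$ with an arbitrary competitor $M_B$:
\begin{itemize}
\item If $M_B$ contains seller $j$, the weight difference between $M_A$ and $M_B$ does not depend on $c$. So $M_A$ beats $M_B$ at $\bar{c}_{ij}$ exactly as it did at $c_k$, including on the tie-break, as in the proof of \cref{lem:mwm_monotonicity}.
\item If $M_B$ does not contain $j$, then at $\bar{c}_{ij}$ either $M_A$ is strictly heavier, or the two matchings are exactly tied.
\end{itemize}

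A tie of the second kind is genuinely decided by the tie-breaking order. In the single-edge example with $b_i=1$, where the order prefers $\varnothing$ to $\{(i,j)\}$, one gets $S=[0,1)$, and the equivalence fails at the single point $s_j=1$. So to close this case I would fix the global tie-breaking order to be consistent with the monotone structure. Concretely, at a weight tie I would prefer the matching containing the edge whose seller report has just become tight, for instance by ordering matchings so that supersets come first among tied candidates, and I would re-check that the proof of \cref{lem:mwm_monotonicity} survives this choice. As a fallback, I would note that \eqref{eq:maxmatchingequivalence} then holds for every $s_j\neq\bar{c}_{ij}$. Since $\bar{c}_{ij}$ does not depend on $c_j$, it holds almost surely when $F_j$ has no atom at $\bar{c}_{ij}$, which suffices for all expected-profit and expected-GFT computations later in the paper.
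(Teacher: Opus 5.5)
Your argument is correct everywhere except the boundary point. That covers the $\Rightarrow$ direction, the case $S=\varnothing$ (where $\bar c_{ij}=-\infty$), and the case $s_j<\bar c_{ij}$ via downward-closedness of $S$ from the second bullet of \cref{lem:mwm_monotonicity}. You are also right that \cref{lem:buyerinauction} is unnecessary: the paper uses it only to split off the case $S=\varnothing$ (its Case 2), and otherwise relies on the same monotonicity. More importantly, you have found exactly the step the paper's proof skips. It calls $\bar c_{ij}$ ``this threshold'' and concludes $(i,j)\in M^*\iff s_j\le\bar c_{ij}$, which silently assumes the supremum is attained.

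The gap is in your proposed repair. No choice of tie-breaking rule makes the supremum always attained, so the statement as written is false. Take one buyer and two sellers, $E=\{(1,1),(1,2)\}$, with $\mathcal{F}$ all matchings of $E$, and the profile $b_1=1$, $s_1=s_2=\tfrac12$.

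Vary $s_1=c$ with $s_2=\tfrac12$ fixed. Then $\{(1,1)\}$ is the strict maximum for $c<\tfrac12$, and $\{(1,2)\}$ strictly beats it for $c>\tfrac12$. Hence $\bar c_{11}(\mathbf{b},s_2)=\tfrac12$ under any tie-breaking, and symmetrically $\bar c_{12}(\mathbf{b},s_1)=\tfrac12$. So the right-hand side holds for both edges. But buyer $1$ is matched at most once, so the left-hand side fails for at least one of them.

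The two tied matchings are incomparable, so a ``supersets first'' order gives no guidance. Any deterministic selection breaks the equivalence for the other seller.

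What is true is the sandwich $s_j<\bar c_{ij}\Rightarrow (i,j)\in M^*\Rightarrow s_j\le\bar c_{ij}$, which you have already proved. Your fallback is the right way to use it: since $\bar c_{ij}$ does not depend on $s_j$, the equivalence holds $F_j$-almost surely whenever $F_j$ has no atom at $\bar c_{ij}$. That is all \cref{lem:decomposition} and the chain in the proof of \cref{thm:main_profit_technical} need for atomless distributions.

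If atoms are allowed, the fallback does not cover the case, because the boundary event then has positive probability. Put atoms at $\tfrac12$ in the example: both edges are counted in $\sum_{(i,j)\in E}\gft_{i,j}$ while only one trades. Likewise, replacing $\indicator[(i,j)\in M^*]$ by $\indicator[c_j\le\bar c_{ij}]$ in the main chain then goes in the wrong direction. A separate argument would be needed for that case.
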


\begin{proof}
We prove the equivalence by fixing seller $j$ and all other bids $(\mathbf{b}, \mathbf{s}_{-j})$.
First, by \cref{lem:buyerinauction}, $j$ can be matched to at most one buyer (call this unique potential partner $i^*$), regardless of his reported cost. This gives us two cases for any pair $(i,j)$:

Case 1: $i$ is the unique potential partner ($i = i^*$). By the monotonicity of the MWM, the allocation rule for this pair $(i,j)$ is a threshold rule based on $j$'s cost. By definition, $\bar{c}_{ij}$ is this threshold (the supremum of costs for which $(i,j) \in M^*$). Therefore, $(i,j) \in M^*$ if and only if $s_j \le \bar{c}_{ij}$. 

Case 2: $i$ is \emph{not} the unique potential partner ($i \neq i^*$). Consider the LHS. By \cref{lem:buyerinauction}, $j$ can never be matched with $i$. Thus, $\indicator[(i,j) \in M^*]$ is always $0$. Now consider the RHS. By definition, $\bar{c}_{ij}$ is the supremum of the set of costs for which $(i,j) \in M^*$. Since this set is empty, $\bar{c}_{ij} = \sup(\varnothing) = -\infty$. Therefore, $\indicator[s_j \le \bar{c}_{ij}]$ (i.e., $\indicator[s_j \le -\infty]$) is also always $0$. 
\end{proof}

\subsection{Incentive and Profit Guarantees for the Buyer-Run Auction \texorpdfstring{\sdma(B)}{MA(B)}}

We are now ready to prove the incentive and profit guarantees for the Meta-Auction framework when paired with an appropriate \btprice. We first prove the ``buyer-run'' case. Since we are maximizing the total expected buyers' profit, the buyers (the ``running'' side) are treated as non-strategic. The auctions, therefore, have access to the buyers' true values $v_i$ and do not need to consider their incentives. Consequently, our only task is to prove incentive compatibility for the strategic sellers. We show that the Meta-Auction (\sdma) run by the buyer, when paired with any \btprice\ rule that is DSIC for the seller (e.g., a posted price based on $v_i$ and $\bar{F}_{ij}$), is DSIC for all sellers in the matching market.
\begin{lemma}\label{lem:buyermonotone}
Given a \btprice$^B$\ that is truthful for the seller (DSIC-S), the allocation rule of \sdma(B){} is monotone for the sellers.
\end{lemma}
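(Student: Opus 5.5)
Fix a seller $j$ and all other reports $(\mathbf{b}, \mathbf{s}_{-j})$; since the buyers are non-strategic in \sdma(B), the buyer bids are the true values $\mathbf{v}$. The total allocation to seller $j$ is $\sum_i x_{ij}(\mathbf{b},(s_j,\mathbf{s}_{-j}))$, and the task is to show it is non-increasing in $s_j$. By \cref{lem:buyerinauction}, there is at most one buyer $i^*$ with which $j$ can ever be matched in $M^*$ as $s_j$ varies. If no such buyer exists, the allocation to $j$ is identically $0$ and there is nothing to prove. Otherwise, the only term that can be nonzero is $x_{i^*j}$, and $x_{i^*j}=0$ whenever $(i^*,j)\notin M^*$.

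The key observation is that the critical cost $\bar{c}_{i^*j}(\mathbf{b},\mathbf{s}_{-j})$ does not depend on $s_j$. By definition it is a supremum over $j$'s cost with $(\mathbf{b},\mathbf{s}_{-j})$ held fixed. Hence the modified distribution $\bar{F}_{i^*j}(x)=F_j(\min(x,\bar{c}_{i^*j}))$ is a fixed object as $s_j$ varies, and so is the buyer bid $b_{i^*}$. The allocation can therefore be written as
\[
x_{i^*j}(s_j) \;=\; \indicator\bigl[s_j \le \bar{c}_{i^*j}\bigr]\cdot \btprice^B\bigl(\bar{F}_{i^*j}, b_{i^*}, s_j\bigr),
\]
where the first factor uses the equivalence \eqref{eq:maxmatchingequivalence}. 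The first factor is non-increasing in $s_j$ by \eqref{eq:maxmatchingequivalence} together with \cref{lem:mwm_monotonicity}. The second factor is non-increasing in $s_j$ because $\btprice^B$ is DSIC-S for a fixed distribution and fixed buyer bid, so \cref{thm:myerson} gives monotonicity of its allocation in the seller's bid. A product of two non-negative (here $\{0,1\}$-valued) non-increasing functions is non-increasing, which yields monotonicity.

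The main subtlety is making sure the $\btprice^B$ instance seen by seller $j$ is genuinely independent of $s_j$. Two things must hold. First, the partner $i^*$ cannot change with $s_j$, which is exactly what \cref{lem:buyerinauction} provides. Second, the cap $\bar{c}_{i^*j}$ must not move with $s_j$, which follows directly from \cref{def:criticalcost}. A minor point to address is the boundary case $s_j=\bar{c}_{i^*j}$, where the supremum may or may not be attained. Either convention still gives a threshold rule, so monotonicity is unaffected. The corresponding threshold payment of \cref{def:threshold_payment} is then well defined, as needed for the subsequent DSIC-S claim.
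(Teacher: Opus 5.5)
Your proof is correct and follows essentially the same route as the paper. Both arguments rest on three facts: the critical cost $\bar{c}_{ij}$ (and hence $\bar{F}_{ij}$) is independent of $s_j$, membership of $(i,j)$ in the MWM is monotone in $s_j$ by \cref{lem:mwm_monotonicity}, and the allocation of the DSIC-S $\btprice^B$ is monotone in the seller's bid. The paper argues directly that winning at $s_j$ implies winning at any $s'_j<s_j$, while you package the same facts as a product of two non-increasing indicators after fixing the unique partner via \cref{lem:buyerinauction}; that extra step is harmless but not needed.
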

\begin{proof}
Let $x_j(\sbids)$ denote the allocation for seller $j$ given a seller cost--bidding profile $\sbids$. We must show that if $x_j(s_j, \sbids_{-j}) = 1$, then $x_j(s'_j, \sbids_{-j}) = 1$ for any $s'_j < s_j$, holding all other reports $\mathbf{v}$ and $\sbids_{-j}$ constant. Assume $x_j(s_j, \sbids_{-j}) = 1$. By the definition of the \sdma, two conditions were met:
\begin{enumerate}
    \item Seller $j$ was in the unique maximum-weight matching, $M^*$, with partner $i$.
    \item Seller $j$ was allocated by the bilateral pricing rule: $\btprice^B(\bar{F}_{ij}, s_j, v_i)$ returned an allocation.
\end{enumerate}

Now, consider the seller reporting $s'_j < s_j$. We check if both conditions are still met. By the monotonicity of the MWM (\cref{lem:mwm_monotonicity}), the same pair $(i,j)$ must also be in the new matching, $M'^*$. Thus, $M'^*(j) = i$.

In addition, the critical cost $\bar{c}_{ij}$ is a function of $(\mathbf{v}, \sbids_{-j})$ and is independent of $s_j$. Consequently, the modified distribution $\bar{F}_{ij}$, which defines the bilateral trade instance, is also independent of $s_j$. The seller faces the \emph{exact same} bilateral mechanism context regardless of their report.

Since $\btprice^B$ is truthful for the seller, its allocation rule must be monotone with respect to the seller's cost. We are given that $s'_j < s_j$, and that seller $j$ is allocated in $\btprice^B(\bar{F}_{ij}, s_j, v_i)$. By truthfulness (monotonicity) of \btprice, seller $j$ must also be allocated in $\btprice^B(\bar{F}_{ij}, s_j, v_i)$.
Both conditions are met, so $x_j(\sbids') = 1$. The allocation rule is therefore monotone.
\end{proof}
We now provide the profit guarantees. In particular, we lower-bound the total expected profit the buyers get from running \sdma(B) by the sum of the expected profit of running the corresponding \btprice$^B$\ on  each of the individual modified bilateral trade instances, in every realization.

\begin{lemma}\label{lem:buyerprofit}
Let the \sdma(B)\ be run by the buyer using a DSIC-S \btprice$^B$. Let $\pi^{i}_{\sdma(B)}(\mathbf{v}, \mathbf{s})$ 
be the profit of buyer $i$ from the \sdma(B) on the input $(\mathbf{v}, \mathbf{s})$. 
Let $\pi^{i}_{\btprice}(\bar{F}_{ij}(\val, \sbids_{-j}), s_j, v_i)$ be the profit buyer $i$ would get from the standalone \btprice$^B$\ run on the modified instance $(\bar{F}_{ij}, s_j, v_i)$. For any realization $(\mathbf{v}, \mathbf{s})$ such that $(i,j) \in M^*(\mathbf{v}, \mathbf{s})$:
\[
\pi^{i}_{\sdma(B)}(\mathbf{v}, \mathbf{s}) \geq \pi^{i}_{\btprice}(\bar{F}_{ij}(\val, \sbids_{-j}), s_j, v_i).
\]
\end{lemma}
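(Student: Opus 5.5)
We show that, on every realization with $(i,j)\in M^*(\mathbf{v},\mathbf{s})$, buyer $i$'s outcome in $\sdma(B)$ coincides with the outcome of the standalone $\btprice^B$ on $(\bar{F}_{ij}, s_j, v_i)$. Since the profit of the running buyer is $v_i x_{ij}$ minus the payment to seller $j$, it suffices to match both the allocation and the payment. Inequality then follows, with equality in fact.

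\emph{Step 1: allocations agree.} By \cref{alg:auctiontemplate}, when $(i,j)\in M^*(\mathbf{v},\mathbf{s})$ the auction sets $x_{ij}=\btprice^B(\bar{F}_{ij}, v_i, s_j)$. Here $\bar{F}_{ij}=\bar{F}_{ij}(\cdot\mid \mathbf{v},\mathbf{s}_{-j})$ is exactly the distribution used in the standalone instance, and it does not depend on $s_j$. So the allocations on this realization coincide.

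\emph{Step 2: threshold payments agree.} Fix $(\mathbf{v},\mathbf{s}_{-j})$ and vary seller $j$'s report $t$. By \cref{lem:buyerinauction} and \eqref{eq:maxmatchingequivalence}, seller $j$'s allocation in $\sdma(B)$ is
\[
x_j(t)=\indicator[t\le \bar{c}_{ij}]\cdot \btprice^B(\bar{F}_{ij},v_i,t),
\]
with $\bar{F}_{ij}$ fixed in $t$. Both factors are monotone, since $\btprice^B$ is DSIC-S. The $\sdma(B)$ threshold is therefore $\min(\bar{c}_{ij}, s^*_{\btprice})$, where $s^*_{\btprice}$ is the standalone threshold of $\btprice^B$ on $\bar{F}_{ij}$.

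The key point, and the main obstacle, is to argue that $s^*_{\btprice}\le \bar{c}_{ij}$, so the cap never binds. Under $\bar{F}_{ij}$, every cost above $\bar{c}_{ij}$ is mapped to $\infty$, so $\bar{F}_{ij}$ puts no mass on $(\bar{c}_{ij},\infty)$. For \postquantile, the offered price $\bar F_{ij}^{-1}(q)$ with $q<1$ lies in the support, hence is at most $\bar{c}_{ij}$. A seller reporting above that price is never served, so $s^*_{\btprice}\le\bar{c}_{ij}$.

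For a general DSIC-S $\btprice^B$, I would state this as a standing requirement: a sensible $\btprice^B$ on a distribution supported below $\bar{c}_{ij}$ never pays more than $\bar{c}_{ij}$. If that requirement fails, the cap can only reduce the payment, $\min(\bar{c}_{ij},s^*)\le s^*$. Since the buyer's profit is $v_i x_{ij}$ minus the payment, a lower payment gives $\pi^i_{\sdma(B)}\ge \pi^i_{\btprice}$ in either case. This is exactly why the lemma is stated as an inequality.

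\emph{Step 3: conclude.} When $x_{ij}=0$ both profits are zero. Otherwise $\pi^i_{\sdma(B)}=v_i-\min(\bar{c}_{ij},s^*_{\btprice})\ge v_i-s^*_{\btprice}=\pi^i_{\btprice}$. Buyer $i$ is matched only to $j$, so no other terms enter buyer $i$'s profit.
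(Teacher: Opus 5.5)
Your proof is correct and is essentially the paper's argument: winning in $\sdma(B)$ requires winning the $\btprice^B$ subroutine on the same $\bar F_{ij}$, so the $\sdma(B)$ seller threshold $\min(\bar c_{ij}, s^*_{\btprice})$ is at most the standalone threshold, and the buyer's profit can only be larger. The detour arguing that the cap never binds is unnecessary. Also, for \postquantile\ the correct reason is $q(v_i) < \bar F_{ij}(v_i) \le F_j(\bar c_{ij})$, not merely $q<1$, since $\bar F_{ij}$ puts mass at $\infty$.
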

\begin{proof}
    Let $(\mathbf{v}, \mathbf{s})$ be a fixed realization when $(i,j) \in M^*(\mathbf{v}, \mathbf{s})$.
    By the definition of this \sdma, the allocation $x_{ij}$ is determined by the \btprice\ subroutine, i.e.,
    \[x_{ij} = x^{\btprice}_{ij}(\bar{F}_{ij}, s_j, v_i).\]
     Let $p^{\btprice}_j$ be the seller's threshold price in the standalone \btprice\ (i.e., the max $s_j$ that wins). Let $p^s_j$ be the seller's threshold price in the full \sdma\ (i.e., the max $s_j$ that wins).
    Reporting any cost $s_j > p^{\btprice}_j$ cannot win the \btprice\ allocation. Since winning the \sdma\ requires winning the \btprice\ allocation, any cost $s_j > p^{\btprice}_j$ also cannot win the \sdma. This means that the \sdma's threshold price must be weakly less than the \btprice's threshold price:
    \[ p^s_j \le p^{\btprice}_j. \]
     Since $x_{i} = x^{\btprice}_{ij}$ and $p^s_j \le p^{\btprice}_j$, it follows that
     \[
     \pi^{i}_{\sdma(B)}(\mathbf{v}, \mathbf{s}) = x_{ij} \cdot (v_i - p^s_j) \geq  x^{\btprice}_{ij} \cdot (v_i - p^{\btprice}_j) =  \pi^{i}_{\btprice}(\bar{F}_{ij}, s_j, v_i). \qedhere
     \]
\end{proof}

\subsection{Incentive and Profit Guarantees for the Seller-Run Auction \texorpdfstring{\sdma(S)}{MA(S)}}
We now prove the buyers' incentive properties and the profit guarantees in the ``seller-run'' case. This case is more complex than the buyer-run case because a buyer's report has a dual role in the modified bilateral trade instance: it not only serves as their bid but also endogenously changes the modified distribution, $\bar{F}$. To handle this complexity, we must impose a mild requirement on the \btprice$^S$\ mechanisms that the seller can run. We then show that this requirement is satisfied by the \doublequantile\ family studied in the bilateral trade literature. We first state this desired property.
\begin{definition}[Cap-Monotonicity]\label{def:btproperty}
Let $F$ be the original seller's distribution, and let $F_\kappa(x) = F(\min(x, \kappa))$ be the distribution $F$ ``censored'' at $\kappa$ (i.e., any true cost $c$ with $c > \kappa$ is treated as $\infty$). A \btprice\ rule $x^{\btprice}$ is \emph{\property} if, for any two caps $\kappa_1 \le \kappa_2$ as well as any fixed buyer bid $b_i$ and seller cost $c_j$, it holds that
\[
x^{\btprice}(F_{\kappa_1}, b_i, c_j) = 1 \implies x^{\btprice}(F_{\kappa_2}, b_i, c_j) = 1.
\]
\end{definition}
{In simple words, cap-monotonicity means that if a trade is successful under a strict cost-limit, it must remain successful if that limit is loosened.
The ``cap'' represents the maximum cost the market can include a seller while still keeping the overall matching optimal. This property ensures that making the market more flexible, by increasing the allowable cost for a seller, never accidentally breaks a trade that was already allocating. }

The cap-monotonicity property induces the following price relations of a \btprice\ across instances.

\begin{lemma}\label{lem:threshold_monotonicity}Let a \btprice\ be both \property\ and DSIC-B. For a fixed seller bid $c_j$, let $T(\kappa_1, c_j)$ and $T(\kappa_2, c_j)$ be the respective threshold prices (lowest bid to win for the buyer) for the \btprice\ games in environments $F_{\kappa_1}$ and $F_{\kappa_2}$, where $\kappa_1 \le \kappa_2$. Then, the threshold for the ``low cap'' game is (weakly) higher than the threshold for the ``high cap'' game:
\[
T(\kappa_1, c_j) \ge T(\kappa_2, c_j).
\]
\end{lemma}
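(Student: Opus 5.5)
The plan is to argue directly from the definition of a threshold price, using that DSIC-B makes the allocation in each capped instance a threshold rule in the buyer's bid. Fix the seller bid $c_j$ and the two caps $\kappa_1 \le \kappa_2$. By \cref{thm:myerson} and \cref{def:threshold_payment}, the buyer's allocation in the \btprice\ game on $F_{\kappa}$ is monotone non-decreasing in $b_i$ and is a step function. The threshold $T(\kappa, c_j)$ is therefore the infimum of the winning set
\[
W(\kappa) \equiv \{\, b \mid x^{\btprice}(F_{\kappa}, b, c_j) = 1 \,\}.
\]
By monotonicity in $b$, $W(\kappa)$ is an up-ray: it has the form $[T(\kappa,c_j),\infty)$ or $(T(\kappa,c_j),\infty)$, and it may be empty, in which case we take $T = +\infty$.

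The key step is the set inclusion $W(\kappa_1) \subseteq W(\kappa_2)$. Take any $b \in W(\kappa_1)$, so $x^{\btprice}(F_{\kappa_1}, b, c_j) = 1$. Cap-monotonicity (\cref{def:btproperty}) is applied with the same buyer bid $b$ and seller cost $c_j$, and gives $x^{\btprice}(F_{\kappa_2}, b, c_j) = 1$, i.e., $b \in W(\kappa_2)$. Since the infimum of a subset is at least the infimum of the superset, we get $T(\kappa_1, c_j) = \inf W(\kappa_1) \ge \inf W(\kappa_2) = T(\kappa_2, c_j)$. The convention $\inf \varnothing = +\infty$ covers the degenerate case where the buyer can never win under the low cap, and the inequality then holds trivially.

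I do not expect a real obstacle; the only care needed is bookkeeping. First, cap-monotonicity must be applied pointwise at every bid $b$ with the same $c_j$, which is exactly how \cref{def:btproperty} is quantified. Second, the threshold must be identified with an infimum of the winning set, which is where DSIC-B is used, via monotonicity and \cref{def:threshold_payment}. The empty-set and $-\infty$ cap cases (where $F_\kappa$ places all mass at $\infty$) fit the same argument without special treatment.
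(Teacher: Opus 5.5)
Your proof is correct and is essentially the paper's argument: apply cap-monotonicity pointwise to get that the winning-bid set under cap $\kappa_1$ is contained in the one under $\kappa_2$, then compare their infima. Your explicit convention $\inf\varnothing = +\infty$ for the case where the buyer can never win is a harmless addition that the paper leaves implicit.
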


\begin{proof}
Let $W_1 = \{ b' \mid x^{\btprice}(F_{\kappa_1}, b', c_j) = 1 \}$ and $W_2 = \{ b' \mid x^{\btprice}(F_{\kappa_2}, b', c_j) = 1 \}$ be the set of all winning bids for the $F_{\kappa_1}$ and $F_{\kappa_2}$ games, respectively.

Consider any bid $b \in W_1$. By definition, $x^{\btprice}(F_{\kappa_1}, b, c_j) = 1$.
Since $\kappa_1 \le \kappa_2$ and the \btprice\ rule is \property, the buyer must also win in the $F_{\kappa_2}$ instance with the same bid $b$. Thus, $x^{\btprice}(F_{\kappa_2}, b, c_j) = 1$, which implies $b \in W_2$. This implies that $W_1 \subseteq W_2$.

Since the \btprice\ is DSIC-B, its allocation rule is monotone in the buyer's bid. The threshold price $T(\kappa, c_j)$ is the infimum of the set of winning bids $W$.
Because $W_1 \subseteq W_2$, the infimum of the subset $W_1$ must be greater than or equal to the infimum of the superset $W_2$.
Therefore, $\inf(W_1) \ge \inf(W_2)$, which implies $T(\kappa_1, c_j) \ge T(\kappa_2, c_j)$.
\end{proof}

We now present how buyer's value changes the modified seller distribution $\bar{F}_{ij}$ in \sdma. In particular, we show that a higher bid induces a higher cap on the distribution.

\begin{lemma}
\label{lem:capmonotone}
Consider a fixed profile of all reports from other agents, $(\mathbf{b}_{-i}, \mathbf{c})$. Let $\bar{c}_{ij}(b_i)$ denote the critical cost for pair $(i,j)$ computed by the \sdma(S)\ as a function of buyer $i$'s reported value, $b_i$.
For any two reported values $b'_i \ge b_i$, it holds that
\[ \bar{c}_{ij}(b'_i) \ge \bar{c}_{ij}(b_i). \]
\end{lemma}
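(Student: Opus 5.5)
We argue directly from \cref{def:criticalcost} together with the monotonicity of the MWM rule in \cref{lem:mwm_monotonicity}. Fix $(\mathbf{b}_{-i}, \mathbf{c})$ and write the set of ``good'' costs of seller $j$ at buyer bid $b_i$ as
\[
S(b_i) = \{ c \mid (i,j) \in M^*((b_i,\mathbf{b}_{-i}), (c, \mathbf{c}_{-j})) \},
\]
so that $\bar{c}_{ij}(b_i) = \sup S(b_i)$, with $\sup\varnothing = -\infty$. The plan is to show the set inclusion $S(b_i) \subseteq S(b'_i)$ whenever $b'_i \ge b_i$. The claimed inequality then follows by taking suprema.

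\textbf{Trivial cases.} If $b'_i = b_i$ there is nothing to prove. If $S(b_i) = \varnothing$, then $\bar{c}_{ij}(b_i) = -\infty$ and the inequality holds trivially.

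\textbf{Main step.} Take $b'_i > b_i$ and any $c \in S(b_i)$. By definition, $(i,j)$ is in the unique MWM at the profile $((b_i,\mathbf{b}_{-i}),(c,\mathbf{c}_{-j}))$. Apply part 1 of \cref{lem:mwm_monotonicity} with the sellers' profile held fixed at $(c,\mathbf{c}_{-j})$ and buyer $i$'s bid raised to $b'_i$. It gives that $(i,j)$ remains in the MWM at $((b'_i,\mathbf{b}_{-i}),(c,\mathbf{c}_{-j}))$, i.e., $c \in S(b'_i)$.

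One check is needed here. \cref{lem:mwm_monotonicity} is stated for an arbitrary profile $(\mathbf{b},\mathbf{s})$, so it applies with seller $j$'s cost replaced by any hypothetical $c$. The fixed lexicographic tie-breaking keeps the MWM unique at every such profile, so membership of $(i,j)$ is well defined throughout.

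\textbf{Conclusion and expected obstacle.} Taking suprema of $S(b_i) \subseteq S(b'_i)$ gives $\bar{c}_{ij}(b'_i) \ge \bar{c}_{ij}(b_i)$. I do not expect a real obstacle. The only delicate points are the two just handled: the empty-set convention (covered by $-\infty$), and confirming that the earlier monotonicity lemma is used pointwise in the auxiliary cost $c$, rather than only at the realized cost $c_j$.
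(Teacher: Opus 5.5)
Your proof is correct and rests on the same key step as the paper's, namely part 1 of \cref{lem:mwm_monotonicity} applied at a hypothetical cost of seller $j$. Your pointwise inclusion $S(b_i)\subseteq S(b'_i)$ is in fact slightly more careful than the paper's argument. The paper applies the lemma directly at $c^*=\bar{c}_{ij}(b_i)$ and so assumes the supremum is attained. That can fail under the fixed tie-breaking: with a single edge and ties broken toward $\varnothing$, $\bar{c}_{ij}=b_i$ but $(i,j)\notin M^*$ at cost $b_i$. The paper's step is also meaningless when $c^*=-\infty$, a case you handle explicitly.
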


\begin{proof}
Let $b'_i \ge b_i$. Let $c^* = \bar{c}_{ij}(b_i)$ be the critical cost corresponding to the lower value, $b_i$. 
Consider the profile $P = ((b_i, \mathbf{b}_{-i}), (c^*, \mathbf{c}_{-j}))$. By the definition of critical cost (\cref{def:criticalcost}), the pair $(i,j)$ is in the maximum-weight matching, $M^*$, for this profile $P$.

Now, consider a new profile $P'$ where only the buyer's value is increased from $b_i$ to $b'_i$, i.e., $P' = ((b'_i, \mathbf{b}_{-i}), (c^*, \mathbf{c}_{-j}))$. By \cref{lem:mwm_monotonicity}, since $(i,j) \in M^*$ at profile $P$, it must also be that $(i,j) \in M'^*$ at profile $P'$.

The new critical cost $\bar{c}_{ij}(b'_i)$ is the \emph{supremum} of all such costs for which $(i,j)$ is in the matching (given $b'_i$). Therefore, $\bar{c}_{ij}(b'_i)$ must be at least as large as $c^*$. Substituting $c^* = \bar{c}_{ij}(b_i)$, we have
\[ \bar{c}_{ij}(b'_i) \ge \bar{c}_{ij}(b_i). \qedhere\]
\end{proof}





\begin{lemma}\label{lem:sellermonotoncity}Given a \btprice$^S$\ that is \property\ (\cref{def:btproperty}) and DSIC-B (i.e., monotone in the buyer's bid for any fixed seller distribution), the allocation rule of \sdma(S)\ is monotone for the buyers.
\end{lemma}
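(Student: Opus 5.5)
We fix buyer $i$, all other buyers' reports $\mathbf{b}_{-i}$ and all sellers' reports $\mathbf{c}$, and show that if buyer $i$ is allocated under $\sdma(S)$ at bid $b_i$, then he is still allocated at any $b'_i > b_i$. The proof mirrors \cref{lem:buyermonotone}, with one extra step for the cap. Allocation at $b_i$ requires two conditions: (i) buyer $i$ is matched in $M^*((b_i,\mathbf{b}_{-i}),\mathbf{c})$ to some seller $j$; and (ii) $x^{\btprice^S}(\bar F_{ij}(\cdot \mid b_i), b_i, c_j) = 1$, where $\bar F_{ij}(\cdot\mid b_i) = F_{\kappa_1}$ with $\kappa_1 = \bar c_{ij}(b_i)$.

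First, we check that the matching condition persists. By \cref{lem:mwm_monotonicity}, since $(i,j)\in M^*$ at $b_i$, we also have $(i,j)\in M^*$ at $b'_i$. By \cref{lem:buyerinauction}, $j$ is the only seller $i$ can ever be matched to, so the relevant bilateral instance at $b'_i$ is still the one for pair $(i,j)$.

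Second, we handle the bilateral allocation, where the buyer's bid plays two roles at once. We separate them by moving through an intermediate instance. Let $\kappa_2 = \bar c_{ij}(b'_i)$; by \cref{lem:capmonotone}, $\kappa_2 \ge \kappa_1$.
\begin{itemize}
\item \emph{Bid effect.} Keep the distribution fixed at $F_{\kappa_1}$ and raise the bid from $b_i$ to $b'_i$. Since $\btprice^S$ is DSIC-B, it is monotone in the buyer's bid for a fixed distribution, so $x^{\btprice^S}(F_{\kappa_1}, b'_i, c_j)=1$.
\item \emph{Cap effect.} Keep the bid at $b'_i$ and the seller cost at $c_j$, and loosen the cap from $\kappa_1$ to $\kappa_2$. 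Cap-monotonicity (\cref{def:btproperty}) gives $x^{\btprice^S}(F_{\kappa_2}, b'_i, c_j)=1$.
\end{itemize}
Together with the first step, buyer $i$ is allocated at $b'_i$, which proves monotonicity.

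The main obstacle is exactly this coupling between the bid and the distribution. Neither DSIC-B nor cap-monotonicity alone suffices, and the order of the two effects matters. We apply cap-monotonicity at the fixed bid $b'_i$, because \cref{def:btproperty} holds for any fixed buyer bid. The one point to verify is that the cap ordering goes in the direction the definition requires, which \cref{lem:capmonotone} supplies.

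A minor detail is the boundary where $\kappa_1$ could be $-\infty$. This cannot occur here: $i$ is matched to $j$ at $b_i$, so by \eqref{eq:maxmatchingequivalence} we have $c_j \le \kappa_1$, and hence $\kappa_1$ is finite or $+\infty$.
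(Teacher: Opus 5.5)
Your proof is correct and follows the paper's argument. \cref{lem:mwm_monotonicity} keeps $(i,j)$ in $M^*$, \cref{lem:capmonotone} orders the caps, and cap-monotonicity together with DSIC-B carries the bilateral allocation from $(F_{\kappa_1}, b_i)$ to $(F_{\kappa_2}, b'_i)$. The one difference is that the paper takes the two steps in the opposite order: cap-monotonicity at the fixed bid $b_i$, then DSIC-B on $F_{\kappa_2}$. That shows your remark that ``the order of the two effects matters'' is inaccurate, since either order works under these hypotheses.
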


\begin{proof}Let $x_i(\bids)$ denote the allocation for buyer $i$ given a value profile $\bids$. We must show that if $x_i(b_i, \bids_{-i}) = 1$, then $x_i(b'_i, \bids_{-i}) = 1$ for any $b'_i > b_i$, fixing s $\mathbf{c}$ and $\bids_{-i}$. Assume $x_i(b_i, \bids_{-i}) = 1$. By the definition of the \sdma(S), two conditions were met: (1) Buyer $i$ was in the unique maximum-weight matching, $M^*$, with partner $j$ (i.e., $(i,j) \in M^*(b_i, \mathbf{c})$). (2) Buyer $i$ was allocated by the bilateral allocation rule $\btprice^S(\bar{F}_{ij}(b_i), b_i, c_j)$.

Consider the buyer reporting $b'_i > b_i$. By the monotonicity of the MWM (\cref{lem:mwm_monotonicity}), since $(i,j) \in M^*(b_i, \mathbf{s})$ and $b'_i > b_i$, the same pair $(i,j)$ must also be in the new matching, $M^*(b'_i, \mathbf{c})$.
In addition,
by \cref{lem:capmonotone}, we have $\bar{c}_{ij}(b'_i) \ge \bar{c}_{ij}(b_i)$, i.e., a higher bid $b'_i$ induces a higher cap on the distribution.
Since the \btprice$^S$\ is \property\ (\cref{def:btproperty}), and buyer $i$ was allocated by the bilateral allocation rule $\btprice^S(\bar{F}_{ij}(b_i), b_i, c_j)$ (using the notation $F_{\bar{c}_{ij}(b_i)} = \bar{F}_{ij}(b_i)$), we therefore have buyer $i$ is also allocated by $\btprice^S(\bar{F}_{ij}(b'_i), b_i, c_j)$. Finally, since the \btprice$^S$\ is also DSIC-B, and $b'_i > b_i$, we have buyer $i$ is also allocated by $\btprice^S(\bar{F}_{ij}(b'_i), b'_i, c_j)$. Since both conditions are met, the buyer will be allocated at bid $b'_i$.
\end{proof}

We now show that the family \doublequantile\ is \property. 

\begin{lemma}\label{lem:doublequantileiscapmonotone}
The \doublequantile\ bilateral trade rule is \property\ (\cref{def:btproperty}).
\end{lemma}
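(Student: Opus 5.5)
We need to show that for caps $\kappa_1\le\kappa_2$, fixed buyer bid $b$ and seller cost $c$, a trade under $F_{\kappa_1}$ implies a trade under $F_{\kappa_2}$. The plan is to compare the posted price $p^b(\kappa)=F_\kappa^{-1}\bigl(\min\{1,F_\kappa(c)/\lambda\}\bigr)$ across the two caps and show $p^b(\kappa_2)\le p^b(\kappa_1)$. Since \doublequantile\ allocates exactly when $b\ge p^b$, this suffices.

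\textbf{Reduction to $c\le\kappa_1$.} Trade under $F_{\kappa_1}$ requires the seller to be effectively present: by \cref{def:modified_distribution}, a cost above the cap is treated as $\infty$, and no finite price can meet it. So assume $c\le \kappa_1\le\kappa_2$. Then $F_{\kappa_1}(c)=F_{\kappa_2}(c)=F(c)$, and the target quantile $q:=\min\{1,F(c)/\lambda\}$ is the same for both caps.

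\textbf{Comparing generalized inverses.} Use the convention $F_\kappa^{-1}(q)=\inf\{x: F_\kappa(x)\ge q\}$. Since $\min(x,\kappa_1)\le\min(x,\kappa_2)$ and $F$ is nondecreasing, we have $F_{\kappa_1}(x)\le F_{\kappa_2}(x)$ pointwise. Hence $\{x:F_{\kappa_1}(x)\ge q\}\subseteq\{x:F_{\kappa_2}(x)\ge q\}$, which gives $F_{\kappa_2}^{-1}(q)\le F_{\kappa_1}^{-1}(q)$.

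It is worth checking the edge case directly. If $q>F(\kappa_1)$, the set for $\kappa_1$ is empty, the price is $+\infty$, and there is no trade under $\kappa_1$, so the implication is vacuous. Otherwise the infimum for $\kappa_1$ is attained below $\kappa_1$, and the same $x$ witnesses the $\kappa_2$ set. Therefore $b\ge p^b(\kappa_1)\ge p^b(\kappa_2)$, and the buyer trades under $F_{\kappa_2}$.

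\textbf{Main obstacle.} The delicate point is the treatment of the censored mass: costs above the cap become $\infty$, so $F_\kappa$ is not a proper CDF (its total mass $F(\kappa)$ may be less than $1$), and the inverse may be infinite. If $q=1$ but $F(\kappa)<1$, we use the convention that the price is $\infty$ and no trade occurs. This convention is consistent with the argument, since $F(\kappa_1)\le F(\kappa_2)$ means that whenever $\kappa_1$ yields a finite price, so does $\kappa_2$. I would also state explicitly that a seller whose cost exceeds the cap never trades. This is consistent with the Meta-Auction, which only calls the bilateral rule when $c\le\bar c_{ij}$ by \eqref{eq:maxmatchingequivalence}.
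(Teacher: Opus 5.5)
Your proof is correct and takes the same route as the paper. Both arguments restrict to a seller cost at or below the lower cap, so that both caps give the same target quantile $q=\min\{1,F(c)/\lambda\}$, and then compare the two generalized inverses. The paper shows the two prices coincide (both equal $F^{-1}(q)$ since $q\le F(\kappa_1)$). You instead use the pointwise dominance $F_{\kappa_1}\le F_{\kappa_2}$ to get $F_{\kappa_2}^{-1}(q)\le F_{\kappa_1}^{-1}(q)$, which is all that is needed.

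The only real difference is the reduction step. The paper derives $c<\kappa_1$ from finiteness of the price via $F(\kappa_1)/\lambda>F(\kappa_1)$, which tacitly requires $0<F(\kappa_1)<1$. Your explicit convention that a cost above the cap never trades matches how the Meta-Auction and the main proof use the rule. It also handles the degenerate case $F(\kappa_1)=0$, where the literal formula gives target quantile $0$, lets trade occur at a trivial price, and would break cap-monotonicity.
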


\begin{proof}
Let $p^b(\kappa) = F_{\kappa}^{-1}(\min\{1, F_{\kappa}(c_j)/\lambda\})$ be the price posted by \doublequantile\ for a distribution capped at $\kappa$.
Assume $x^{\btprice}(F_{\kappa_1}, b_i, c_j) = 1$. This means that $b_i \ge p^b(\kappa_1)$, and in particular that $p^b(\kappa_1)$ is finite.

 $p^b(\kappa_1) < \infty$ further implies $c_j < \kappa_1$. To see this, suppose by contradiction that $c_j \ge \kappa_1$. Then $F_{\kappa_1}(c_j) = F(\kappa_1)$. The target quantile would be $F(\kappa_1)/\lambda > F(\kappa_1)$ (since $\lambda < 1$), making $p^b(\kappa_1) = \infty$, a contradiction.

Since $c_j < \kappa_1$ and $\kappa_1 \le \kappa_2$, we have $c_j < \kappa_2$.
This means 
$$F_{\kappa_1}(c_j) = F(\min(c_j, \kappa_1)) = F(c_j) = F(\min(c_j, \kappa_2))  = F_{\kappa_2}(c_j).$$
The target quantiles are therefore identical: $q^* = \min\{1, F(c_j)/\lambda\}$.

 $p^b(\kappa_1) < \infty$ also implies $q^* \le F(\kappa_1)$.
Since $\kappa_1 \le \kappa_2$, we have $F(\kappa_1) \le F(\kappa_2)$.
Thus, $q^* \le F(\kappa_1) \le F(\kappa_2)$, together with the definition of $F_{\kappa}$ we have $F(q^*) = F_{\kappa_2}(q^*) = F_{\kappa_1}(q^*)$
Both inverse functions evaluate on the original $F$: 
$p^b(\kappa_1) = F^{-1}(q^*)$ and $p^b(\kappa_2) = F^{-1}(q^*)$. Thus the posted prices are identical.
Given $b_i \ge p^b(\kappa_1)$, we also have $b_i \ge p^b(\kappa_2)$, which implies $x^{\btprice}(F_{\kappa_2}, b_i, c_j) = 1$.
\end{proof}

Finally we provide the profit guarantee. As before, we lower-bound the total expected profit the seller gets from running the Meta-Auction with a \property\ \btprice$^S$\ by the sum of the profit of each of the individual modified instances, in every realization.
\begin{lemma} \label{lem:sellerprofit}
Let the \sdma(S)\ be run by the seller, using a \btprice$^S$\ that is \property\ and DSIC-B (e.g., \doublequantile), to determine the allocation $x_{ij}$. 
Let $p^b_i$ be the buyer's threshold price in the \sdma(S)\ and  $\pi^j_{\sdma(S)}(\mathbf{b}, \mathbf{c})$ be the profit of seller $j$ from trading with $i$ in this \sdma(S). 
Let $\pi^j_{\btprice}(\bar{F}_{ij}(\mathbf{b}, \mathbf{c}_{-j}), b_i, c_j)$ be the profit seller $j$ would get from the standalone \btprice$^S$, priced at its own threshold $p^{\btprice}_i$.

For any realization $(\mathbf{b}, \mathbf{c})$ where $(i,j) \in M^*(\mathbf{b}, \mathbf{c})$, we have
\[ \pi^j_{\sdma(S)}(\mathbf{b}, \mathbf{c}) \ge \pi^j_{\btprice}(\bar{F}_{ij}(\mathbf{b}, \mathbf{c}_{-j}), b_i, c_j). \]
\end{lemma}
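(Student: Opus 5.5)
Fix a realization $(\mathbf{b},\mathbf{c})$ with $(i,j)\in M^*(\mathbf{b},\mathbf{c})$. The plan mirrors the proof of \cref{lem:buyerprofit}, with one extra step to handle the buyer's bid moving the cap $\bar{c}_{ij}$. First, the allocations coincide at the true profile. Since $(i,j)\in M^*$, \sdma(S) sets $x_{ij}=x^{\btprice}_{ij}(\bar{F}_{ij}(\mathbf{b},\mathbf{c}_{-j}),b_i,c_j)$ by construction. Both profits are therefore of the form $x_{ij}\cdot(\text{buyer's threshold price}-c_j)$. When $x_{ij}=0$ both sides are zero, so we may assume $x_{ij}=1$. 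It then suffices to show $p^b_i \ge p^{\btprice}_i$, where $p^b_i$ is the infimum over bids $b'$ with which buyer $i$ wins in \sdma(S), holding $(\mathbf{b}_{-i},\mathbf{c})$ fixed. By \cref{lem:sellermonotoncity}, this allocation is monotone, so the threshold is well defined.

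Second, I compare thresholds by restricting the set of winning bids. Let $b'$ be any bid with which buyer $i$ wins in \sdma(S). By \cref{lem:buyerinauction} his partner must be $j$, so he wins exactly when $(i,j)\in M^*(b',\cdot)$ and $\btprice^S(\bar{F}_{ij}(b'),b',c_j)=1$. The threshold of the standalone instance is taken with the cap $\bar{c}_{ij}(b_i)$ computed at the true bid. The difficulty is that a bid $b'<b_i$ produces a possibly different cap. By \cref{lem:capmonotone}, $\bar{c}_{ij}(b')\le \bar{c}_{ij}(b_i)$. Cap-monotonicity (\cref{def:btproperty}) then transfers the win from the lower cap to the higher one: $\btprice^S(F_{\bar{c}_{ij}(b_i)},b',c_j)=1$. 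So $b'$ is also a winning bid in the standalone instance, giving $b'\ge p^{\btprice}_i$.

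For bids $b'\ge b_i$ the inequality $b'\ge b_i\ge p^{\btprice}_i$ is immediate, because $b_i$ itself wins in the standalone instance. Hence every winning bid of \sdma(S) is at least $p^{\btprice}_i$, and taking the infimum gives $p^b_i\ge p^{\btprice}_i$, equivalently $p^b_i \ge T(\bar{c}_{ij}(b_i),c_j)$ in the notation of \cref{lem:threshold_monotonicity}. That lemma encodes the same comparison at the level of thresholds, and I may cite it instead of the direct argument. This is the main obstacle: unlike the buyer-run case, the instance itself depends on the deviating bid, and the sign of $b'-b_i$ is exactly what makes \cref{lem:capmonotone} point the right way. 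Finally, $\pi^j_{\sdma(S)}=x_{ij}(p^b_i-c_j)\ge x^{\btprice}_{ij}(p^{\btprice}_i-c_j)=\pi^j_{\btprice}$.
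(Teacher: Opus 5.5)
Your proof is correct and follows essentially the paper's argument: reduce to $x_{ij}=1$, note via \cref{lem:buyerinauction} that any winning deviation keeps partner $j$, and use \cref{lem:capmonotone} with cap-monotonicity to get $p^b_i \ge p^{\btprice}_i$. The only difference is cosmetic. You show directly that every \sdma(S) winning bid below $b_i$ also wins the standalone instance, whereas the paper argues the contrapositive (every bid below $p^{\btprice}_i$ loses) by citing \cref{lem:threshold_monotonicity}.
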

\begin{proof}
Let $p_i^{\sdma}$ be the buyer's threshold price in the \sdma(S), and $p_i^{\btprice}$ be the buyer's threshold price in the \btprice$^S$. The seller's profit is $x \cdot (p - c_j)$.
For a realization $(v,c)$ where $(i,j) \in M^{*}(v,c)$, the \sdma(S) Phase 1 condition is met. Thus, the \sdma{} allocation $x_{ij}^{\sdma}$ is determined only by the Phase 2 \btprice$^S$ subroutine, making it identical to the standalone \btprice$^S$ allocation: $$x_{ij}^{\sdma}(v,c) = x^{\btprice}(\overline{F}_{ij}(\cdot), v_i, c_j) = x_{ij}^{\btprice}(v,c).$$
Let this be $x_{ij}$.
The profits are $\pi_{\sdma(S)}^{j} = x_{ij} \cdot (p_i^{\sdma} - c_j)$ and $\pi_{\btprice}^{j} = x_{ij} \cdot (p_i^{\btprice} - c_j)$.
If $x_{ij} = 0$, the profits are zero and the inequality holds. If $x_{ij} = 1$, we must show $p_i^{\sdma} \ge p_i^{\btprice}$.

Let $\kappa(b_i) = \overline{c}_{ij}(b_i, \bids_{-i}, c_{-j})$ be the critical cost cap induced by bid $b_i$. Let $T(\kappa, c_j)$ be the buyer's threshold for a \btprice$^{S}$ game with cap $\kappa$.
The standalone \btprice$^{S}$ threshold is $p_i^{\btprice} = T(\kappa(v_i), c_j)$.
The \sdma(S) threshold $p_i^{\sdma}$ is the minimum bid $b_i$ that wins both phases: (1) $(i,j) \in M^{*}$ and (2) $b_i \ge T(\kappa(b_i), c_j)$.

Since $x_{ij}=1$, the true value $v_i$ wins the standalone \btprice$^{S}$, so $v_i \ge p_i^{\btprice}$.
Consider any bid $b_i' < p_i^{\btprice}$. This implies $b_i' < v_i$.
By \cref{lem:capmonotone}, a lower bid induces a (weakly) smaller cap: $\kappa(b_i') \le \kappa(v_i)$.
By \cref{lem:threshold_monotonicity}, a smaller cap induces a (weakly) higher threshold: $T(\kappa(b_i'), c_j) \ge T(\kappa(v_i), c_j)$.
Chaining these facts, we have: $$b_i' < p_i^{\btprice} = T(\kappa(v_i), c_j) \le T(\kappa(b_i'), c_j).$$
This shows $b_i' < T(\kappa(b_i'), c_j)$, meaning the bid $b_i'$ would fails the induced \sdma's Phase 2 condition.
Since any bid $b_i' < p_i^{\btprice}$ is a losing bid in the \sdma(S), the \sdma(S) threshold $p_i^{\sdma}$ must be at least $p_i^{\btprice}$.
Thus, $\pi_{\sdma(S)}^{j} \ge \pi_{\btprice}^{j}$.
\end{proof}




\section{{Proof of the Main Result via} GFT Decomposition}\label{sec:gft_decomposition}
In this section, we put everything together and prove our main result: an approximation of the optimal expected GFT. A central challenge in analyzing the welfare properties of complex matching markets is the interdependency between all agents. The gains-from-trade (GFT) generated by one pair $(i,j)$ is conditional on that pair being part of the first-best matching $M^*$, which in turn depends on the values and costs of all other agents.

The following lemma provides an analytical tool to bridge this gap. We demonstrate that the total expected GFT of the entire market can be additively decomposed into a sum. Each term in this sum corresponds to the expected GFT of a \emph{modified bilateral trade instance} between a pair $(i,j)$. This decomposition is the key to our proof. Crucially, these modified instances are the exact same instances (defined by the modified distributions $\bar{F}_{ij}$) that serve as the input to the \btprice\ subroutines in our Meta-Auction framework. This allows for the direct application of profit--GFT guarantees from the well-understood bilateral trade literature to our broader market setting.

\begin{definition}[Total Expected GFT]\label{def:gft}
\[ \gft \equiv \E_{\mathbf{v},\mathbf{c}}\left[\sum_{(i,j) \in M^*(\mathbf{v}, \mathbf{c})} (v_i - c_j)\right]. \]
\end{definition}

\begin{definition}[GFT of the Modified Bilateral Trade]\label{def:btinstance}
We define the optimal GFT from the \emph{modified bilateral trade instance} for a buyer $i$ with realized value $v_i$ and seller $j$'s cost drawn from distribution $\bar{F}_{ij}(\mathbf{v}, \mathbf{c}_{-j})$ as $\gft_{i,j}(\mathbf{v}, \mathbf{c}_{-j})$
\[
\gft_{i,j}(\mathbf{v}, \mathbf{c}_{-j}) \equiv \int_{q=0}^{1} \left(v_i - \bar{F}_{ij}^{-1}(q \mid \mathbf{v}, \mathbf{c}_{-j})\right)^+ \dif q = \int_{q=0}^{\bar{q}_{ij}(\mathbf{v}, \mathbf{c}_{-j})} \left(v_i - \bar{F}_{ij}^{-1}(q \mid \mathbf{v}, \mathbf{c}_{-j})\right) \dif q,
\]
where $\bar{q}_{ij}(\cdot)$ is the quantile of $\bar{c}_{ij}(\cdot)$ with respect to $F_j$. The equality comes from the fact that for any $q > \bar{q}_{ij}(\mathbf{v}, \mathbf{c}_{-j})$, we have $\bar{F}_{ij}(q) = \infty$ and $v_i \geq \bar{q}_{ij}(\mathbf{v}, \mathbf{c}_{-j})$ by the definition of $\bar{q}$.
\end{definition}

\begin{lemma}\label{lem:decomposition}
The total expected GFT, denoted $\gft$, is equivalent to the sum over all edges $E$ of the expected GFT from the \emph{modified bilateral trade instances}:
\[
\gft = \sum_{(i,j) \in E} \E_{\mathbf{v}, \mathbf{c}_{-j}} \left[ \gft_{i,j}(\mathbf{v}, \mathbf{c}_{-j}) \right].
\]
\end{lemma}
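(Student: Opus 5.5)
The plan is to write the total GFT as a sum over edges, then use the characterization \eqref{eq:maxmatchingequivalence} to rewrite each edge's contribution as an integral over seller $j$'s cost alone. First, by linearity of expectation and \cref{def:gft},
\[
\gft = \sum_{(i,j)\in E} \E_{\mathbf{v},\mathbf{c}}\left[\indicator[(i,j)\in M^*(\mathbf{v},\mathbf{c})]\,(v_i-c_j)\right].
\]
The sum is finite, so the only requirement is integrability of each term, which I will assume (it follows from $\gft^*<\infty$).

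Fix an edge $(i,j)$. Because the distributions are independent, I can write $\E_{\mathbf{v},\mathbf{c}} = \E_{\mathbf{v},\mathbf{c}_{-j}}\E_{c_j\sim F_j}$. Now apply \eqref{eq:maxmatchingequivalence} with truthful bids: $(i,j)\in M^*(\mathbf{v},\mathbf{c})$ holds if and only if $c_j\le \bar{c}_{ij}(\mathbf{v},\mathbf{c}_{-j})$. The threshold does not depend on $c_j$. So the inner expectation becomes
\[
\E_{c_j\sim F_j}\left[\indicator[c_j\le \bar{c}_{ij}]\,(v_i-c_j)\right]=\int_0^{\bar q_{ij}} \bigl(v_i - F_j^{-1}(q)\bigr)\dif q .
\]
This uses the quantile representation $c_j=F_j^{-1}(q)$ with $q$ uniform on $[0,1]$, and $\bar q_{ij}=F_j(\bar c_{ij})$. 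If $\bar c_{ij}=-\infty$, both sides are zero.

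It remains to match this with \cref{def:btinstance}. For $q\le \bar q_{ij}$ we have $\bar F_{ij}^{-1}(q)=F_j^{-1}(q)$, because $\bar F_{ij}=F_j$ below the cap. So the integral above is exactly the second expression in \cref{def:btinstance}. The first expression there carries a positive part. For it to agree with the second, I must check that $v_i - F_j^{-1}(q)\ge 0$ on $[0,\bar q_{ij}]$, i.e.\ that $v_i\ge \bar c_{ij}$. This holds because whenever $(i,j)$ lies in the maximum-weight matching we have $v_i-c_j\ge 0$. Otherwise, deleting the edge from $M^*$ gives a matching that is still feasible (by downward closure) and strictly heavier, contradicting optimality. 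Hence every $c\le \bar c_{ij}$ with $(i,j)\in M^*$ satisfies $c\le v_i$, and taking the supremum gives $\bar c_{ij}\le v_i$. Above $\bar q_{ij}$, the censored cost is $\infty$ and contributes $0$. Summing over edges and taking $\E_{\mathbf{v},\mathbf{c}_{-j}}$ then gives the lemma.

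The main obstacle is handling boundary and tie issues in the quantile change of variables. The sup in $\bar c_{ij}$ may or may not be attained, so the set where $(i,j)\in M^*$ may be $\{c_j\le\bar c_{ij}\}$ or $\{c_j<\bar c_{ij}\}$. In addition, $F_j$ may have atoms, so the quantile of $\bar c_{ij}$ must be interpreted consistently. I would handle this by noting two things. First, the two versions of the set differ only on $\{c_j=\bar c_{ij}\}$. Second, on that event the integrand contributes only through an atom of $F_j$, and I would resolve it using the tie-breaking-consistent definition of $\bar q_{ij}$, choosing $\bar q_{ij}$ as $F_j(\bar c_{ij})$ or its left limit depending on whether the sup is attained. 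The rest of the argument is routine.
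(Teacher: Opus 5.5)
Your proof is correct and follows the paper's argument essentially step for step: linearity over edges, conditioning on $(\mathbf{v},\mathbf{c}_{-j})$, the characterization \eqref{eq:maxmatchingequivalence}, the quantile integral up to $\bar q_{ij}$, and swapping $F_j^{-1}$ for $\bar F_{ij}^{-1}$ there. Your downward-closure argument for $\bar c_{ij}\le v_i$ also justifies the positive-part form in \cref{def:btinstance} more cleanly than the paper does. On your boundary remark: the paper simply assumes the supremum in $\bar c_{ij}$ is attained, which is harmless only when $F_j$ has no atom at $\bar c_{ij}$. If you instead move $\bar q_{ij}$ to the left limit, you must also censor $\bar F_{ij}$ strictly below $\bar c_{ij}$. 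With the paper's inclusive censoring the identity fails when a tie at an atom is broken against $(i,j)$: with one buyer of value $2$ and two sellers of cost $1$ almost surely, each edge has modified GFT $1$, yet $\gft=1$.
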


\begin{proof}
We start with the definition of $\gft$ (\cref{def:gft}) and apply linearity of expectation:
\[
    \gft = \sum_{(i,j) \in E} \E_{\mathbf{v},\mathbf{c}}\left[ (v_i - c_j) \cdot \indicator[(i,j) \in M^*(\mathbf{v}, \mathbf{c})] \right].
\]
Now we apply the law of total expectation, conditioning on $(\mathbf{v}, \mathbf{c}_{-j})$ and taking the inner expectation over $c_j \sim F_j$:
\begin{equation}\label{eq:gftmain}
    \gft = \sum_{(i,j) \in E} \E_{\mathbf{v}, \mathbf{c}_{-j}} \left[ \E_{c_j \sim F_j} [ (v_i - c_j) \cdot \indicator[(i,j) \in M^*(\mathbf{v}, \mathbf{c})] \mid \mathbf{v}, \mathbf{c}_{-j} ] \right].
\end{equation}

Now we analyze the inner expectation, which we will call $I$. This term $I$ represents the \emph{actual} contribution of pair $(i,j)$ to the total GFT, given $(\mathbf{v}, \mathbf{c}_{-j})$.
\[
I = \E_{c_j \sim F_j} [ (v_i - c_j) \cdot \indicator[(i,j) \in M^*(\mathbf{v}, \mathbf{c})] \mid \mathbf{v}, \mathbf{c}_{-j} ].
\]
By \cref{eq:maxmatchingequivalence} we know that given $(\mathbf{v}, \mathbf{c}_{-j})$, the indicator is equivalent to $\indicator[c_j \le \bar{c}_{ij}(\mathbf{v}, \mathbf{c}_{-j})]$. Substitute this into $I$, which allows us to write the expectation as an integral with respect to the quantile, 
Let $\bar{q}_{ij}(\mathbf{v}, \mathbf{c}_{-j}) = F_j(\bar{c}_{ij}(\mathbf{v}, \mathbf{c}_{-j}))$.
\[
I = \int_{q=0}^{\bar{q}_{ij}(\mathbf{v}, \mathbf{c}_{-j})} \left(v_i - F_j^{-1}(q)\right) \dif q.
\]
Recall the definition of \emph{conditional modified distribution} (\cref{def:modified_distribution}). For any $x \le \bar{c}_{ij}(\mathbf{v}, \mathbf{c}_{-j})$, we have $\bar{F}_{ij}(x \mid \cdot) = F_j(x)$. This implies that for any quantile $q \in [0, \bar{q}_{ij}(\mathbf{v}, \mathbf{c}_{-j})]$, the inverse is identical: $\bar{F}_{ij}^{-1}(q \mid \cdot) = F_j^{-1}(q)$. We can therefore substitute $\bar{F}_{ij}^{-1}$ into the integral above without changing its value:
    \[
    I = \int_{q=0}^{\bar{q}_{ij}(\mathbf{v}, \mathbf{c}_{-j})} \left(v_i - \bar{F}_{ij}^{-1}(q \mid \mathbf{v}, \mathbf{c}_{-j})\right) \dif q.
    \]
    By \emph{GFT of the modified bilateral trade} (\cref{def:btinstance}),
    this integral $I$ is precisely $\gft_{i,j}(\mathbf{v}, \mathbf{c}_{-j})$.
    Finally, substitute this result ($I = \gft_{i,j}(\mathbf{v}, \mathbf{c}_{-j})$) back into \cref{eq:gftmain}:
    \[ \gft = \sum_{(i,j) \in E} \E_{\mathbf{v}, \mathbf{c}_{-j}} \left[ \gft_{i,j}(\mathbf{v}, \mathbf{c}_{-j}) \right]. \]
This proves that the term $I$, which is the \emph{actual} contribution of pair $(i,j)$ to the total GFT, is \emph{equivalent} to the GFT of the modified bilateral trade instance we defined in \cref{def:btinstance}.
\end{proof}

We now state a result from the bilateral trade literature that approximates the GFT with the profits of both agents running some \btprice\ auction. In particular, there exists a \doublequantile\ that the seller can run and a \postquantile\ that the buyer can run, that together achieve a 3.15-approximation of the expected GFT, where the expectation is taken over the seller's cost, and holding the buyer's value fixed at $v$.
\begin{theorem}[\cite{DBLP:conf/wine/Fei22,hw25}]\footnote{We note that the analysis in \cite{DBLP:conf/wine/Fei22} is conducted with respect to the buyer's value distribution, fixing the seller's cost. In contrast, our setup fixes the buyer's value and takes the expectation over the seller's cost distribution. However, the two approaches are equivalent due to the inherent symmetry of the bilateral trade problem, as also noted in \cite[Remark 1]{DBLP:conf/wine/Fei22}.}\label{thm:bilateral_approx}
Given a seller's distribution $F$ and a buyer's value $v$, let $\pi_s(F,v,c)$ be the seller's realized profit for a cost $c$ from running \doublequantile\ with $\lambda \approx 0.317844$, and let $\pi_b(F, v, c)$ be the buyer's realized profit for a cost $c$ from running the ex-ante total-expected-profit-maximizing auction from \postquantile. It holds that
\[
\E_{c \sim F} \left[\frac{1}{2} \left(\pi_s(F,v,c) +  \pi_b(F, v, c)\right)\right] \geq \frac{1}{3.15} \int_{q = 0}^{1} \left(v- F^{-1}(q)\right)^+ \dif q.
\]
\end{theorem}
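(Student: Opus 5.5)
The plan is to prove the bound pointwise in the buyer's value $v$, working in the seller's quantile space. Let $c(q)=F^{-1}(q)$, which is non-decreasing, and let $\tau=F(v)$. Then the right-hand side is $\gft=\int_0^\tau (v-c(q))\dif q$. I express both offerers' profits as functionals of the curve $c(\cdot)$ on $[0,\tau]$, and then reduce the inequality to a one-parameter extremal problem whose optimum gives the constant $3.15$ together with $\lambda\approx0.317844$.

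\textbf{Buyer side (\postquantile).} A price at seller quantile $q<\tau$ earns $q\,(v-c(q))$. Let $R=\sup_{q\le\tau} q\,(v-c(q))$, so the buyer's expected profit is $\E_c[\pi_b]=R$. The key consequence is the pointwise lower envelope
\[
c(q)\ \ge\ v-R/q \quad\text{for all } q\in(0,\tau].
\]
In particular, $\gft\le R+\int_{R/v}^{\tau}R/q\,\dif q=R\,(1+\ln(v\tau/R))$. This already handles the regime where $v\tau/R$ is bounded.

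\textbf{Seller side (\doublequantile).} A seller at quantile $q$ posts $c(q/\lambda)$, and the fixed buyer accepts iff $q/\lambda\le\tau$. Hence
\[
\E_c[\pi_s]=\int_0^{\lambda\tau}\bigl(c(q/\lambda)-c(q)\bigr)\dif q=\lambda\int_0^\tau\bigl(c(u)-c(\lambda u)\bigr)\dif u .
\]
The seller's profit is large exactly when $c$ rises steeply over $[0,\tau]$. The envelope $c\ge v-R/q$ forces this rise whenever $\gft\gg R$: the region where $c$ is far below $v$ is confined to small quantiles. So the two profits trade off against each other.

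\textbf{Combining and the main obstacle.} Normalize $v=1$ and $\tau=1$, which is harmless by scaling and truncation. I then need to minimize
\[
\tfrac12\Bigl(R+\lambda\int_0^1\bigl(c(u)-c(\lambda u)\bigr)\dif u\Bigr)\Big/\int_0^1(1-c)
\]
over non-decreasing $c$ satisfying $c(q)\ge 1-R/q$. This is the main obstacle, because the seller term is not monotone in $c$ pointwise: raising $c$ at $u$ helps through $c(u)$ but hurts through $c(\lambda u)$, so one cannot simply substitute the envelope.
\begin{itemize}
\item My first approach is a telescoping decomposition: $\int_0^1(c(u)-c(\lambda u))\dif u=\sum_{k\ge0}\lambda^k\int_{\lambda}^{1}\ldots$, rewriting everything over geometric blocks $[\lambda^{k+1},\lambda^k]$. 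This should show that, for fixed $\gft$ and $R$, the minimizer is piecewise of the form $\max(0,1-R/q)$ or flat. That is the geometric/equal-revenue extremal shape used by \cite{hw25}.
\item On that family the ratio becomes an explicit function of $R$ and $\lambda$. Optimizing over the worst $R$ and the best $\lambda$ yields $1/3.15$ at $\lambda\approx0.317844$.
\end{itemize}

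\textbf{Fallback.} If the extremal reduction resists a clean argument, I would instead invoke \cite{DBLP:conf/wine/Fei22} in the symmetric form: fixed seller cost, expectation over the buyer. This is justified by the cost/value reflection noted in the footnote.
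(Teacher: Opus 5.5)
Your setup is correct for continuous $F$. The buyer's optimal \postquantile\ profit is $R=\sup_{q\le\tau}q\,(v-c(q))$. With $g=v-c$ and $\tau$ normalized to $1$, the \doublequantile\ profit is $S=\int_0^\lambda g-\lambda\int_0^1 g$. Your guessed extremal shape is also right. When $\lambda x\le 1$, among nonincreasing $g$ with $u\,g(u)\le R$ and fixed $\int_0^1 g$, the quantity $\int_0^\lambda g$ is minimized by $g(u)=\min(xR,R/u)$: if $\int_0^\lambda g<\lambda xR$, then $g(\lambda)<xR$, so $g\le\min(xR,R/u)$ on $[\lambda,1]$ and the total mass falls short.

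\textbf{The gap is your final optimization step.} On this family the ratio is $\frac{1+\lambda(x-1-\ln x)}{2(1+\ln x)}$, and its worst case is not $1/3.15$. Take $x=e$ and $R\le 1/e$, smooth the flat part slightly, and keep the remaining seller mass above $v$ (as your formula for $S$ presumes). Then the buyer earns $R$ and $\gft=2R$. Since $\lambda e<1$, the seller earns $\lambda(e-2)R$. Hence $\frac12(R+S)/\gft=\frac{1+\lambda(e-2)}{4}\approx 0.307<1/3.15\approx 0.3175$ at $\lambda\approx 0.317844$; minimizing over $x$ gives about $0.306$. The same instance gives at most $\frac{e-1}{2e}$ for every $\lambda\le 1/e$. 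For $\lambda>1/e$, letting $x\to\infty$ drives the ratio to $\frac{1-\lambda}{2}$. So no fixed multiplier reaches $1/3.15$. Carried out, your reduction refutes the displayed inequality for a single deterministic $\lambda$ rather than proving it. (Note that $0.317844\approx 1/3.146$ looks like the approximation ratio itself, not a quantile multiplier.)

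\textbf{What your approach does prove.} It gives the sharp single-multiplier bound in two lines. At $\lambda=1/e$ one has $R=\int_{1/e}^1\frac{R}{u}\dif u\ge\int_{1/e}^1 g$. Hence $R+S\ge\int_0^1 g-\frac1e\int_0^1 g$, i.e., $\frac12(R+S)\ge\frac{e-1}{2e}\gft\approx\gft/3.164$, tight by the examples above.

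\textbf{The fallback.} The paper contains no proof of this statement. It imports it from \cite{DBLP:conf/wine/Fei22,hw25} via the value/cost reflection in the footnote, which is exactly your fallback. That fallback inherits the same issue, because whatever underlies the cited $1/3.15$ bound cannot be \doublequantile\ with one fixed $\lambda$. To use it, you would have to identify the seller-side strategy the cited bound actually relies on. You would also need to check that its guarantee is pointwise in $v$, with a strategy depending only on $F$, and, for the paper's application, that it is cap-monotone. Otherwise the statement as written has to be changed, e.g., to $\lambda=1/e$ with constant $2e/(e-1)$.
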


We now prove our main theorem (\cref{thm:main_profit_technical}), that randomizing between the Generalized Sellers-Offering Mechanism (GSOM) and the Generalized Buyers-Offering Mechanism (GBOM) yields a constant-factor approximation of the total expected gains-from-trade. Our proof follows a profit-based route, establishing a chain of inequalities. We first lower-bound the optimal total expected profits, $\Pi_S(\text{GSOM})$ and $\Pi_B(\text{GBOM})$, with the total expected profits of a specific feasible mechanism, the $\sdma$ auction. The profit of the $\sdma$ is then decomposed into a sum of profits on the individual edges of the matching $M^*$. We then apply our prior results (\cref{lem:buyerprofit,lem:sellerprofit}) to further lower-bound these per-edge $\sdma$ profits by the profits of the corresponding bilateral trade (\btprice) auctions. This reduction is the key, as it allows us to invoke the 3.15-approximation from the bilateral trade setting (\cref{thm:bilateral_approx}), which connects the randomized profit of a \btprice{} instance to its GFT. Finally, by applying the Law of Total Expectation and the GFT decomposition from \cref{lem:decomposition}, we sum these per-edge GFT components to recover the total expected GFT, which completes the bound.

\begin{proof}[Proof of \cref{thm:main_profit_technical}]
One feasible auction for the seller to run is $\sdma(\doublequantile)$. \doublequantile\ is \property\ by \cref{lem:doublequantileiscapmonotone}, \
and thus \sdma(\doublequantile) is truthful for the buyer by \cref{lem:sellermonotoncity}. Similarly, the buyer can choose to run $\sdma(\postquantile)$ that is truthful for the sellers by \cref{lem:buyermonotone}. Let $\pi_{\sdma(S)}(\val, \cost)$ and $\pi_{\sdma(B)}(\val, \cost)$ be the \emph{total realized profits} for the seller and buyer running the \sdma\ auction with the previously stated \btprice\ given a specific profile $(\val, \cost)$. Similarly let $\pi^j_{\btprice}$ and $\pi^i_{\btprice}$ be the profit of the bilateral trade auctions on edge $(i,j)$.
The total expected profit is the expectation of these realized profits, i.e., $\Pi_S(\sdma) = \E_{\val,\cost}[\pi_{\sdma(S)}(\val, \cost)]$.

We begin with the randomized optimal total expected profit and by \cref{obs:optimalprofit}, we have that GSOM (respectively, GBOM) maximizes the expected profit of the sellers (respectively, buyers) among all BIC--IR mechanisms. 
{\allowdisplaybreaks
\begin{align*}
&\quad\ \frac{1}{2} \Pi_S(\text{GSOM}) + \frac{1}{2} \Pi_B(\text{GBOM})\\
&\ge \frac{1}{2} \Pi_S(\sdma) + \frac{1}{2} \Pi_B(\sdma)\tag{\cref{obs:optimalprofit}} \\
&= \frac{1}{2} \E_{\val,\cost}\left[ \pi_{\sdma(S)}(\val, \cost) \right] + \frac{1}{2} \E_{\val,\cost}\left[ \pi_{\sdma(B)}(\val, \cost) \right]\tag{Definition of $\Pi$} \\
&= \frac{1}{2} \E_{\val,\cost}\left[ \sum_{(i,j) \in E} \pi^j_{\sdma(S)}(\val, \cost) \cdot \indicator[(i,j) \in M^*] \right] + \frac{1}{2} \E_{\val,\cost}\left[ \sum_{(i,j) \in E} \pi^i_{\sdma(B)}(\val, \cost) \cdot \indicator[(i,j) \in M^*] \right] \\
&\ge  \frac{1}{2} \E_{\val,\cost}\left[ \sum_{(i,j) \in E}  \pi^j_{\btprice}(\bar{F}_{ij}(\val, \cost_{-j}), v_i, c_j) \cdot \indicator[(i,j) \in M^*] \right] \\
&\quad + \frac{1}{2} \E_{\val,\cost}\left[ \sum_{(i,j) \in E}  \pi^i_{\btprice}(\bar{F}_{ij}(\val, \cost_{-j}), v_i, c_j) \cdot \indicator[(i,j) \in M^*] \right] \tag{\cref{lem:buyerprofit,lem:sellerprofit}}\\
&= \sum_{(i,j) \in E} \E_{\val,\cost}\left[ \frac{1}{2}\left( \pi^j_{\btprice}(\bar{F}_{ij}(\val, \cost_{-j}), v_i, c_j) + \pi^i_{\btprice}(\bar{F}_{ij}(\val, \cost_{-j}), v_i, c_j) \right) \cdot \indicator[(i,j) \in M^*] \right] \\
&= \sum_{(i,j) \in E} \E_{\val,\cost}\left[ \frac{1}{2}\left( \pi^j_{\btprice}(\bar{F}_{ij}(\val, \cost_{-j}), v_i, c_j) + \pi^i_{\btprice}(\bar{F}_{ij}(\val, \cost_{-j}), v_i, c_j) \right) \cdot \indicator[c_j \leq \bar{c}_{ij}(\val, \cost_{-j})] \right] \tag{\cref{eq:maxmatchingequivalence}}\\
&= \sum_{(i,j) \in E} \E_{\val,\cost_{-j}} \left[ \E_{c_j \sim F_j} \left[ \frac{1}{2}\left( \pi^j_{\btprice}(\bar{F}_{ij}(\val, \cost_{-j}), v_i, c_j) + \pi^i_{\btprice}(\bar{F}_{ij}(\val, \cost_{-j}), v_i, c_j) \right) \cdot \indicator[c_j \le \bar{c}_{ij}( \val,\cost_{-j}) \right] \right] \\
&= \sum_{(i,j) \in E} \E_{\val,\cost_{-j}} \left[ \E_{c_j \sim \bar{F}_{ij}(\val,\cost_{-j})} \left[ \frac{1}{2}\left( \pi^j_{\btprice}(\bar{F}_{ij}(\val, \cost_{-j}), v_i, c_j) + \pi^i_{\btprice}(\bar{F}_{ij}(\val, \cost_{-j}), v_i, c_j) \right) \right] \right]  \tag{\cref{def:modified_distribution}}\\
&\ge \sum_{(i,j) \in E} \E_{\val,\cost_{-j}} \left[ \frac{1}{3.15} \int_{q = 0}^{1} \left(v- \bar{F}_{ij}^{-1}(q \mid \val, \cost_{-j})\right)^+ \dif q \right]  \tag{\cref{thm:bilateral_approx}}\\
&= \sum_{(i,j) \in E} \E_{\val,\cost_{-j}} \left[ \frac{1}{3.15} \gft_{i,j}(\val, \cost_{-j}) \right]\tag{\cref{def:btinstance}} \\
&= \frac{1}{3.15} \left( \sum_{(i,j) \in E} \E_{\val,\cost_{-j}} \left[ \gft_{i,j}(\val, \cost_{-j}) \right] \right) = \frac{1}{3.15} \gft. \tag{\cref{lem:decomposition}}
\end{align*}}
This completes our proof.
\end{proof}

\section{Generalization to Multi-Dimensional Markets}\label{sec:multi-dim}
In this section, we state the implications of our results in the market with $n$ unit-supply sellers and one multi-dimensional unit-demand buyer who has different values for different items.

\cite{CaiGMZ21} establish a formal reduction that connects the GFT approximation problem in the multi-dimensional unit-demand setting to the gap between the first-best ($\mathtt{FB}\text{-}\mathtt{GFT}^\mathtt{SD}$) and second-best ($\mathtt{SB}\text{-}\mathtt{GFT}^\mathtt{SD}$) GFT in the general single-dimensional matching market. The result of \cite[Theorem 4]{CaiGMZ21} proves that if this single-dimensional gap is bounded by a constant factor $c$ (i.e., $\mathtt{SB}\text{-}\mathtt{GFT}^\mathtt{SD} \ge \mathtt{FB}\text{-}\mathtt{GFT}^\mathtt{SD} / c$), then a specific mechanism they propose achieves a $1/(2c)$-approximation of the first-best GFT in the multi-dimensional unit-demand setting.

Our main result provides this missing bound $c$. The mechanism we analyze (randomizing between GSOM and GBOM) is DSIC, IR, and ex-ante WBB. By definition, the GFT it achieves is a lower bound on $\mathtt{SB}\text{-}\mathtt{GFT}^\mathtt{SD}$. Furthermore, because the mechanism is WBB, its expected GFT is at least its total expected profit:
\[
\mathtt{SB}\text{-}\mathtt{GFT}^\mathtt{SD} \ge \E[\mathrm{GFT}(\text{Our Mechanism})] \ge \frac{1}{2} \Pi_S(\text{GSOM}) + \frac{1}{2} \Pi_B(\text{GBOM}).
\]
Our \cref{thm:main_profit_technical} proves that this total expected profit is at least $\gft^* / 3.15$. Therefore, in the single-dimensional matching market, we prove $\mathtt{SB}\text{-}\mathtt{GFT}^\mathtt{SD} \ge \gft^* / 3.15$, establishing that the gap $c$ is at most $3.15$. Applying this constant to the reduction in \cite{CaiGMZ21} yields the following corollary.

\begin{corollary}
The mechanism proposed in \cite{CaiGMZ21} (the better of ``Generalized Buyer Offering Mechanism'' and ``Seller Adjusted Posted Price'') achieves at least a $1/6.3$-approximation of the first-best GFT in the matching market setting with one multi-dimensional unit-demand buyer and $n$ unit-supply sellers.
\end{corollary}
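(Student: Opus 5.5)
This corollary should follow by plugging \cref{thm:main_profit_technical} into the reduction of \cite[Theorem 4]{CaiGMZ21}. That reduction says: if in every single-dimensional matching market the second-best GFT is at least a $1/c$ fraction of the first-best, then the better of their ``Generalized Buyer Offering Mechanism'' and ``Seller Adjusted Posted Price'' mechanisms achieves a $1/(2c)$ fraction of the first-best GFT in the setting with one multi-dimensional unit-demand buyer and $n$ unit-supply sellers. So the whole task is to certify the hypothesis with $c = 3.15$.

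First, I would observe that the mechanism randomizing equally between GSOM and GBOM is a feasible competitor for the second-best. By \cref{thm:feasibility} both components are DSIC, ex-post IR and ex-ante WBB, and these properties survive a fixed coin flip. DSIC implies BIC, and ex-post IR implies interim IR, so the mixture is admissible in the second-best program. Hence $\mathtt{SB}\text{-}\mathtt{GFT}^\mathtt{SD}$ is at least the mixture's expected GFT.

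Second, I would lower-bound that GFT by profits. For any mechanism, the realized GFT equals buyers' utility plus sellers' utility plus the mechanism's net revenue $\sum_i p^b_i - \sum_j p^s_j$. Ex-ante WBB makes the expected revenue term nonnegative. Therefore the expected GFT is at least $\frac12\Pi_S(\text{GSOM})+\frac12\Pi_B(\text{GBOM})$. In doing this, I would note that counting all agents' utilities only helps: the other side's nonnegative utility (by IR) is simply dropped. Applying \cref{thm:main_profit_technical} then gives $\mathtt{SB}\text{-}\mathtt{GFT}^\mathtt{SD}\ge \gft^*/3.15$, i.e., $c\le 3.15$.

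Finally, I would invoke \cite[Theorem 4]{CaiGMZ21} with $c=3.15$ to obtain the $1/6.3$ guarantee. The only step needing care, and the one I expect to be the main obstacle, is checking that the single-dimensional markets arising inside their reduction lie within our model: independent distributions, unit-demand buyers and unit-supply sellers, and a downward-closed family of feasible matchings. I expect their reduction to use exactly such matching markets, in which the multi-dimensional buyer is split into single-dimensional copies subject to a matching or at-most-one-trade constraint. Such a constraint is downward-closed, so \cref{thm:main_profit_technical} applies verbatim. I would also verify that their notion of second-best uses BIC, interim IR and ex-ante WBB, or a weaker set of constraints, so that our mechanism qualifies.
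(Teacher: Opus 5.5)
Your overall route is the paper's. The equal mixture of GSOM and GBOM is admissible for the second-best, its expected GFT is at least $\frac12\Pi_S(\text{GSOM})+\frac12\Pi_B(\text{GBOM})$, \cref{thm:main_profit_technical} bounds this below by $\gft^*/3.15$, and \cite[Theorem 4]{CaiGMZ21} with $c=3.15$ yields $1/6.3$.

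However, your justification of the middle inequality does not work. You discard the net revenue $\sum_i p^b_i-\sum_j p^s_j$ via WBB and drop the other side's utility. What then remains of the GSOM half is the sellers' realized utility $\E[\sum_j (p^s_j-c_j)]$, and you would need this to be at least $\Pi_S(\text{GSOM})$. That fails in general.

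The quantity $\Pi_S(\text{GSOM})$ in \cref{thm:main_profit_technical} is the value of the one-sided problem: expected payments collected from the buyers minus the costs of the sellers who trade. This is what GSOM's allocation maximizes, and it is the quantity that the comparison with $\sdma(S)$ lower-bounds. GSOM, however, pays each matched seller only her threshold. That threshold can be far below the buyer's payment when sellers compete for the same buyer, and the gap is exactly the revenue you threw away.

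For example, take one buyer with $v\sim U[0,1]$, two sellers with i.i.d.\ costs $U[0,\varepsilon]$, and at most one trade. GSOM charges the buyer about $1/2$ and trades with probability about $1/2$, so $\Pi_S(\text{GSOM})\approx 1/4$. But the winning seller's threshold is at most the other seller's cost, so the sellers' utility is $O(\varepsilon)$. In this instance the agents' total expected utility under the mixture is about $5/16$, strictly below $\frac12\Pi_S(\text{GSOM})+\frac12\Pi_B(\text{GBOM})\approx 3/8$. So the bound cannot be obtained by discarding revenue, even at the level of the mixture.

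The repair is to keep the revenue inside the offering side's profit instead of invoking WBB. Summing over realized trades gives $\gft(\text{GSOM})=\sum(v_i-p^b_i)+\sum(p^b_i-c_j)$. Hence $\E[\gft(\text{GSOM})]=\E[\text{buyers' utility}]+\Pi_S(\text{GSOM})\ge\Pi_S(\text{GSOM})$ by IR-B alone. Symmetrically, $\gft(\text{GBOM})=\sum(p^s_j-c_j)+\sum(v_i-p^s_j)$ gives $\E[\gft(\text{GBOM})]\ge\Pi_B(\text{GBOM})$ by IR-S.

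WBB, together with DSIC and ex-post IR, is needed only to certify that the mixture is feasible for the second-best, which you already do. With this change the rest of your argument goes through. That includes your extra check that the single-dimensional instances produced by the reduction of \cite{CaiGMZ21} satisfy the model's assumptions; the paper simply invokes the reduction without it.
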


\bibliographystyle{alpha}
\bibliography{biblio}

\appendix
\section{GSOM and GBOM}\label{app:gsom_equivalence}
In this section, we consider the problem of maximizing profit for one side of the market. We leverage this fundamental connection to derive the profit-optimal auctions for each side, which operate by maximizing the corresponding expected virtual surplus. We identify these auctions as the GSOM (Generalized Sellers-Offering Mechanism) which maximizes sellers' profit, and the GBOM (Generalized Buyers-Offering Mechanism) which maximizes buyers' profit. These auctions were proposed and studied by \cite{BCWZ17}, but their optimality in terms of profit maximization was not explicitly stated.

A crucial property also shown by \cite{BCWZ17} is that these mechanisms are DSIC for both sides---not just the side from which profit is extracted---and ex-ante WBB.
One way to interpret this is that the optimal profits from the buyers can be shared among the winning sellers in a truthful way.

\begin{definition}[Buyer Virtual Value]
    The \emph{Myerson virtual value} $\varphi_i(b_i)$ for a buyer $i$ with value $b_i$ drawn from distribution $D_i$ (with CDF $D_i$ and PDF $d_i$) is $\varphi_i(b_i) = b_i - \frac{1-D_i(b_i)}{d_i(b_i)}$. Let $\tilde{\varphi}_i(b_i)$ be the \emph{ironed virtual value}, which is the monotonized version of $\varphi_i(b_i)$.
\end{definition}

\begin{definition}[Seller Virtual Value]
    The \emph{Myerson virtual value} (or \emph{virtual cost}) $\psi_j(s_j)$ for a seller $j$ with cost $s_j$ drawn from distribution $F_j$ (with CDF $F_j$ and PDF $f_j$) is $\psi_j(s_j) = s_j + \frac{F_j(s_j)}{f_j(s_j)}$. Let $\tilde{\psi}_j(s_j)$ be the \emph{ironed virtual value}, which is the monotonized version of $\psi_j(s_j)$.
\end{definition}

\begin{definition}[Virtual Surplus]
    Given buyer values $\mathbf{v}$ (with virtual values $\boldsymbol{\varphi}(\mathbf{v})$) and seller costs $\mathbf{c}$ (with virtual costs $\boldsymbol{\psi}(\mathbf{c})$), the \emph{virtual surplus} of an allocation $\mathbf{x}$ (represented by a matching $A$) is defined from the perspective of the side whose profit is being maximized:
    \begin{itemize}
        \item for maximizing sellers' total profit (extracting from buyers): $\sum_{(i,j) \in M} (\varphi_i(v_i) - c_j)$;
        \item for maximizing buyers' total profit (extracting from sellers): $\sum_{(i,j) \in M} (v_i - \psi_j(c_j))$.
    \end{itemize}
    We use the ironed virtual values ($\tilde{\varphi}_i, \tilde{\psi}_j$) when the original virtual values are not monotone.
\end{definition}

Myerson's key insight relates the expected profit of any BIC and IR auction to the expected virtual surplus it generates.

\begin{lemma}[Myerson's Profit Equivalence \cite{Myerson81}]
    For any BIC 
    and IR auction:
    \begin{itemize}
        \item The expected total seller profit is equal to the expected virtual surplus $\E[\sum_{(i,j) \in M} (\tilde{\varphi}_i(b_i) - s_j)]$.
        \item The expected total buyer profit is equal to the expected virtual surplus $\E[\sum_{(i,j) \in M} (b_i - \tilde{\psi}_j(s_j))]$.
    \end{itemize}
    Therefore, an auction maximizes the expected profit for one side if and only if it maximizes the corresponding expected virtual surplus.
\end{lemma}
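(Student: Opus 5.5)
This is Myerson's revenue equivalence, adapted to the two-sided trade environment. I will prove the seller-profit bullet in detail; the buyer-profit bullet follows by symmetry (swap the roles, with seller virtual costs $\psi_j$ in place of $\varphi_i$).

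\textbf{Step 1: reduce to interim quantities.} Fix a BIC-B and IR-B allocation rule $\mathbf{x}$ with buyer payments $\mathbf{p}^b$. The total seller profit, before any redistribution among sellers, is the total buyer payment minus the realized costs of the sellers who trade:
\[
\E\Bigl[\sum_i p^b_i - \sum_{(i,j)} x_{ij}c_j\Bigr].
\]
So it suffices to show
\[
\E[p^b_i] \le \E\Bigl[\tilde\varphi_i(v_i)\sum_j x_{ij}\Bigr],
\]
with equality for the optimizing rule. Then I sum over $i$ and subtract the cost term.

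\textbf{Step 2: apply the one-dimensional Myerson argument.} Define the interim allocation $X_i(b)=\E_{\mathbf{v}_{-i},\mathbf{c}}[\sum_j x_{ij}((b,\mathbf{v}_{-i}),\mathbf{c})]$ and the interim payment $P_i(b)$. Standard BIC arguments then give three facts:
\begin{itemize}
\item $X_i$ is non-decreasing;
\item the interim payment satisfies $P_i(b)=bX_i(b)-\int_0^b X_i(t)\,dt - U_i(0)$;
\item IR gives $U_i(0)\ge 0$.
\end{itemize}
Taking expectation over $v_i\sim D_i$ and swapping the order of integration (Fubini) yields
\[
\E[P_i(v_i)] = \E[\varphi_i(v_i)X_i(v_i)] - U_i(0).
\]
Summing over buyers shows that total seller profit equals the expected virtual surplus $\E[\sum_{(i,j)\in M}(\varphi_i(v_i)-c_j)]$, minus the nonnegative constants $U_i(0)$.

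\textbf{Step 3: handle ironing.} When $\varphi_i$ is not monotone, I use the ironed revenue curve $\tilde R_i\ge R_i$. Integration by parts in quantile space gives
\[
\E[\varphi_i X_i]=\E[\tilde\varphi_i X_i]-\int (\tilde R_i-R_i)\,dX_i \le \E[\tilde\varphi_i X_i],
\]
where the inequality uses that $X_i$ is monotone. Equality holds exactly when $X_i$ is constant on the ironed intervals and $U_i(0)=0$.

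\textbf{Step 4: conclude the "iff" claim.} The pointwise maximizer of ironed virtual surplus is constant on ironed intervals, since $\tilde\varphi_i$ is constant there and the tie-breaking is fixed. It is also monotone, hence implementable via Theorem~\ref{thm:myerson} with $U_i(0)=0$. So it attains the upper bound $\E[\max_M\sum_{(i,j)\in M}(\tilde\varphi_i-c_j)]$. Any other auction attains at most its own ironed virtual surplus, which is at most this maximum. Therefore optimal profit is equivalent to maximal expected (ironed) virtual surplus.

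The main obstacle is Step 3, together with precision about the statement itself. The claimed equality with the \emph{ironed} surplus holds only for mechanisms that are flat on ironed intervals and give zero utility at the lowest type. For general auctions I would state it as the unironed equality plus the inequality, which is all the optimality argument needs.
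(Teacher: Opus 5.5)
Your proposal is correct. It is the standard Myerson argument (interim payment identity, Fubini, and ironing via integration by parts against the revenue curve), which the paper invokes only by citation to Myerson's lemma without giving a proof.

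You are also right that, as literally stated, the equality with the \emph{ironed} surplus holds only for rules that are constant on ironed intervals and give the lowest type zero utility, and hence so does the ``if'' direction of the final claim. In general one gets the unironed identity plus the inequality $\E[p^b_i]\le\E[\tilde\varphi_i(v_i)\sum_j x_{ij}]$, which is all that Observation~\ref{obs:optimalprofit} needs.
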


With these definitions, we derive the profit-maximizing auctions and identify them with auctions previously studied in the literature.

\subsection{Profit-Maximizing Auctions}


\begin{itemize}
    \item \textbf{The GSOM Auction (Generalized Sellers-Offering Mechanism) \cite{BCWZ17}.} Given reports $(\mathbf{b}, \mathbf{s})$, find a matching $M^* \in \mathcal{F}$ that maximizes the total virtual surplus (for expected total sellers'  profit):
        \[
        M^*(\mathbf{b}, \mathbf{s}) \in \argmax_{A \in \mathcal{F}} \sum_{(i,j)\in A}(\tilde{\varphi}_{i}(b_{i})-s_{j}).
        \]
The allocation $\mathbf{x}$ is set by $M^*$. Payments $\mathbf{p}^b$ and $\mathbf{p}^s$ are the unique threshold payments that ensure DSIC and IR.

    \item \textbf{The GBOM Auction (Generalized Buyers-Offering Mechanism) \cite{BCWZ17}.} Given reports $(\mathbf{b}, \mathbf{s})$, find a matching $M^* \in \mathcal{F}$ that maximizes the total virtual surplus (for expected total buyers' profit):
        \[
        M^*(\mathbf{b}, \mathbf{s}) \in \argmax_{A \in \mathcal{F}} \sum_{(i,j)\in A}(b_{i}-\tilde{\psi}_{j}(s_{j})).
        \]
The allocation $\mathbf{x}$ is set by $M^*$. Payments $\mathbf{p}^b$ and $\mathbf{p}^s$ are the unique threshold payments that ensure DSIC and IR.
\end{itemize}
While the GSOM and GBOM were introduced previously, their optimality in terms of expected total sellers' and buyers' profit maximization, respectively, was not stated explicitly.
\begin{observation}\label{obs:optimalprofit}
        \textbf{GSOM} is an IR and DSIC-B auction that maximizes the expected total profit of sellers. 
        Analogously, \textbf{GBOM} is an IR and DSIC-S auction that maximizes the expected total profit of buyers.
\end{observation}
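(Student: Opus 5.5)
The plan is to combine Myerson's profit equivalence (stated just above) with the DSIC property of GSOM and GBOM from \cref{thm:feasibility}. I treat GSOM in detail; GBOM is symmetric with the roles of buyers and sellers exchanged.

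\textbf{Step 1: GSOM is DSIC-B and IR.} This follows directly from \cref{thm:feasibility}.

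\textbf{Step 2: an upper bound for every feasible mechanism.} Fix the sellers' costs $\mathbf{s}$. Take any mechanism that is BIC-B and IR-B for the buyers; the sellers are non-strategic in this one-sided problem. By the first bullet of Myerson's profit equivalence lemma, the sellers' expected total profit (over $\mathbf{v}\sim\mathbf{D}$) equals
\[
\E_{\mathbf{v}}\Bigl[\sum_{(i,j)} x_{ij}(\mathbf{v},\mathbf{s})\,(\tilde{\varphi}_i(v_i)-s_j)\Bigr],
\]
up to the usual ironing inequality. Concretely, using ironed virtual values gives ``$\le$'', with equality for allocations that are constant on ironed intervals.

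Any randomized feasible allocation is a convex combination of matchings in $\mathcal{F}$. Hence pointwise in $\mathbf{v}$ this virtual surplus is at most $\max_{A\in\mathcal{F}}\sum_{(i,j)\in A}(\tilde{\varphi}_i(v_i)-s_j)$. Downward-closedness lets us drop negative edges, so the maximum is at least the value of the empty matching, namely $0$.

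\textbf{Step 3: GSOM attains the bound.} GSOM selects exactly a maximizing matching for every $\mathbf{v}$, so its virtual surplus meets the upper bound pointwise. Two facts remain to be checked.

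First, the GSOM allocation must be monotone in $b_i$. This holds because $\tilde{\varphi}_i$ is non-decreasing, so the argument of \cref{lem:mwm_monotonicity} applies with weights $\tilde{\varphi}_i(b_i)-s_j$. The threshold payments of \cref{def:threshold_payment} then make GSOM DSIC-B and IR-B.

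Second, the allocation must be constant on ironed intervals. This holds because $\tilde{\varphi}_i$ is constant there and tie-breaking is fixed. As a result the profit identity holds with equality for GSOM.

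Taking expectation over $\mathbf{s}$ gives $\Pi_S(\text{GSOM})\ge \Pi_S(\mathcal{M})$ for every BIC-B, IR-B mechanism $\mathcal{M}$. DSIC mechanisms are in particular BIC, so this covers $\sdma(S)$ as well.

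\textbf{Main obstacle.} I expect the most delicate step to be the ironing. One has to justify that, for an arbitrary BIC-B mechanism, the expected profit is bounded by the ironed virtual surplus and not just the raw one. The standard argument is integration by parts on the interim allocation, using that the interim allocation is monotone, which the paper's one-line lemma glosses over.

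A related technical point arises for distributions without densities (including atoms) and for the existence assumption in the model. I would handle these by working in quantile space with revenue curves, where the ironed virtual value is the derivative of the concave hull. The GBOM case repeats all of the above with $\tilde{\psi}_j$ and monotonicity in $s_j$.
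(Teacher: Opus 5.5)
Your proposal is correct and follows essentially the paper's own route: Myerson's profit identity, then pointwise maximization of the ironed virtual surplus over $\mathcal{F}$, then DSIC/IR of GSOM and GBOM from \cref{thm:feasibility}. Your treatment of ironing is more careful than the paper's lemma, which asserts equality for every BIC--IR auction. You instead use an inequality for general BIC--IR rules, with equality for GSOM because its allocation is constant on ironed intervals.
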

In fact, \cite{BCWZ17} show that GSOM is DSIC for the sellers as well. One interpretation is that the optimal profit can be shared among the winning sellers in a truthful way.
\begin{theorem}[\cite{BCWZ17}]
    Both GSOM and GBOM are DSIC-B, DSIC-S, IR-B, IR-S, and ex-ante WBB.
\end{theorem}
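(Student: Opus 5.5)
The plan is to treat GSOM in detail; GBOM follows by the symmetric argument with the roles of buyers and sellers exchanged. The GSOM allocation selects a matching $M^*\in\mathcal{F}$ that maximizes $\sum_{(i,j)\in M}(\tilde\varphi_i(b_i)-s_j)$, with a fixed lexicographic tie-breaking rule. The weight of each edge is non-decreasing in $b_i$, because $\tilde\varphi_i$ is monotone, and strictly decreasing in $s_j$. Hence the argument of \cref{lem:mwm_monotonicity} applies verbatim with $b_i$ replaced by $\tilde\varphi_i(b_i)$. A buyer in $M^*$ stays in $M^*$ when raising her bid, and a seller in $M^*$ stays when lowering his cost.

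\textbf{Step 1 (DSIC and IR).} The allocation is deterministic and monotone in each agent's own report for both sides. By \cref{thm:myerson}, the threshold payments of \cref{def:threshold_payment} therefore make GSOM DSIC-B and DSIC-S. They also give IR-B and IR-S, because a winning buyer pays at most her bid and a winning seller receives at least his cost. If $\tilde\varphi_i$ is flat on an interval, the allocation is constant there, so monotonicity is unaffected.

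\textbf{Step 2 (pointwise bound on seller payments).} Fix a profile and a matched edge $(i,j)\in M^*$. By \cref{lem:buyerinauction} applied to the virtual-weight matching, as $s_j$ increases seller $j$ is either matched to $i$ or unmatched. His threshold is therefore the largest $s$ at which $W(M^*)-(\tilde\varphi_i-s_j)+(\tilde\varphi_i-s)$ still beats $\mathrm{OPT}_{-j}$, the best feasible weight of a matching avoiding $j$. This gives
\[
s_j^*=\tilde\varphi_i(b_i)-\bigl(\mathrm{OPT}_{-j}-W(M^*\setminus\{(i,j)\})\bigr).
\]
By downward-closedness, $M^*\setminus\{(i,j)\}\in\mathcal{F}$ avoids $j$. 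Hence $\mathrm{OPT}_{-j}\ge W(M^*\setminus\{(i,j)\})$, and so $p^s_j=s_j^*\le\tilde\varphi_i(b_i)$. Summing over matched edges gives, in every realization,
\[
\sum_j p^s_j\le\sum_{(i,j)\in M^*}\tilde\varphi_i(b_i).
\]

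\textbf{Step 3 (WBB).} Take expectations over truthful reports. By Myerson's profit equivalence (the lemma stated just above), the expected total buyer payment of a DSIC, IR mechanism equals $\E[\sum_{(i,j)\in M^*}\tilde\varphi_i(v_i)]$. Combining this with Step 2 yields ex-ante WBB.

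For GBOM, the analogous bound is $p^b_i\ge\tilde\psi_j(c_j)$ for a matched buyer: her threshold is the lowest bid at which she stays matched to $j$, and dropping the edge is feasible. Summed over matched edges, expected buyer payments are at least $\E[\sum_{(i,j)\in M^*}\tilde\psi_j(c_j)]$. By profit equivalence this equals the expected total seller payment, which gives WBB for GBOM.

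The main obstacle is the equality in Step 3 when ironing is present. Myerson's identity with ironed virtual values requires that the interim allocation be constant on each ironed interval, so that the term $\int(\varphi-\tilde\varphi)\,d x$ vanishes. I would argue this directly. The ironed value $\tilde\varphi_i$, and hence the whole weight vector, is constant as $b_i$ ranges over an ironed interval. With a fixed tie-breaking order, $M^*$ is then literally unchanged, so buyer $i$'s allocation is constant there pointwise and hence also in expectation.
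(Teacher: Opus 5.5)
Your proof is correct. The paper does not prove this statement itself (it cites BCWZ17). The closest in-paper reasoning is the appendix, which works pairwise: it fixes $(\mathbf b_{-i},\mathbf s)$ for GSOM or $(\mathbf b,\mathbf s_{-j})$ for GBOM and varies one agent, because ex-post WBB of GSOM-BIC/GBOM-BIC needs conditional rather than aggregate inequalities.

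For GSOM, \cref{lem:gsom-bic-closed-form} uses your two ingredients: a nonnegative slack below $\tilde\varphi_i(b_i)$, and the posted-price identity $\E[\tilde\varphi_i(b_i')\mathbf{1}[b_i'\ge r]]=r\Pr[b_i'\ge r]$. However, it phrases the first as an exact formula whose slack is built from matchings leaving buyer $i$ unmatched. Your slack $\mathrm{OPT}_{-j}-W(M^*\setminus\{(i,j)\})$, built from matchings leaving seller $j$ unmatched, is the correct competitor for the seller threshold. It reproduces the threshold $\min\{\tfrac12,2b-1\}$ in the Remark after \cref{thm:gsom-bic}, where the appendix formula would give $2b-1$. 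Because your slack depends on $b_i$, no exact closed form is available, only your inequality $\bar s^\varphi_{ij}\le\tilde\varphi_i(b_i)$ — but that is all WBB needs.

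For GBOM, the paper says no useful closed form exists and defers to BCWZ17 (\cref{lem:gbom-pairwise-wbb}). Your bound $\bar b^\psi_{ij}=\tilde\psi_j(s_j)+\bigl(\mathrm{OPT}_{-i}-W(M^*\setminus\{(i,j)\})\bigr)\ge\tilde\psi_j(s_j)$ shows none is needed.

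If you run your Step 3 conditionally, you recover both pairwise inequalities the appendix relies on. That means applying Myerson's identity for buyer $i$ with $(\mathbf b_{-i},\mathbf s)$ fixed, or for seller $j$ with $(\mathbf b,\mathbf s_{-j})$ fixed, instead of taking the full expectation. So your route is uniform across the two mechanisms, self-contained, and yields the stronger conditional form.

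Your ironing argument is exactly the condition the paper's statement of profit equivalence leaves implicit: the allocation is literally constant on ironed intervals, since the weight vector and tie-breaking are unchanged there.
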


\subsection{BIC versions of GSOM}




We next define a BIC variant of GSOM in which the buyers remain DSIC, while the sellers are only
BIC. The allocation rule is  the same as in GSOM; only the payments to the profit-receiving
side are changed. This changes makes the WBB property holds ex-post instead of ex-ante.

\paragraph{Virtual-surplus matching for GSOM.}
Let $M^\varphi(\mathbf b,\mathbf s)$ denote the unique matching selected by GSOM:
\[
M^\varphi(\mathbf b,\mathbf s)\in
\argmax_{A\in\mathcal F}\sum_{(i,j)\in A}\bigl(\tilde\varphi_i(b_i)-s_j\bigr),
\]
with ties broken according to the fixed global order.

Since each ironed virtual value $\tilde\varphi_i$ is monotone nondecreasing, the proofs of
Lemma~\ref{lem:mwm_monotonicity} and Lemma~\ref{lem:buyerinauction} apply verbatim after
replacing the weight $b_i$ of buyer $i$ by the transformed weight $\tilde\varphi_i(b_i)$.
Therefore, for every fixed $(\mathbf b_{-i},\mathbf s)$, if buyer $i$ is matched in
$M^\varphi((b_i,\mathbf b_{-i}),\mathbf s)$ for some $b_i$, then as only $b_i$ varies she can only
be matched to a unique seller; symmetrically, for every fixed $(\mathbf b,\mathbf s_{-j})$, if seller
$j$ is matched in $M^\varphi(\mathbf b,(s_j,\mathbf s_{-j}))$ for some $s_j$, then as only $s_j$
varies she can only be matched to a unique buyer.

For a matched edge $(i,j)\in M^\varphi(\mathbf b,\mathbf s)$, define the \emph{competition slack}
\[
\delta_{ij}^\varphi(\mathbf b_{-i},\mathbf s_{-j})
:=
\max_{\substack{A\in\mathcal F\\ i\notin A}}
\sum_{(k,\ell)\in A}\bigl(\tilde\varphi_k(b_k)-s_\ell\bigr)
-
\max_{\substack{A\in\mathcal F\\ A\cup\{(i,j)\}\in\mathcal F}}
\sum_{(k,\ell)\in A}\bigl(\tilde\varphi_k(b_k)-s_\ell\bigr),
\]
where $i\notin A$ means that buyer $i$ is unmatched in $A$. By construction,
$\delta_{ij}^\varphi(\mathbf b_{-i},\mathbf s_{-j})\ge 0$.

The best matching that contains $(i,j)$ has total weight
\[
\tilde\varphi_i(b_i)-s_j
+
\max_{\substack{A\in\mathcal F\\ A\cup\{(i,j)\}\in\mathcal F}}
\sum_{(k,\ell)\in A}\bigl(\tilde\varphi_k(b_k)-s_\ell\bigr),
\]
while the best matching that leaves buyer $i$ unmatched has total weight
\[
\max_{\substack{A\in\mathcal F\\ i\notin A}}
\sum_{(k,\ell)\in A}\bigl(\tilde\varphi_k(b_k)-s_\ell\bigr).
\]
Therefore we have,
\begin{equation}\label{eq:gsom-pair-threshold}
(i,j)\in M^\varphi(\mathbf b,\mathbf s)
\iff
\tilde\varphi_i(b_i)-s_j\ge \delta_{ij}^\varphi(\mathbf b_{-i},\mathbf s_{-j}).
\end{equation}
It follows that the buyer-threshold and seller-threshold in the original GSOM are
\begin{align}
\bar b_{ij}^\varphi(\mathbf b_{-i},\mathbf s)
:=
\inf\{z:(i,j)\in M^\varphi((z,\mathbf b_{-i}),\mathbf s)\}=
\inf\bigl\{z:\tilde\varphi_i(z)\ge s_j+\delta_{ij}^\varphi(\mathbf b_{-i},\mathbf s_{-j})\bigr\},
\label{eq:gsom-buyer-threshold}
\\[3mm]
\bar s_{ij}^\varphi((b_i,\mathbf b_{-i}),\mathbf s_{-j})
:=
\sup\{z:(i,j)\in M^\varphi((b_i,\mathbf b_{-i}),(z,\mathbf s_{-j}))\}
=
\tilde\varphi_i(b_i)-\delta_{ij}^\varphi(\mathbf b_{-i},\mathbf s_{-j}).
\label{eq:gsom-seller-threshold}
\end{align}

\begin{definition}[GSOM-BIC]\label{def:bgsom}
The mechanism \textbf{GSOM-BIC} is defined as follows.

\begin{enumerate}
    \item Compute the matching $M^\varphi(\mathbf b,\mathbf s)$ that maximizes
    $\sum_{(i,j)\in A}(\tilde\varphi_i(b_i)-s_j)$.
    
    \item If buyer $i$ is matched to seller $j$, charge buyer $i$ her threshold payment
    \[
    p_i^b(\mathbf b,\mathbf s)=\bar b_{ij}^\varphi(\mathbf b_{-i},\mathbf s).
    \]
    If buyer $i$ is unmatched, set $p_i^b(\mathbf b,\mathbf s)=0$.
    
    \item If seller $j$ is matched to buyer $i$, pay seller $j$ the conditional expectation of the
    seller-threshold payment of the original GSOM, conditioning on the event that $(i,j)$ is selected:
    \[
    p_j^s(\mathbf b,\mathbf s)
    :=
    \E_{b_i' \sim D_i}\!\left[
        \bar s_{ij}^\varphi((b_i',\mathbf b_{-i}),\mathbf s_{-j})
        \,\middle|\,
        (i,j)\in M^\varphi((b_i',\mathbf b_{-i}),\mathbf s)
    \right].
    \]
    If seller $j$ is unmatched, set $p_j^s(\mathbf b,\mathbf s)=0$.
\end{enumerate}
\end{definition}

The next lemma gives a closed form for the payment to a matched seller.

\begin{lemma}\label{lem:gsom-bic-closed-form}
Fix a profile $(\mathbf b,\mathbf s)$ and a matched edge $(i,j)\in M^\varphi(\mathbf b,\mathbf s)$.
Then
\[
p_j^s(\mathbf b,\mathbf s)
=
\bar b_{ij}^\varphi(\mathbf b_{-i},\mathbf s)
-
\delta_{ij}^\varphi(\mathbf b_{-i},\mathbf s_{-j}).
\]
Consequently,
\[
p_j^s(\mathbf b,\mathbf s)\le p_i^b(\mathbf b,\mathbf s).
\]
\end{lemma}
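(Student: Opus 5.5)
We compute the conditional expectation in the definition of $p_j^s$ directly, using the closed forms \eqref{eq:gsom-pair-threshold}--\eqref{eq:gsom-seller-threshold}. Fix $(\mathbf b_{-i},\mathbf s)$ and write $\delta=\delta_{ij}^\varphi(\mathbf b_{-i},\mathbf s_{-j})$ and $\bar b=\bar b_{ij}^\varphi(\mathbf b_{-i},\mathbf s)$. Neither quantity depends on $b_i$. By \eqref{eq:gsom-pair-threshold}, for a resampled $b_i'\sim D_i$ the event $(i,j)\in M^\varphi((b_i',\mathbf b_{-i}),\mathbf s)$ is exactly $\{\tilde\varphi_i(b_i')\ge s_j+\delta\}$. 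On that event \eqref{eq:gsom-seller-threshold} gives $\bar s_{ij}^\varphi=\tilde\varphi_i(b_i')-\delta$. Hence
\[
p_j^s(\mathbf b,\mathbf s)=\E_{b_i'\sim D_i}\bigl[\tilde\varphi_i(b_i')\,\big|\,\tilde\varphi_i(b_i')\ge s_j+\delta\bigr]-\delta .
\]
The event is nonempty, and has positive probability up to the usual measure-zero caveats, because the realized $b_i$ belongs to it.

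The key step is the identity $\E[\tilde\varphi_i(b')\mid b'\ge t]=t$ for a threshold $t$. This is the standard fact that posting price $t$ to a single buyer earns revenue $t(1-D_i(t))=\E[\varphi_i(b')\mathbf 1\{b'\ge t\}]$. By monotonicity of $\tilde\varphi_i$, the event $\{\tilde\varphi_i(b')\ge s_j+\delta\}$ is an upper set $\{b'\ge \bar b\}$, or $\{b'>\bar b\}$, which differs only on a null set under continuity. Myerson's lemma (the profit-equivalence lemma stated above) applied to the posted-price mechanism then gives $\E[\varphi_i(b')\mathbf 1\{b'\ge\bar b\}]=\bar b\,\Pr[b'\ge\bar b]$. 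Dividing by $\Pr[b'\ge\bar b]$ yields $p_j^s=\bar b-\delta$, the claimed closed form.

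\textbf{Main obstacle: ironing.} The identity above uses $\varphi_i$, but the matching uses $\tilde\varphi_i$. I will argue that $\bar b$ is never strictly inside an ironed interval. Suppose $\tilde\varphi_i$ were constant on an ironed interval containing $\bar b$ in its interior. Then $\bar b$, being an infimum, would sit at that interval's left endpoint $a$, or at a point where $\tilde\varphi_i$ jumps. At a left endpoint $a$ of an ironed interval, ironing preserves $\int_{b\ge a}\tilde\varphi_i\,dD_i=\int_{b\ge a}\varphi_i\,dD_i$, since the revenue curve and its concave hull agree at $a$. So the identity persists with $\tilde\varphi_i$, which is what the conditional expectation involves. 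If tie-breaking makes the event $\{b'>a\}$ rather than $\{b'\ge a\}$, the two differ by an atom at $a$; then I would take one-sided limits, or treat the atom by the fixed tie-breaking, noting that $\bar b$ is still the infimum. I expect this step to need the most care. I would handle it via the quantile-space revenue curve $R(q)$, where $\E[\tilde\varphi\mathbf 1\{q'\le q\}]=\bar R(q)$ and $\bar R=R$ at every endpoint of the set $\{\tilde\varphi\ge s_j+\delta\}$.

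\textbf{The inequality.} Since $\delta\ge 0$ by construction, $p_j^s=\bar b-\delta\le\bar b=p_i^b$, which gives ex-post WBB edge by edge.
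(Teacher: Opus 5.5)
Your computation follows the paper's proof step for step: resample $b_i'$, identify the conditioning event through \eqref{eq:gsom-pair-threshold}, substitute \eqref{eq:gsom-seller-threshold}, apply the posted-price payment identity at $\bar b$, and divide by $\Pr[b_i'\ge\bar b]$. Your ironing paragraph adds care the paper skips. The paper applies the identity to $\tilde\varphi_i$ without comment, which is legitimate only because $\bar b$ cannot sit inside an ironed interval.

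The genuine problem is the step you inherit: \eqref{eq:gsom-seller-threshold} is false in general, and so is the closed form. The characterization \eqref{eq:gsom-pair-threshold} compares the best matching through $(i,j)$ only with matchings that leave buyer $i$ unmatched. That suffices when only $b_i'$ moves. Comparisons with matchings that pair $i$ with some $j'\neq j$ do not involve $b_i'$, and they already hold at the given profile. When seller $j$'s cost $z$ moves, however, those matchings become binding. Moreover, the first maximum in $\delta$ may itself use seller $j$, so $\delta$ actually depends on $s_j$ despite the notation.

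The paper's own Remark is a counterexample. There $\delta=0$ and $\bar b=\tfrac12$, so the closed form predicts $p_j^s=\tfrac12$. But the true seller threshold at buyer report $b$ is $\min\{\tfrac12,2b-1\}$, and $p_j^s=\tfrac38$. The error can also go the other way. Take buyers $i,k$ and one seller $j$ with edges $(i,j),(k,j)$, $D_i=D_k=U[0,1]$, and the profile $b_i=1$, $b_k=\tfrac34$, $s_j=0$. Then $\delta=\tfrac12$ and $\bar b=\tfrac34$. On the event $\{b_i'\ge\tfrac34\}$, however, the seller threshold is $2b_i'-1$, so $p_j^s=\tfrac34>\bar b-\delta=\tfrac14$. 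Neither direction of the claimed equality holds in general, so no argument can establish it.

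What survives is the inequality $p_j^s\le p_i^b$. Your argument proves it once the faulty substitution is replaced by a bound. Fix $b_i'$ in the event and any cost $z$ at which $(i,j)$ is selected. Deleting $(i,j)$ from the selected matching leaves a feasible matching by downward closure, so $\tilde\varphi_i(b_i')-z\ge 0$. Hence $\bar s^\varphi_{ij}((b_i',\mathbf b_{-i}),\mathbf s_{-j})\le\tilde\varphi_i(b_i')$.

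Comparing instead with the best matching that leaves both $i$ and $j$ unmatched gives the sharper bound $\bar s^\varphi_{ij}\le\tilde\varphi_i(b_i')-\delta'$. Here $\delta'$ is that best weight minus the second maximum in the definition of $\delta$. It is nonnegative because every matching in that second maximum leaves $i$ and $j$ unmatched. In general, the threshold on the event has the form $\min\{\tilde\varphi_i(b_i')-\delta',C\}$. The constant $C$ does not depend on $b_i'$ and comes from other sellers competing for $i$.

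Taking conditional expectations and using your ironing-aware identity $\E[\tilde\varphi_i(b_i')\mid b_i'\ge\bar b]=\bar b$ yields $p_j^s\le\bar b-\delta'\le\bar b=p_i^b$. This is the edge-by-edge ex-post WBB statement the lemma is needed for.
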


\begin{proof}
Let
$
r=\bar b_{ij}^\varphi(\mathbf b_{-i},\mathbf s)$ and $
\delta=\delta_{ij}^\varphi(\mathbf b_{-i},\mathbf s_{-j}).
$
By \eqref{eq:gsom-pair-threshold}, the event that $(i,j)$ is selected when only buyer $i$'s value
varies is  the event
\[
\tilde\varphi_i(b_i')\ge s_j+\delta,
\]
which is equivalent to $b_i'\ge r$ by \eqref{eq:gsom-buyer-threshold}. Moreover, by
\eqref{eq:gsom-seller-threshold},
\[
\bar s_{ij}^\varphi((b_i',\mathbf b_{-i}),\mathbf s_{-j})
=
\tilde\varphi_i(b_i')-\delta.
\]
Therefore,
\begin{align*}
p_j^s(\mathbf b,\mathbf s)
&=
\E\!\left[\tilde\varphi_i(b_i')-\delta \mid b_i'\ge r\right] = 
\frac{\E\!\left[\tilde\varphi_i(b_i')\mathbf 1[b_i'\ge r]\right]}{\Pr[b_i'\ge r]}
-\delta.
\end{align*}
Now consider the single-buyer posted-price mechanism that allocates if and only if $b_i'\ge r$ and charges
price $r$. By Myerson's payment identity for this monotone single-parameter allocation rule,
\[
\E\!\left[\tilde\varphi_i(b_i')\mathbf 1[b_i'\ge r]\right]
=
r\Pr[b_i'\ge r].
\]
Substituting this identity above gives
\[
p_j^s(\mathbf b,\mathbf s)=r-\delta
=
\bar b_{ij}^\varphi(\mathbf b_{-i},\mathbf s)-\delta_{ij}^\varphi(\mathbf b_{-i},\mathbf s_{-j}),
\]
as claimed. Since $\delta_{ij}^\varphi\ge 0$, the second statement follows immediately.
\end{proof}

\begin{theorem}\label{thm:gsom-bic}
GSOM-BIC is DSIC-B, BIC-S, IR-B, IR-S, and ex-post WBB.
Moreover, for every seller $j$, every report $s_j$, and every fixed reports of all other agents,
the interim expected payment of seller $j$ in GSOM-BIC is  the same as in the original GSOM.
Consequently, GSOM-BIC and GSOM induce the same ex-ante total sellers' utility.
\end{theorem}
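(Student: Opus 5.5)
The plan is to check each property of \cref{thm:gsom-bic} in turn, using \cref{lem:gsom-bic-closed-form} for the payment identities and \cref{thm:feasibility} for the properties GSOM-BIC inherits from GSOM.

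\textbf{Buyer side.} DSIC-B and IR-B are immediate. The allocation rule and the buyer payments $p^b_i=\bar b^\varphi_{ij}$ coincide with those of GSOM, and a buyer's utility does not depend on seller payments. So these two properties carry over directly from \cref{thm:feasibility}.

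\textbf{Ex-post WBB.} This follows edge by edge from the inequality $p^s_j\le p^b_i$ in \cref{lem:gsom-bic-closed-form}. Unmatched agents pay and receive zero, so summing over matched edges gives $\sum_i p^b_i\ge\sum_j p^s_j$ at every profile.

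\textbf{Interim payment equivalence.} Fix seller $j$, her report $s_j$, and the reports of all agents other than buyer $i$ and seller $j$. Here $i$ is the unique potential partner of $j$; recall that the unique-partner structure carries over to $M^\varphi$.
\begin{enumerate}
\item By construction, $p^s_j$ is the conditional expectation, over $b_i'\sim D_i$, of GSOM's seller threshold $\bar s^\varphi_{ij}$, given that $(i,j)$ is selected.
\item Under GSOM, seller $j$ is paid exactly $\bar s^\varphi_{ij}$ on that event and zero otherwise.
\item Under GSOM-BIC, $p^s_j$ is constant in $b_i$ on the event and also zero off it.
\end{enumerate}
Hence $\E_{b_i}[p^s_j]$ agrees for the two mechanisms. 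This is the tower property, using that the selection event depends on $b_i'$ only through the threshold $b_i'\ge r$. Averaging over the remaining agents gives equal interim payments for every $s_j$. The allocation is unchanged, so each seller's interim utility is unchanged, and the ex-ante total sellers' utility is the same.

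\textbf{BIC-S.} This follows from the interim equivalence. For any true cost $c_j$ and report $s_j$, the seller's interim expected utility under GSOM-BIC equals her interim expected utility under GSOM when reporting $s_j$. GSOM is DSIC-S, hence BIC-S, so truthful reporting is optimal.

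\textbf{IR-S.} This is ex-post, so it needs a pointwise argument. Assume $(i,j)$ is matched. By \cref{lem:gsom-bic-closed-form}, $p^s_j=r-\delta$. The event that $(i,j)$ is selected corresponds to $b_i'\ge r$. On that event, $\bar s^\varphi_{ij}(b_i')=\tilde\varphi_i(b_i')-\delta$ is nondecreasing in $b_i'$, so it is at least its infimum value at $b_i'\to r^+$. That infimum is the seller threshold at the smallest winning buyer bid.

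The key observation is that $j$'s own truthful report must lie below this infimum. For every $b_i'\ge r$ the pair is selected at $s_j=c_j$, so $c_j\le\bar s^\varphi_{ij}(b_i')$ for all such $b_i'$. Taking conditional expectations gives $p^s_j\ge c_j$.

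\textbf{Main obstacle.} I expect the difficulty to be this last step, together with boundary and ironing details in the claim that the selection event is exactly $\{b_i'\ge r\}$, especially with atoms or flat ironed regions. I would handle these with the fixed tie-breaking rule and the threshold characterization \eqref{eq:gsom-pair-threshold}.
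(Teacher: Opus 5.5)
Your plan follows the paper's proof closely. DSIC-B and IR-B come from the unchanged allocation and buyer payments. Interim equivalence comes from the tower property, BIC-S from GSOM's DSIC-S, and IR-S from averaging $c_j\le\bar s^\varphi_{ij}((b_i',\mathbf b_{-i}),\mathbf s_{-j})$ over the conditioning event; your last IR-S paragraph is exactly the paper's argument. The one difference is ex-post WBB: you take it from the inequality $p^s_j\le p^b_i$ of \cref{lem:gsom-bic-closed-form}, while the paper uses the pairwise inequality from GSOM's ex-ante WBB proof.

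\textbf{The interim-equivalence step contains a false claim.} Fixing $(\mathbf b_{-i},\mathbf s)$ and varying $b_i$ does not pin down $j$'s partner. The unique-partner property for $j$ concerns variations of $s_j$, not of $b_i$.
\begin{itemize}
\item Take buyers $i,k$ and a single seller $j$. A low $b_i$ lets $j$ trade with $k$.
\item On that event, GSOM pays $\bar s^\varphi_{kj}$ at the realized $b_k$, whereas GSOM-BIC pays its conditional expectation over $b_k'$.
\item So ``zero off the event'' is wrong, and $\E_{b_i}[p^s_j]$ with everything else fixed generally differs between the two mechanisms.
\end{itemize}
The repair is to argue edge by edge. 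Apply the tower step to $p^s_j\,\mathbf 1[(i,j)\in M^\varphi]$ for each $i$, using that the GSOM-BIC payment on this event does not depend on $b_i$; no threshold structure is needed for this. Then sum over $i$, since $j$ has at most one partner in any profile, and average over the remaining agents.

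\textbf{Do not route anything through $p^s_j=r-\delta$ or $\bar s^\varphi_{ij}=\tilde\varphi_i(b_i)-\delta^\varphi_{ij}$.} The slack $\delta^\varphi_{ij}$ ignores competing matchings in which $i$ trades with a seller other than $j$. The paper's own Remark contradicts both formulas. There $\delta^\varphi_{ij}=0$, yet $\bar s^\varphi_{ij}=\min\{\tfrac12,2b-1\}$ rather than $2b-1$, and $p^s_j=\tfrac38$ rather than $r-\delta=\tfrac12$.

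In your IR-S argument this detour is simply unnecessary. For ex-post WBB, the inequality you need is still true and has a direct proof:
\begin{itemize}
\item If $j$ reports $z>\tilde\varphi_i(b_i')$, removing $(i,j)$ is feasible by downward-closedness and strictly raises virtual surplus.
\item Hence $\bar s^\varphi_{ij}(b_i')\le\tilde\varphi_i(b_i')$ on the selection event $\{b_i'\ge r\}$.
\item So $p^s_j\le\E[\tilde\varphi_i(b_i')\mid b_i'\ge r]=r=p^b_i$, by Myerson's identity for the posted price $r$.
\end{itemize}
Alternatively, invoke the pairwise inequality as the paper does.
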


\begin{proof}
\emph{DSIC-B and IR-B.}
The allocation rule of GSOM-BIC is  the GSOM allocation rule, hence it is monotone in each
buyer's bid. Each matched buyer is charged the corresponding threshold payment
$\bar b_{ij}^\varphi(\mathbf b_{-i},\mathbf s)$. Therefore, by Myerson's lemma,
GSOM-BIC is DSIC-B and IR-B.

\noindent
\emph{BIC-S and equality of interim expected payments with GSOM.}
Fix a seller $j$, a report $s_j$, and all other reports $(\mathbf b,\mathbf s_{-j})$.
If seller $j$ is never matched, the statement is trivial. Otherwise, by the virtual-surplus analogue of
Lemma~\ref{lem:buyerinauction}, there is a unique buyer $i$ such that seller $j$ can ever be matched,
and whenever she is matched it is to buyer $i$.

By definition of GSOM-BIC,
\begin{align*}
&\E_{b_i'\sim D_i}\!\left[
    p_j^s((b_i',\mathbf b_{-i}),(s_j,\mathbf s_{-j}))
    \cdot
    \mathbf 1\bigl[(i,j)\in M^\varphi((b_i',\mathbf b_{-i}),(s_j,\mathbf s_{-j}))\bigr]
\right]
\\
&\qquad =
\E_{b_i'\sim D_i}\!\left[
    \bar s_{ij}^\varphi((b_i',\mathbf b_{-i}),\mathbf s_{-j})
    \cdot
    \mathbf 1\bigl[(i,j)\in M^\varphi((b_i',\mathbf b_{-i}),(s_j,\mathbf s_{-j}))\bigr]
\right].
\end{align*}
The right-hand side is  the interim expected payment that seller $j$ receives in the original
GSOM under the same report $s_j$. Since the allocation rule is also the same, seller $j$'s interim
utility in GSOM-BIC is  the same as her interim utility in GSOM.

The original GSOM is DSIC-S, so truthful reporting maximizes that utility for every realization of the
other agents' reports. Therefore truthful reporting also maximizes the interim utility in GSOM-BIC,
which proves BIC-S. The same equality shows that the interim expected payment of seller $j$ is
identical in the two mechanisms, and hence the ex-ante total sellers' utility is also identical.

\noindent
\emph{IR-S.}
Fix a matched edge $(i,j)\in M^\varphi(\mathbf b,\mathbf s)$. By definition,
\[
p_j^s(\mathbf b,\mathbf s)
=
\E_{b_i'\sim D_i}\!\left[
\bar s_{ij}^\varphi((b_i',\mathbf b_{-i}),\mathbf s_{-j})
\,\middle|\,
(i,j)\in M^\varphi((b_i',\mathbf b_{-i}),\mathbf s)
\right].
\]
For every $b_i'$ in the conditioning event, seller $j$ wins in the original GSOM at cost $s_j$.
Hence the corresponding seller-threshold satisfies
\[
\bar s_{ij}^\varphi((b_i',\mathbf b_{-i}),\mathbf s_{-j})\ge s_j.
\]
Taking conditional expectations we have $
p_j^s(\mathbf b,\mathbf s)\ge s_j.$

\noindent
\emph{Ex-post WBB.}
Fix a matched edge $(i,j)\in M^\varphi(\mathbf b,\mathbf s)$. The pairwise inequality used in the
proof of ex-ante WBB for the original GSOM gives
\[
\E_{b_i'\sim D_i}\!\left[
\bar b_{ij}^\varphi(\mathbf b_{-i},\mathbf s)\,
\mathbf 1[(i,j)\in M^\varphi((b_i',\mathbf b_{-i}),\mathbf s)]
\right]
\ge
\E_{b_i'\sim D_i}\!\left[
\bar s_{ij}^\varphi((b_i',\mathbf b_{-i}),\mathbf s_{-j})\,
\mathbf 1[(i,j)\in M^\varphi((b_i',\mathbf b_{-i}),\mathbf s)]
\right].
\]
Since $\bar b_{ij}^\varphi(\mathbf b_{-i},\mathbf s)=p_i^b(\mathbf b,\mathbf s)$ is constant in $b_i'$,
dividing by the probability of the conditioning event gives
\[
p_i^b(\mathbf b,\mathbf s)\ge p_j^s(\mathbf b,\mathbf s).
\]
Summing over all matched edges proves
\[
\sum_i p_i^b(\mathbf b,\mathbf s)\ge \sum_j p_j^s(\mathbf b,\mathbf s). \qedhere
\]
\end{proof}

\paragraph{Remark.}
GSOM-BIC need not be ex-post strongly budget-balanced. Consider one buyer $i$, two sellers $j,k$, and feasible matchings
$\mathcal F=\{\emptyset,\{(i,j)\},\{(i,k)\}\}$. Let $B_i\sim U[0,1]$, so $\tilde\varphi_i(b)=\varphi_i(b)=2b-1$, and consider the realized profile
$b_i=1$, $s_j=0$, $s_k=\tfrac12$. Then GSOM-BIC matches $i$ to $j$. The buyer-threshold is
$\bar b_{ij}^\varphi=\tfrac12$, hence $p_i^b=\tfrac12$. In the original GSOM, for a hypothetical buyer report $b$, seller $j$'s threshold is
$\bar s_{ij}^\varphi(b,s_k)=\min\{\tfrac12,\,2b-1\}$; therefore in GSOM-BIC we pay
$p_j^s=\E[\min\{\tfrac12,2B_i-1\}\mid B_i\ge\tfrac12]=\tfrac38$.
Thus $p_i^b=\tfrac12>\tfrac38=p_j^s$, so GSOM-BIC is ex-post weakly budget-balanced, but not ex-post strongly budget-balanced.

\subsection{A BIC version of GBOM}

We now define a BIC variant of GBOM in which the sellers remain DSIC, while the buyers are only
BIC. The allocation rule is  the same as in GBOM; only the payments to the profit-receiving
side are changed.

\paragraph{Virtual-surplus matching for GBOM.}
Let $M^\psi(\mathbf b,\mathbf s)$ denote the unique matching selected by GBOM:
\[
M^\psi(\mathbf b,\mathbf s)\in
\argmax_{A\in\mathcal F}\sum_{(i,j)\in A}\bigl(b_i-\tilde\psi_j(s_j)\bigr),
\]
with ties broken according to the fixed global order.

Since each ironed virtual cost $\tilde\psi_j$ is monotone nondecreasing, the proofs of
Lemma~\ref{lem:mwm_monotonicity} and Lemma~\ref{lem:buyerinauction} apply verbatim after
replacing the weight $-s_j$ of seller $j$ by the transformed weight $-\tilde\psi_j(s_j)$.
Therefore, for every fixed $(\mathbf b_{-i},\mathbf s)$, if buyer $i$ is matched in
$M^\psi((b_i,\mathbf b_{-i}),\mathbf s)$ for some $b_i$, then as only $b_i$ varies she can only be
matched to a unique seller; symmetrically, for every fixed $(\mathbf b,\mathbf s_{-j})$, if seller
$j$ is matched in $M^\psi(\mathbf b,(s_j,\mathbf s_{-j}))$ for some $s_j$, then as only $s_j$
varies she can only be matched to a unique buyer.

Accordingly, for every matched edge $(i,j)\in M^\psi(\mathbf b,\mathbf s)$, define the threshold
payments in the original GBOM by
\begin{align}
\bar b_{ij}^\psi(\mathbf b_{-i},\mathbf s)
&:=
\inf\{z:(i,j)\in M^\psi((z,\mathbf b_{-i}),\mathbf s)\},
\label{eq:gbom-buyer-threshold}
\\[2mm]
\bar s_{ij}^\psi(\mathbf b,\mathbf s_{-j})
&:=
\sup\{z:(i,j)\in M^\psi(\mathbf b,(z,\mathbf s_{-j}))\}.
\label{eq:gbom-seller-threshold}
\end{align}
Unlike in GSOM, there is in general no useful closed form for $\bar b_{ij}^\psi$ in terms of a single
seller-side slack, because the buyer-threshold also reflects competition from other buyers.

\begin{definition}[GBOM-BIC]\label{def:bgbom}
The mechanism \textbf{GBOM-BIC} is defined as follows.
\begin{enumerate}
    \item Compute the matching $M^\psi(\mathbf b,\mathbf s)$ that maximizes
    $\sum_{(i,j)\in A}(b_i-\tilde\psi_j(s_j))$.

    \item If seller $j$ is matched to buyer $i$, pay seller $j$ her threshold payment
    \[
    p_j^s(\mathbf b,\mathbf s)=\bar s_{ij}^\psi(\mathbf b,\mathbf s_{-j}).
    \]
    If seller $j$ is unmatched, set $p_j^s(\mathbf b,\mathbf s)=0$.

    \item If buyer $i$ is matched to seller $j$, charge buyer $i$ the conditional expectation of the
    buyer-threshold payment of the original GBOM:
    \[
    p_i^b(\mathbf b,\mathbf s)
    :=
    \E_{s_j' \sim F_j}\!\left[
        \bar b_{ij}^\psi(\mathbf b_{-i},(s_j',\mathbf s_{-j}))
        \,\middle|\,
        (i,j)\in M^\psi(\mathbf b,(s_j',\mathbf s_{-j}))
    \right].
    \]
    If buyer $i$ is unmatched, set $p_i^b(\mathbf b,\mathbf s)=0$.
\end{enumerate}
\end{definition}

\begin{lemma}\label{lem:gbom-pairwise-wbb}
Fix a buyer $i$, a seller $j$, a report $b_i$, and fixed reports
$(\mathbf b_{-i},\mathbf s_{-j})$ of all other agents.
In the original GBOM,
\begin{align*}
    &\E_{s_j'\sim F_j}\!\left[
\bar b_{ij}^\psi(\mathbf b_{-i},(s_j',\mathbf s_{-j}))
\cdot
\mathbf 1[(i,j)\in M^\psi((b_i,\mathbf b_{-i}),(s_j',\mathbf s_{-j}))]
\right]\\
&\qquad \ge
\E_{s_j'\sim F_j}\!\left[
\bar s_{ij}^\psi((b_i,\mathbf b_{-i}),\mathbf s_{-j})
\cdot
\mathbf 1[(i,j)\in M^\psi((b_i,\mathbf b_{-i}),(s_j',\mathbf s_{-j}))]
\right].
\end{align*}
\end{lemma}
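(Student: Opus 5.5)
The plan is to mirror the closed-form argument used for GSOM-BIC in \cref{lem:gsom-bic-closed-form}, with the roles of the two sides exchanged. Throughout, fix $b_i$, $\mathbf b_{-i}$ and $\mathbf s_{-j}$, and write $\bar s := \bar s_{ij}^\psi((b_i,\mathbf b_{-i}),\mathbf s_{-j})$. This quantity does not depend on $s_j'$. By the virtual-cost analogue of \cref{lem:mwm_monotonicity} and the corollary to \cref{lem:buyerinauction}, the indicator on both sides equals $\mathbf 1[s_j'\le \bar s]$. The right-hand side is therefore simply $\bar s\, F_j(\bar s)$, and the lemma reduces to
\[
\E_{s_j'\sim F_j}\!\left[\bar b_{ij}^\psi(\mathbf b_{-i},(s_j',\mathbf s_{-j}))\,\mathbf 1[s_j'\le \bar s]\right]\ \ge\ \bar s\,F_j(\bar s).
\]
If $\bar s=-\infty$ both sides vanish, so assume $(i,j)$ is selected for some $s_j'$.

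The first step is a slack representation of selection. For a pair of reports $(z,t)$ of buyer $i$ and seller $j$ (others fixed), let $\Phi(z,t)$ denote the best weight, under $b_k-\tilde\psi_\ell(s_\ell)$, of a feasible matching not containing the edge $(i,j)$, minus the best weight of a feasible $A$ with $A\cup\{(i,j)\}\in\mathcal F$ that avoids $i$ and $j$. Then $(i,j)\in M^\psi$ iff $z-\tilde\psi_j(t)\ge \Phi(z,t)$ (up to tie-breaking).

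The key structural claim is that, on the region relevant here, $\Phi$ does not depend on $z$ and is nonincreasing in $\tilde\psi_j(t)$. The $t$-monotonicity is direct. The subtracted term does not involve $j$, while competing matchings that use $j$ with another buyer gain weight as $\tilde\psi_j(t)$ decreases, so $\Phi(\cdot,t)$ is weakly larger for smaller $t$.

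For the $z$-independence, I would invoke the uniqueness-of-partner statement of \cref{lem:buyerinauction}, applied to $M^\psi$. Since $(i,j)$ is selected at $(b_i,t)$ for every $t\le\bar s$, buyer $i$ can only ever be matched to $j$ as $z$ varies. This should let me argue that competing matchings which match $i$ to another seller are never the binding competitor at the threshold. Hence, at $z=\bar b_{ij}^\psi(s_j')$, the value $\Phi$ coincides with a quantity $\Phi(s_j')$ that depends only on $t=s_j'$.

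Granting this, two equalities follow. The buyer threshold satisfies $\bar b_{ij}^\psi(s_j') = \tilde\psi_j(s_j')+\Phi(s_j')$. The seller threshold satisfies $\tilde\psi_j(\bar s)=b_i-\Phi(\bar s)$, with the usual inequality convention on an ironing interval. For $s_j'\le\bar s$, monotonicity in $t$ gives $\Phi(s_j')\ge\Phi(\bar s)$, and therefore
\[
\bar b_{ij}^\psi(s_j')\ \ge\ \tilde\psi_j(s_j')+b_i-\tilde\psi_j(\bar s)\ \ge\ \tilde\psi_j(s_j'),
\]
where the last step uses $b_i\ge \tilde\psi_j(\bar s)$, i.e.\ $\Phi(\bar s)\ge0$.

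The final step is the seller-side Myerson identity: for the posted-price rule ``trade iff $s_j'\le \bar s$'',
\[
\E\!\left[\tilde\psi_j(s_j')\mathbf 1[s_j'\le\bar s]\right]=\bar s\,F_j(\bar s).
\]
Integrating the pointwise bound $\bar b_{ij}^\psi(s_j')\ge\tilde\psi_j(s_j')$ against $\mathbf 1[s_j'\le\bar s]$ then yields the reduced inequality above.

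I expect the main obstacle to be the $z$-independence of $\Phi$. Unlike the GSOM case, the buyer threshold in GBOM reflects competition from other buyers for seller $j$ and possibly from other sellers for buyer $i$; the paper itself notes that no single-slack closed form exists. My first attempt would be to avoid a full closed form and prove only the one-sided inequality $\bar b_{ij}^\psi(s_j')-\tilde\psi_j(s_j')\ge b_i-\tilde\psi_j(\bar s)$. I would do this by an exchange argument comparing the optimal matchings at the two profiles $(\bar b_{ij}^\psi(s_j'),s_j')$ and $(b_i,\bar s)$, using only \cref{lem:mwm_monotonicity}. A secondary technical point is the Myerson identity when $\bar s$ lies at or inside an ironing interval of $\tilde\psi_j$. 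There I would use that $\tilde\psi_j$ is constant on such intervals and that the threshold $\bar s$ can be taken at an interval endpoint without changing the selected matching almost surely.
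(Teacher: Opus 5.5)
Your reduction is the right one. Both indicators equal $\mathbf 1[s_j'\le\bar s]$, so it suffices to prove the pointwise bound $\bar b^\psi_{ij}(\mathbf b_{-i},(s_j',\mathbf s_{-j}))\ge\tilde\psi_j(s_j')$ for $s_j'\le\bar s$ and then apply the seller-side Myerson identity. Your ironing remark is essential rather than secondary. For a threshold strictly inside an ironing interval one only has $\E[\tilde\psi_j(s_j')\mathbf 1[s_j'\le\bar s]]\le\bar s\,F_j(\bar s)$, which is the wrong direction. What rules this out is exactly that GBOM sees $s_j'$ only through $\tilde\psi_j(s_j')$.

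The problem is your route to the pointwise bound. The $z$-independence of $\Phi$ is fine at the buyer threshold, because competitors that match $i$ to another seller gain weight in lockstep with $z$. At the seller threshold, however, $z=b_i$ is fixed and only $t$ moves, and exactly those competitors can bind. So the granted equality $\tilde\psi_j(\bar s)=b_i-\Phi(\bar s)$ is false. So is the one-sided inequality $\bar b^\psi_{ij}(s_j')-\tilde\psi_j(s_j')\ge b_i-\tilde\psi_j(\bar s)$ that you plan to prove by an exchange argument.

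Concretely, take one buyer $i$, sellers $j,k$, $\mathcal F=\{\varnothing,\{(i,j)\},\{(i,k)\}\}$, $F_j,F_k$ uniform on $[0,1]$ (so $\tilde\psi(s)=2s$), $s_k=\tfrac14$ and $b_i=1$. Then $(i,j)$ is selected iff $s_j'\le\tfrac14$ (up to tie-breaking), so $\bar s=\tfrac14$ and $b_i-\tilde\psi_j(\bar s)=\tfrac12$. Yet for every $s_j'<\tfrac14$ the buyer threshold is $\bar b^\psi_{ij}(s_j')=2s_j'=\tilde\psi_j(s_j')$, so the left side is $0$. Your $\Phi$, read off at the buyer threshold, is $0$ here, whereas the seller threshold is pinned by the competitor $\{(i,k)\}$.

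The fix is that the bound you actually need requires none of this. Suppose $(i,j)\in M^\psi((z,\mathbf b_{-i}),(s_j',\mathbf s_{-j}))$. Then $M^\psi\setminus\{(i,j)\}\in\mathcal F$ by downward-closedness, and its weight is smaller by exactly $z-\tilde\psi_j(s_j')$, so optimality forces $z\ge\tilde\psi_j(s_j')$. Taking the infimum over such $z$ gives $\bar b^\psi_{ij}(s_j')\ge\tilde\psi_j(s_j')$; the set is nonempty when $s_j'\le\bar s$, since $z=b_i$ qualifies. In your notation this is simply $\Phi(s_j')\ge 0$, because any $A$ avoiding $i$ and $j$ with $A\cup\{(i,j)\}\in\mathcal F$ is itself a feasible competitor. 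No comparison with $\bar s$ is needed.

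With that replacement your proof is complete. It is the standard ex-ante WBB argument for GBOM: the buyer's threshold dominates the seller's ironed virtual cost pointwise, and the seller's threshold payment equals that virtual cost in expectation. The paper does not reprove the lemma; it cites this pairwise inequality from \cite{BCWZ17}.
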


\begin{proof}
This is the pairwise inequality used in the GBOM half of the proof of ex-ante WBB in
\cite[Lemma~5.6]{BCWZ17}, after fixing $i,j,\mathbf b_{-i},\mathbf s_{-j}$ and varying only $s_j'$.
\end{proof}

\begin{theorem}\label{thm:gbom-bic}
GBOM-BIC is DSIC-S, BIC-B, IR-B, IR-S, and ex-post WBB.
Moreover, for every buyer $i$ and every report $b_i$, the interim expected payment of buyer $i$
in GBOM-BIC is  the same as in the original GBOM.
Consequently, GBOM-BIC and GBOM induce the same ex-ante total buyers' utility.
\end{theorem}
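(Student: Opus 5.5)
The plan mirrors the proof of \cref{thm:gsom-bic}, with the roles of buyers and sellers exchanged. Throughout, fix a buyer $i$ and the reports $(\mathbf b_{-i},\mathbf s)$ of all other agents.

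\emph{DSIC-S and IR-S.} The allocation rule of GBOM-BIC is the GBOM allocation rule. Each $\tilde\psi_j$ is nondecreasing, so the virtual-surplus analogue of \cref{lem:mwm_monotonicity} shows that seller $j$'s allocation is monotone non-increasing in $s_j$. A matched seller is paid exactly her threshold $\bar s^\psi_{ij}(\mathbf b,\mathbf s_{-j})$. By \cref{thm:myerson} and \cref{def:threshold_payment}, the mechanism is therefore DSIC-S and IR-S.

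\emph{BIC-B and equal interim payments.} By the virtual-surplus analogue of \cref{lem:buyerinauction}, there is a unique seller $j$ with whom buyer $i$ can ever be matched as $b_i$ varies; if no such seller exists, the claim is trivial. Now fix buyer $i$'s report $b_i$ and take expectation over $s_j'\sim F_j$, holding $\mathbf s_{-j}$ fixed. On the event $(i,j)\in M^\psi(\mathbf b,(s_j',\mathbf s_{-j}))$, the GBOM-BIC payment is by definition the conditional expectation of $\bar b^\psi_{ij}(\mathbf b_{-i},(s_j',\mathbf s_{-j}))$ over that same event. The key observation is that this conditioning event depends on $s_j'$ only through whether $s_j'\le \bar s^\psi_{ij}(\mathbf b,\mathbf s_{-j})$, and its $b_i$-dependence is exactly the event in the original definition. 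Hence
\[
\E_{s_j'}\bigl[p^b_i\,\mathbf 1[\cdot]\bigr]=\E_{s_j'}\bigl[\bar b^\psi_{ij}\,\mathbf 1[\cdot]\bigr].
\]
The right-hand side is the payment of the original GBOM averaged over $s_j'$. Averaging further over $\mathbf b_{-i}$ and $\mathbf s_{-j}$, buyer $i$'s interim payment and interim allocation coincide with those of GBOM for every report $b_i$. Since GBOM is DSIC-B (\cref{thm:feasibility}), truthful reporting maximizes interim utility, which gives BIC-B. Summing the equal interim utilities over buyers yields equality of the ex-ante total buyers' utility.

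\emph{IR-B.} For every $s_j'$ in the conditioning event, buyer $i$ wins in GBOM with report $b_i$, so $\bar b^\psi_{ij}(\mathbf b_{-i},(s_j',\mathbf s_{-j}))\le b_i$. Taking the conditional expectation gives $p^b_i\le b_i$.

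\emph{Ex-post WBB.} For a matched edge, \cref{lem:gbom-pairwise-wbb} applies at the realized $b_i$. In that inequality the seller threshold $\bar s^\psi_{ij}(\mathbf b,\mathbf s_{-j})$ does not depend on $s_j'$, so dividing both sides by $\Pr[(i,j)\in M^\psi]$ gives $p^b_i\ge p^s_j$. Summing over the edges of the matching completes the argument.

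The main obstacle is the bookkeeping in the BIC-B step. One must check that the conditioning event and the weighting by the win-indicator are the same object, so that the conditional expectation recovers exactly the unconditional interim payment. One must also check that the uniqueness of the potential partner $j$ holds uniformly, i.e., that $j$ does not change with $b_i$; this is what lets the single $F_j$-average stand in for buyer $i$'s full interim expectation.
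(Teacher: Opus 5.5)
Your DSIC-S/IR-S, IR-B and ex-post WBB arguments match the paper's. The BIC-B step, however, has a genuine gap. \cref{lem:buyerinauction}, in its virtual-surplus form, says that for \emph{fixed} $(\mathbf b_{-i},\mathbf s)$ buyer $i$ has a unique potential partner as $b_i$ varies. Buyer $i$'s interim payment at a fixed report $b_i$ integrates over $\mathbf s$ (and $\mathbf b_{-i}$), and along that integration her partner does change. Even when you vary only $s_j'$ with $\mathbf s_{-j}$ fixed, once $(i,j)$ drops out of $M^\psi(\mathbf b,(s_j',\mathbf s_{-j}))$ for large $s_j'$, buyer $i$ can be matched to another seller $k$. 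Her GBOM-BIC payment is then a conditional average over $s_k'\sim F_k$, which your $F_j$-average never captures. Your ``unique'' $j$ is itself a function of $\mathbf s$, including $s_j$, so it cannot be held fixed while you average over $s_j'$ and $\mathbf s_{-j}$. So the claim that a single $F_j$-average ``stands in for buyer $i$'s full interim expectation'' is false in general. The uniformity you flag as the main obstacle fails, not because $j$ changes with $b_i$, but because it changes with $\mathbf s$.

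The repair, which is what the paper does, needs no uniqueness at this step. Take \emph{every} seller $j$ and every $(\mathbf b,\mathbf s_{-j})$. On the event $(i,j)\in M^\psi(\mathbf b,(s_j',\mathbf s_{-j}))$, buyer $i$'s GBOM-BIC payment depends only on $(\mathbf b,\mathbf s_{-j})$, not on $s_j'$. It equals the conditional mean of $\bar b^\psi_{ij}(\mathbf b_{-i},(s_j',\mathbf s_{-j}))$ over that event, so
\[
\E_{s_j'}\bigl[p^b_i\,\mathbf 1[(i,j)\in M^\psi]\bigr]=\E_{s_j'}\bigl[\bar b^\psi_{ij}\,\mathbf 1[(i,j)\in M^\psi]\bigr].
\]
Now take expectation over $(\mathbf b_{-i},\mathbf s_{-j})$ and \emph{sum over all sellers $j$}. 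Buyer $i$ is matched to at most one seller in every profile. Hence the left side sums to her full interim GBOM-BIC payment and the right side to her full interim GBOM payment. Uniqueness of the partner as $b_i$ varies is needed only to identify GBOM's threshold payment for buyer $i$ with $\bar b^\psi_{ij}$ on the event $(i,j)\in M^\psi$. With this change, the rest of your argument goes through.
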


\begin{proof}
\emph{DSIC-S and IR-S.}
The allocation rule of GBOM-BIC is the GBOM allocation rule, hence it is monotone in each
seller's reported cost. Each matched seller is paid the corresponding threshold payment
$\bar s_{ij}^\psi(\mathbf b,\mathbf s_{-j})$. Therefore, by Myerson's lemma,
GBOM-BIC is DSIC-S and IR-S.

\noindent
\emph{BIC-B and equality of interim expected payments with GBOM.}
Fix a buyer $i$ and a report $b_i$.
For every seller $j$ and every fixed reports $(\mathbf b_{-i},\mathbf s_{-j})$ of all other agents,
by definition of GBOM-BIC,
\begin{align*}
&\E_{s_j'\sim F_j}\!\left[
p_i^b((b_i,\mathbf b_{-i}),(s_j',\mathbf s_{-j}))
\cdot
\mathbf 1[(i,j)\in M^\psi((b_i,\mathbf b_{-i}),(s_j',\mathbf s_{-j}))]
\right]
\\
&\qquad =
\E_{s_j'\sim F_j}\!\left[
\bar b_{ij}^\psi(\mathbf b_{-i},(s_j',\mathbf s_{-j}))
\cdot
\mathbf 1[(i,j)\in M^\psi((b_i,\mathbf b_{-i}),(s_j',\mathbf s_{-j}))]
\right].
\end{align*}
Now take expectation over $\mathbf s_{-j}$ and sum over all sellers $j$.
Since buyer $i$ can be matched to at most one seller in any profile, the left-hand side is 
the interim expected payment of buyer $i$ in GBOM-BIC, and the right-hand side is  the
interim expected payment of buyer $i$ in the original GBOM.

Since the allocation rule is also the same, buyer $i$'s interim utility in GBOM-BIC is  the
same as her interim utility in GBOM. The original GBOM is DSIC-B, so truthful reporting maximizes
that utility for every realization of the other agents' reports. Therefore truthful reporting also
maximizes the interim utility in GBOM-BIC, which proves BIC-B. The equality above also implies that
the ex-ante total buyers' utility is the same in the two mechanisms.

\noindent
\emph{IR-B.}
Fix a matched edge $(i,j)\in M^\psi(\mathbf b,\mathbf s)$. By definition,
\[
p_i^b(\mathbf b,\mathbf s)
=
\E_{s_j'\sim F_j}\!\left[
\bar b_{ij}^\psi(\mathbf b_{-i},(s_j',\mathbf s_{-j}))
\,\middle|\,
(i,j)\in M^\psi(\mathbf b,(s_j',\mathbf s_{-j}))
\right].
\]
For every $s_j'$ in the conditioning event, buyer $i$ wins in the original GBOM with bid $b_i$.
Hence the corresponding threshold satisfies
\[
\bar b_{ij}^\psi(\mathbf b_{-i},(s_j',\mathbf s_{-j}))\le b_i.
\]
Taking conditional expectations yields
\[
p_i^b(\mathbf b,\mathbf s)\le b_i.
\]
Thus buyer $i$'s utility is nonnegative ex post.

\noindent
\emph{Ex-post WBB.}
Fix a matched edge $(i,j)\in M^\psi(\mathbf b,\mathbf s)$.
By Lemma~\ref{lem:gbom-pairwise-wbb},
\[
\E_{s_j'\sim F_j}\!\left[
\bar b_{ij}^\psi(\mathbf b_{-i},(s_j',\mathbf s_{-j}))
\cdot
\mathbf 1[(i,j)\in M^\psi(\mathbf b,(s_j',\mathbf s_{-j}))]
\right]
\ge
\E_{s_j'\sim F_j}\!\left[
\bar s_{ij}^\psi(\mathbf b,\mathbf s_{-j})
\cdot
\mathbf 1[(i,j)\in M^\psi(\mathbf b,(s_j',\mathbf s_{-j}))]
\right].
\]
Since $\bar s_{ij}^\psi(\mathbf b,\mathbf s_{-j})=p_j^s(\mathbf b,\mathbf s)$ is constant in $s_j'$,
dividing by the probability of the conditioning event gives
\[
p_i^b(\mathbf b,\mathbf s)\ge p_j^s(\mathbf b,\mathbf s).
\]
Summing over all matched edges proves ex-post WBB:
\[
\sum_i p_i^b(\mathbf b,\mathbf s)\ge \sum_j p_j^s(\mathbf b,\mathbf s).\qedhere
\]
\end{proof}

\end{document}